\newwrite\@unused
\def\typeout#1{{\let\protect\string\immediate\write\@unused{#1}}}
\def\figurepath{./}
\def\@nnil{\@nil}
\def\@empty{}
\def\@psdonoop#1\@@#2#3{}
\def\@psdo#1:=#2\do#3{\edef\@psdotmp{#2}\ifx\@psdotmp\@empty \else
    \expandafter\@psdoloop#2,\@nil,\@nil\@@#1{#3}\fi}
\def\@psdoloop#1,#2,#3\@@#4#5{\def#4{#1}\ifx #4\@nnil \else
       #5\def#4{#2}\ifx #4\@nnil \else#5\@ipsdoloop #3\@@#4{#5}\fi\fi}
\def\@ipsdoloop#1,#2\@@#3#4{\def#3{#1}\ifx #3\@nnil 
       \let\@nextwhile=\@psdonoop \else
      #4\relax\let\@nextwhile=\@ipsdoloop\fi\@nextwhile#2\@@#3{#4}}
\def\@tpsdo#1:=#2\do#3{\xdef\@psdotmp{#2}\ifx\@psdotmp\@empty \else
    \@tpsdoloop#2\@nil\@nil\@@#1{#3}\fi}
\def\@tpsdoloop#1#2\@@#3#4{\def#3{#1}\ifx #3\@nnil 
       \let\@nextwhile=\@psdonoop \else
      #4\relax\let\@nextwhile=\@tpsdoloop\fi\@nextwhile#2\@@#3{#4}}
\def\psdraft{
        \def\@psdraft{0}
}
\def\psfull{
        \def\@psdraft{100}
}
\newif\if@prologfile
\newif\if@postlogfile
\newif\if@noisy
\def\pssilent{
        \@noisyfalse
}
\def\psnoisy{
        \@noisytrue
}
\newif\if@bbllx
\newif\if@bblly
\newif\if@bburx
\newif\if@bbury
\newif\if@height
\newif\if@width
\newif\if@rheight
\newif\if@rwidth
\newif\if@clip
\newif\if@verbose
\def\@p@@sclip#1{\@cliptrue}
\def\@p@@sfile#1{\def\@p@sfile{null}%
                \openin1=#1
                \ifeof1\closein1%
                       \openin1=\figurepath#1
                        \ifeof1\typeout{Error, File #1 not found}
                        \else\closein1
                            \edef\@p@sfile{\figurepath#1}%
                        \fi%
                 \else\closein1%
                       \def\@p@sfile{#1}%
                 \fi}
\def\@p@@sfigure#1{\def\@p@sfile{null}%
                \openin1=#1
                \ifeof1\closein1%
                       \openin1=\figurepath#1
                        \ifeof1\typeout{Error, File #1 not found}
                        \else\closein1
                            \def\@p@sfile{\figurepath#1}%
                        \fi%
                 \else\closein1%
                       \def\@p@sfile{#1}%
                 \fi}
\def\@p@@sbbllx#1{
                \@bbllxtrue
                \dimen100=#1
                \edef\@p@sbbllx{\number\dimen100}
}
\def\@p@@sbblly#1{
                \@bbllytrue
                \dimen100=#1
                \edef\@p@sbblly{\number\dimen100}
}
\def\@p@@sbburx#1{
                \@bburxtrue
                \dimen100=#1
                \edef\@p@sbburx{\number\dimen100}
}
\def\@p@@sbbury#1{
                \@bburytrue
                \dimen100=#1
                \edef\@p@sbbury{\number\dimen100}
}
\def\@p@@sheight#1{
                \@heighttrue
                \dimen100=#1
                \edef\@p@sheight{\number\dimen100}
}
\def\@p@@swidth#1{
                \@widthtrue
                \dimen100=#1
                \edef\@p@swidth{\number\dimen100}
}
\def\@p@@srheight#1{
                \@rheighttrue
                \dimen100=#1
                \edef\@p@srheight{\number\dimen100}
}
\def\@p@@srwidth#1{
                \@rwidthtrue
                \dimen100=#1
                \edef\@p@srwidth{\number\dimen100}
}
\def\@p@@ssilent#1{ 
                \@verbosefalse
}
\def\@p@@sprolog#1{\@prologfiletrue\def\@prologfileval{#1}}
\def\@p@@spostlog#1{\@postlogfiletrue\def\@postlogfileval{#1}}
\def\@cs@name#1{\csname #1\endcsname}
\def\@setparms#1=#2,{\@cs@name{@p@@s#1}{#2}}
\def\ps@init@parms{
                \@bbllxfalse \@bbllyfalse
                \@bburxfalse \@bburyfalse
                \@heightfalse \@widthfalse
                \@rheightfalse \@rwidthfalse
                \def\@p@sbbllx{}\def\@p@sbblly{}
                \def\@p@sbburx{}\def\@p@sbbury{}
                \def\@p@sheight{}\def\@p@swidth{}
                \def\@p@srheight{}\def\@p@srwidth{}
                \def\@p@sfile{}
                \def\@p@scost{10}
                \def\@sc{}
                \@prologfilefalse
                \@postlogfilefalse
                \@clipfalse
                \if@noisy
                        \@verbosetrue
                \else
                        \@verbosefalse
                \fi
}
\def\parse@ps@parms#1{
                \@psdo\@psfiga:=#1\do
                   {\expandafter\@setparms\@psfiga,}}
\newif\ifno@bb
\newif\ifnot@eof
\newread\ps@stream
\def\bb@missing{
        \if@verbose{
                \typeout{psfig: searching \@p@sfile \space  for bounding box}
        }\fi
        \openin\ps@stream=\@p@sfile
        \no@bbtrue
        \not@eoftrue
        \catcode`\%=12
        \loop
                \read\ps@stream to \line@in
                \global\toks200=\expandafter{\line@in}
                \ifeof\ps@stream \not@eoffalse \fi
                \@bbtest{\toks200}
                \if@bbmatch\not@eoffalse\expandafter\bb@cull\the\toks200\fi
        \ifnot@eof \repeat
        \catcode`\%=14
}       
\newif\if@bbmatch
\def\@bbtest#1{\expandafter\@a@\the#1
\long\def\@a@#1
\long\def\bb@cull#1 #2 #3 #4 #5 {
        \dimen100=#2 bp\edef\@p@sbbllx{\number\dimen100}
        \dimen100=#3 bp\edef\@p@sbblly{\number\dimen100}
        \dimen100=#4 bp\edef\@p@sbburx{\number\dimen100}
        \dimen100=#5 bp\edef\@p@sbbury{\number\dimen100}
        \no@bbfalse
}
\catcode`\%=14
\def\compute@bb{
                \no@bbfalse
                \if@bbllx \else \no@bbtrue \fi
                \if@bblly \else \no@bbtrue \fi
                \if@bburx \else \no@bbtrue \fi
                \if@bbury \else \no@bbtrue \fi
                \ifno@bb \bb@missing \fi
                \ifno@bb \typeout{FATAL ERROR: no bb supplied or found}
                        \no-bb-error
                \fi
                \count203=\@p@sbburx
                \count204=\@p@sbbury
                \advance\count203 by -\@p@sbbllx
                \advance\count204 by -\@p@sbblly
                \edef\@bbw{\number\count203}
                \edef\@bbh{\number\count204}
}
%
%
\def\in@hundreds#1#2#3{\count240=#2 \count241=#3
                     \count100=\count240        
                     \divide\count100 by \count241
                     \count101=\count100
                     \multiply\count101 by \count241
                     \advance\count240 by -\count101
                     \multiply\count240 by 10
                     \count101=\count240        
                     \divide\count101 by \count241
                     \count102=\count101
                     \multiply\count102 by \count241
                     \advance\count240 by -\count102
                     \multiply\count240 by 10
                     \count102=\count240        
                     \divide\count102 by \count241
                     \count200=#1\count205=0
                     \count201=\count200
                        \multiply\count201 by \count100
                        \advance\count205 by \count201
                     \count201=\count200
                        \divide\count201 by 10
                        \multiply\count201 by \count101
                        \advance\count205 by \count201
                     \count201=\count200
                        \divide\count201 by 100
                        \multiply\count201 by \count102
                        \advance\count205 by \count201
                     \edef\@result{\number\count205}
}
\def\compute@wfromh{
                \in@hundreds{\@p@sheight}{\@bbw}{\@bbh}
                \edef\@p@swidth{\@result}
}
\def\compute@hfromw{
                \in@hundreds{\@p@swidth}{\@bbh}{\@bbw}
                \edef\@p@sheight{\@result}
}
\def\compute@handw{
                \if@height 
                        \if@width
                        \else
                                \compute@wfromh
                        \fi
                \else 
                        \if@width
                                \compute@hfromw
                        \else
                                \edef\@p@sheight{\@bbh}
                                \edef\@p@swidth{\@bbw}
                        \fi
                \fi
}
\def\compute@resv{
                \if@rheight \else \edef\@p@srheight{\@p@sheight} \fi
                \if@rwidth \else \edef\@p@srwidth{\@p@swidth} \fi
}
%
\def\compute@sizes{
        \compute@bb
        \compute@handw
        \compute@resv
}
%
%
\def\psfig#1{\vbox {
        %
        \ps@init@parms
        \parse@ps@parms{#1}
        \compute@sizes
        \ifnum\@p@scost<\@psdraft{
                \if@verbose{
                        \typeout{psfig: including \@p@sfile \space }
                }\fi
                \special{ps::[begin]    \@p@swidth \space \@p@sheight \space
                                \@p@sbbllx \space \@p@sbblly \space
                                \@p@sbburx \space \@p@sbbury \space
                                startTexFig \space }
                \if@clip{
                        \if@verbose{
                                \typeout{(clip)}
                        }\fi
                        \special{ps:: doclip \space }
                }\fi
                \if@prologfile
                    \special{ps: plotfile \@prologfileval \space } \fi
                \special{ps: plotfile \@p@sfile \space }
                \if@postlogfile
                    \special{ps: plotfile \@postlogfileval \space } \fi
                \special{ps::[end] endTexFig \space }
                \vbox to \@p@srheight true sp{
                        \hbox to \@p@srwidth true sp{
                                \hss
                        }
                \vss
                }
        }\else{
                \vbox to \@p@srheight true sp{
                \vss
                        \hbox to \@p@srwidth true sp{
                                \hss
                                \if@verbose{
                                        \@p@sfile
                                }\fi
                                \hss
                        }
                \vss
                }
        }\fi
}}
\def\psglobal{\typeout{psfig: PSGLOBAL is OBSOLETE; use psprint -m instead}}
\catcode`\@=12\relax

\newtheorem{proposition}{Proposition}
\newtheorem{remark}{Remark}
\newtheorem{corollary}[proposition]{Corollary}
\newtheorem{theorem}[proposition]{Theorem}
\newtheorem{lemma}[proposition]{Lemma}
\newtheorem{maintheorem}{Main Theorem}

\usepackage[margin=1in,dvips]{geometry}

\newcommand{\nabb}{{\nabla} \mkern-13mu /\,}
\newcommand{\lapp}{{\Delta} \mkern-13mu /\,}

\title{Improved decay for solutions to the linear wave equation on a Schwarzschild black hole}
\author{Jonathan Luk}
\date{\today}

\begin{document}
\maketitle

\begin{abstract}
We prove that sufficiently regular solutions to the wave equation $\Box_g\phi=0$ on the exterior of the Schwarzschild black hole obey the estimates $|\phi|\leq C_\delta v_+^{-\frac{3}{2}+\delta}$ and $|\partial_t\phi|\leq C_{\delta} v_+^{-2+\delta}$ on a compact region of $r$ and along the event horizon. This is proved with the help of a new vector field commutator that is analogous to the scaling vector field on Minkowski spacetime. This result improves the known decay rates in the region of finite $r$ and along the event horizon.
\end{abstract}

\tableofcontents

\section{Introduction}

A major open problem in general relativity is that of the nonlinear stability of Kerr spacetimes. These spacetimes are stationary axisymmetric asymptotically flat black hole solutions to the vacuum Einstein equations
$$R_{\mu\nu}=0$$
in $3+1$ dimensions. They are parametrized by two parameters $\left(M, a\right)$, representing respectively the mass and the angular momentum of a black hole. It is conjectured that Kerr spacetimes are stable. In the framework of the initial value problem, the stability of Kerr would mean that for any solution to the vacuum Einstein equations with initial data close to the initial data of a Kerr spacetime, its maximal Cauchy development has an exterior region that approaches a nearby, but possibly different, Kerr spacetime.

Kerr spacetimes have a one-parameter subfamily of spacetimes known as Schwarzschild spacetimes for which $a=0$. The Schwarzschild metric in the so-called exterior region can be expressed as $$g=-\left(1-\frac{2M}{r}\right)dt^2+\left(1-\frac{2M}{r}\right)^{-1}dr^2+r^2d\sigma_{\mathbb S^2}, $$
where $d\sigma_{\mathbb S^2}$ denotes the standard metric on the unit sphere. In view of the nonlinear problem, it is conjectured that a spacetime that is close to Schwarzschild initially will approach a Kerr spacetime that is also close to Schwarzschild, i.e., $a \ll M$. In other words, we can consider the stability of Schwarzschild spacetimes within Kerr spacetimes. (Notice that the Schwarzschild family itself is not asymptotically stable since a Kerr spacetime with small $a$ can be considered as a small perturbation of a Schwarzschild spacetime.)

To tackle the nonlinear stability of Schwarzschild spacetimes within the Kerr family, it is important to first understand the linear waves
$$\Box_g \phi=0$$
on the exterior region of Schwarzschild spacetimes. This can be compared with the nonlinear stability of Minkowski spacetime whose proof requires a robust understanding of the quantitative decay of the solutions to the linear wave equation \cite{CK}, \cite{LR}.

The pointwise decay of the solutions to the linear wave equation on Schwarzschild background is proved in \cite{DRS}, \cite{BSt}. In particular, Dafermos-Rodnianski proved a decay rate of $|\phi|\leq C\left(\max\{1, v\}\right)^{-1}$ everywhere in the exterior region, including along the event horizon \cite{DRS}. The subject of this paper is to improve this decay rate. In particular, we will prove that for arbitrarily small $\delta >0$, $|\phi|\leq C_{\delta,R}\left(\max\{1, v\}\right)^{-\frac{3}{2}+\delta}$ in the region $\{r\leq R\}$ for any $R >2M$, including along the event horizon.

Our proof applies a new vector field commutator $S$ that is analogous to the scaling vector field in Minkowski spacetime. We will show that for solutions to $\Box_g\phi=0$, $\Box_g\left(S\phi\right)$ decays sufficiently towards spatial infinity and only grows mildly towards event horizon. We then prove energy estimates for $S\phi$ with the help of (a slightly modified version of) the energy estimates of $\phi$ in \cite{DRS}. This will enable us to prove the decay of $S\phi$. With this decay, we follow Klainerman-Sideris \cite{KS} to improve the decay rate for $\partial_t\phi$. We also introduce a novel method to improve the decay rates for $\phi$ and its spatial derivatives.

We hope that this improved decay will be relevant for nonlinear problems. We recall for example the wave map equation from $\mathbb R^{3,1}$ to $\mathbb S^2$ given by:
$$\Box_m\phi=\phi\left(\left(\partial_t\phi\right)^2-|\nabla\phi|^2\right).$$
To prove the global existence for small data for this equation, it is insufficient to have $|\partial\phi|\leq C\left(1+|t|\right)^{-1}$. One needs an improved decay $|\partial\phi|\leq C\left(1+|t+r|\right)^{-1}\left(1+|t-r|\right)^{-\delta}$. Moreover, one needs the nonlinearity to satisfy the so-called null condition (see \cite{Knull}). In a future work, we will use the improved decay rate we prove in this paper and study the global well-posedness of small data for a nonlinear wave equation satisfying a null condition on a fixed Schwarzschild background.

In Section 1.1 and 1.2 we will introduce the Schwarzschild spacetime and the class of solutions that we consider. This will introduce the terminologies necessary to state the main theorem in Section 1.3. We will motivate our proof with a comparison with the linear waves on Minkowski spacetime (Section 1.4). We then mention some known results on linear waves on Schwarzschild spacetime (Section 1.5). We especially discuss the work \cite{DRS} whose techniques are important for this paper. We will then provide some heuristics for our proof of the main theorem in the final subsection of the introduction (Section 1.6).

\subsection{Schwarzschild Spacetime}
Schwarzschild spacetime is the spherically symmetric asymptotically flat solution to the vacuum Einstein equations. The Schwarzschild metric in the exterior region is
$$g=-\left(1-\frac{2M}{r}\right)dt^2+\left(1-\frac{2M}{r}\right)^{-1}dr^2+r^2d\sigma_{\mathbb S^2}, $$
where $d\sigma_{\mathbb S^2}$ denotes the standard metric on the unit sphere. It is easy to observe from the metric that the vector field $\partial_t$ is Killing and it is orthogonal to the hypersurfaces $t=\mbox{constant}$. Spacetimes with this property are called static. It is also manifestly spherically symmetry and therefore has a basis of Killing vector fields $\Omega_i$ generating the symmetry. Moreover, Schwarzschild spacetimes are asymptotically flat. This means that the metric approaches the flat metric as we go to spatial infinity ($r\to\infty$).

Synge \cite{Synge} and Kruskal \cite{Kruskal} showed that the Schwarzschild metric can be extended past $r=2M$ as a solution to the vacuum Einstein equations. Its maximal development is usually described by a Penrose diagram, which depicts a conformal compactification of the 4 dimensional manifold quotiented out by spherical symmetry (Figure 1). In this diagram, the coordinate system $\left(t,r>2M, \omega\in\mathbb S^2\right)$ with the metric described above represents the region I, which we will call from now on the exterior region. In the nonlinear stability problem, it is this region that is conjectured to be stable. Extended beyond $r=2M$, the Schwarzschild spacetime contains a black hole (region II in the diagram). Physically, an observer outside the black hole region cannot receive signals emitted inside the black hole. The null hypersurface $r=2M$ separating the exterior region I and the black hole is known as the event horizon $\mathcal H^+$.

\begin{figure}[htbp]
\begin{center}
 
\input{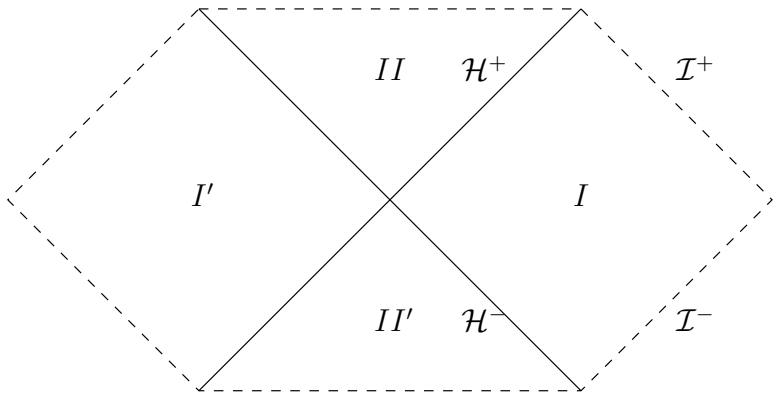}
 
\caption{Schwarzschild spacetime}
\end{center}
\end{figure}

We return to the discussion of the exterior region of the Schwarzschild black hole. For notational convenience, we let 
$$\mu = \frac{2M}{r}.$$
We denote as $r^*$ the Regge-Wheeler tortoise coordinate 
$$r^*=r+2M\log\left(r-2M\right)-3M-2M\log M.$$
In these coordinates, the Schwarzschild metric in the exterior region is given by
$$g=-\left(1-\mu\right)dt^2+\left(1-\mu\right)dr^{*2}+r^2d\sigma_{\mathbb S^2}. $$
Notice that in the above equation we have used both $r^*$ and $r$. Here, and below, we think of $r^*$ as the coordinate and $r$ as a function on $\mathcal{Q}$, with $r\left(q\right)=\sqrt{\frac{Area\left(q\right)}{4\pi}}$, i.e. the physical radius of the 2-sphere under which the metric is symmetric. The coordinate $r^*$ is $+\infty$ at spatial and null infinity; $-\infty$ at the event horizon and $0$ at $r=3M$. The set $\{r=3M\}$ is known as the photon sphere. On this set trapping occurs: there exist null geodesics that lie in this set. In particular, these geodesics neither cross the event horizon nor approach null infinity. This suggests, via geometrical optics considerations, that one has to lose derivatives while proving energy estimates. We will return to this point when we discuss the vector field $X$.\\
We notice that as in the coordinates $\left(t, r, \omega\right)$, $\partial_t$ and $\Omega$ are Killing in the $\left(t, r^*, \omega\right)$ coordinates.\\
We also define the retarded and advanced Eddington-Finkelstein coordinates $u$ and $v$ by
$$t=v+u, r^*=v-u.$$
At the event horizon $\mathcal H^+$, $u=+\infty$. At future null infinity $\mathcal I^+$, $v=+\infty$. Notice that in these coordinates, the metric is given by
$$-4\left(1- \mu \right)dudv+r^2d\sigma_{\mathbb S^2}.$$
In particular, this shows that $\partial_u$, $\partial_v$ are null.

\subsection{Wave Equation and Class of Solutions}
We would like to study the solutions to $\Box_g \phi=0$ in the exterior region of Schwarzschild. Written in local coordinates,
$$\Box_g \phi = -\left(1-\mu \right)^{-1}\partial_t^2\phi+\left(1-\mu \right)^{-1}r^{-2}\partial_{r^*}\left(r^2\partial_{r^*}\phi\right)+\lapp\phi,$$
where $\lapp$ denotes the Laplace-Beltrami operator on the standard 2-sphere with radius $r$.\\
We notice that $\Box_g$ commutes with Killing vector fields. In particular, $\Box_g \phi=0$ implies $\Box_g \partial_t \phi=0$ and $\Box_g \Omega \phi = 0$.\\
The decay result that we prove apply to solutions to the wave equation that is in some energy class initially. We define the energy classes using currents of vector fields. We will briefly introduce the relevant concepts here in order to present the energy classes. A more detailed description of the vector fields will be presented in the next section.\\
Define the energy-momentum tensor $$T_{\mu\nu}=\partial_{\mu}\phi\partial_{\nu}\phi-\frac{1}{2}g_{\mu\nu}\partial^\alpha\phi\partial_\alpha\phi.$$
Given a vector field $V^\mu$, we define the associated current
$$J^V_{\mu}\left(\phi\right)=V^{\nu}T_{\mu\nu}\left(\phi\right)$$
and the modified current $$J^{V,w}_{\mu}\left(\phi\right)=J^{V}_{\mu}\left(\phi\right)+\frac{1}{8}\left(w\partial_{\mu}\phi^2-\partial_{\mu}w\phi^2\right).$$
To define the energy classes we need two vector fields:
$$N=\partial_t + \frac{y_1\left(r^*\right)}{1-\mu}\partial_u +y_2\left(r^*\right)\partial_v, $$
$$Z=u^2\partial_u+v^2\partial_v, $$
where $y_1, y_2 > 0$ are supported near the event horizon with $y_1=1$, $y_2=0$ at the event horizon. The precise form of $y_1, y_2$ will be defined later. 
We also define a modifying function for the associated current of $Z$:
$$w^Z =\frac{2tr^*\left(1-\mu \right)}{r}.$$
We note here that 
$$\int  J^N_\mu\left(\phi\right)n^{\mu}_{t_0}dVol_{t_0} \sim \int_{-\infty}^{\infty} \int_{\mathbb S^2} \left(\frac{\left(\partial_u \phi\right)^2}{1-\mu }+\left(\partial_v \phi\right)^2 + |\nabb \phi|^2\right)r^2 dAdr^*,$$
\begin{equation*}
\begin{split}
&\int  J^{Z,w^Z}_\mu\left(\phi\right)n^{\mu}_{t_0}dVol_{t_0}\\
\sim &\int_{-\infty}^{\infty} \int_{\mathbb S^2} \left(u^2\left(\partial_u \phi\right)^2+v^2\left(\partial_v \phi\right)^2 + \left(1-\mu \right)\left(\left(u^2+v^2\right)|\nabb \phi|^2+ \frac{t^2+\left(r^*\right)^2}{r^2}\phi^2\right)\right)r^2 dAdr^*,
\end{split}
\end{equation*}
where $dVol_{t_0}$ is the volume form of the slice $\{t=t_0\}$ (See Section 2).\\
Let $S = t\partial_t+r^*\partial_{r^*}$.\\
Define
\begin{equation*}
\begin{split}
E_0\left(\phi\right)=&\int \left(\displaystyle\sum_{k=0}^3 J^N_\mu\left(\Omega^k \phi\right)n^{\mu}_{t_0} + \displaystyle\sum_{k=0}^2 J^{Z,w^Z}_\mu\left(\Omega^k \phi\right)n^{\mu}_{t_0}\right) dVol_{t_0},\\
E_1\left(\phi \right)=&E_0\left(S\phi \right)+ \sum_{m=0}^1 \sum_{k=0}^{4-m} E_0\left(\partial_t^m \Omega^k \phi \right),\\
E_2\left(\phi \right)=&\sum_{m=0}^2 \sum_{k=0}^{2-m} E_1\left(\partial_t^m \Omega^k \phi\right),\\
E_3\left(\phi \right)=&\sum_{m=0}^1 \sum_{k=0}^1 E_0\left(\partial_t^m \Omega^k \phi \right)+E_1\left(\phi \right),\\
E_4\left(\phi \right)=&\sum_{m=0}^2 \sum_{k=0}^{4-m} E_0\left(\partial_t^m \Omega^k\phi \right)+ \sum_{m=0}^{2} \sum_{k=0}^{2-m}  E_1\left(\partial_t\Omega^k\phi \right).
\end{split}
\end{equation*}
We notice that the boundedness of these quantities should be thought of as requirements of regularity and decay. In the above, $E_0$, $E_1$, $E_2$, $E_3$ and $E_4$ requires 4, 8, 10, 8 and 10 derivatives respectively. In terms of spatial decay, all the energy classes require decay of $\phi$ at spatial infinity. However, we note that $\phi$ is not required to decay toward the bifurcate sphere ($\mathcal H^+ \cap\mathcal H^-$ in Figure 1). In the following, we will work as if $\phi$ is smooth and supported away from spatial infinity. This assumption can be removed by a standard approximation argument.
\subsection{Statement of the Main Theorem}
We prove both pointwise decay and energy decay for solutions of $\Box_g\phi=0$. From this point onwards, we assume $t_* >1$, $v_* >1$.
\begin{maintheorem}
Suppose $\phi$ is a solution to the wave equation on the exterior region of Schwarzschild, i.e., $\Box_g\phi=0$. Then
for any $\delta>0$ and any $2M\le R<\infty$ 
\begin{enumerate}
\item Pointwise Decay of $\phi$

$$|\phi\left(v_*,u\right)|\leq C_{\delta, R} v_*^{-\frac{3}{2}+\delta} E_2^{\frac{1}{2}}\left(\phi\right) \quad\mbox{ for } r \leq R.$$
\item Pointwise Decay of Derivatives of $\phi$

$$|\nabla\phi\left(v_*,u\right)|\leq C_{\delta, R} v_*^{-\frac{3}{2}+\delta}\displaystyle\sum_{k+m\leq 1} E_2^{\frac{1}{2}}\left(\partial_t^m\Omega^k\phi\right)\quad\mbox{ for } r \leq R,$$  where $\nabla$ denotes any derivatives.

\item Decay of Non-degenerate Energy in the Region $r\leq r_2$

\begin{equation*}
\begin{split}
&\int_{r_1^*}^{r_2^*} \left(\nabla \phi\left(t_*\right)\right)^2 dVol_{t_*} +\int_{\{r^*\leq r_3^*\}} \left(\frac{1}{1-\mu }\left(\partial_u \phi\right)^2+\left(1-\mu\right)|\nabb\phi|^2\right) dA du_{\{v=v_*\}}\\
\leq &C_{\delta,r_1^*,r_2^*} \min{\{t_*, v_* \}}^{-3+\delta}\displaystyle\sum_{k+m \leq 1} E_1\left(\partial_t^m\Omega^k\phi\right),
\end{split}
\end{equation*}
for any $r_1^*$, $r_2^*$, $r_3^*$.

\end{enumerate} 
\end{maintheorem}
\begin{remark}
The integral in statement 3 represents the part of non-degenerate energy restricted to the region $r\le r_2$ (See Section 3.5). It should be compared with the corresponding part of the (degenerate) energy generated by the vectorfield $T=\frac \partial{\partial t}$
$$\int_{r_1^*}^{r_2^*} \left(\nabla \phi\left(t_*\right)\right)^2 dVol_{t_*} +\int_{\{r^*\leq r_3^*\}} \left(\left(\partial_u \phi\right)^2+\left(1-\mu\right)|\nabb\phi|^2\right) dA du_{\{v=v_*\}}.$$
The non-degenerate energy provides strong control of the solution near the event horizon ${\mathcal H}^+$. 
Such control also easily allows one to extend the obtained decay results past the event horizon.
\end{remark}
For the time derivatives, we have better decay estimates:
\begin{maintheorem}
Suppose $\phi$ is a solution to the wave equation on the exterior region of Schwarzschild, i.e., $\Box_g\phi=0$. Then for any $\delta >0$
\begin{enumerate}
\item Pointwise Decay of $\partial_t\phi$

$$|\partial_t\phi\left(v^*\right)|\leq C_\delta v_*^{-2+\delta}E_4^{\frac{1}{2}}\left(\phi \right) \quad\mbox{ for } r^*\leq  \frac{t_*}{2}.$$

\item Decay of Non-degenerate Energy of $\partial_t\phi$ in the Region $r^* \leq \frac{t_*}{2}$

\begin{equation*}
\begin{split}
&\int_{{r_1}^*}^{\frac{t_*}{2}} \left(\nabla\partial_t \phi\left(t_*\right)\right)^2 dVol_{t_*}+\int_{\{r^*\leq r_2^*\}} \left(\frac{1}{1-\mu }\left(\partial_u \partial_t\phi\right)^2+\left(1-\mu\right)|\nabb\partial_t\phi|^2\right) dA du_{\{v=v_*\}}\\
\leq &C_{\delta, r^*} \min\{t_*, v_*\}^{-4+\delta}E_3\left(\phi \right).
\end{split}
\end{equation*}
for any $r_1^*$, $r_2^*$.

\end{enumerate}
\end{maintheorem}
We would like to point out that the pointwise decay rates in both theorems apply to region of finite $r$ and along the event horizon.

\subsection{The Case of Minkowski Spacetime}
At this point, we would like to discuss some decay results for the linear wave equation on Minkowski spacetimes. We would like to especially highlight techniques that are relevant to our result. In Minkowski space $\mathbb R^{3,1}$, the solutions to the wave equation with initial conditions $\phi\left(t=0,x\right)=\phi_0$ and $\partial_t\phi\left(t=0,x\right)=\phi_1$ can be written as
\begin{equation}\label{funsol}
\phi\left(t,x\right)=\frac{1}{4\pi t^2} \left(\partial_t \int_{\mathbb S^2} t\phi_0\left(x+ty\right)dA\left(y\right)+\int_{\mathbb S^2} t\phi_1\left(x+ty\right)dA\left(y\right)\right).
\end{equation} 
This formula implies immediately that $|\phi\left(t,x\right)|\leq \frac{C}{t_+}$, where $t=\max\{t,1\}$. This decay is optimal in the variable $t$. However, improved decay can be seen in the null coordinates $v=\frac{1}{2}\left(t+r\right)$ and $u=\frac{1}{2}\left(t-r\right)$, where $r^2=\displaystyle\sum_{i=0}^3 x_i^2$. In particular, (\ref{funsol}) implies the strong Huygens' Principle, asserting that $\phi$ with compactly supported initial data is compactly supported in the variable $u$. Therefore, denoting $v_+=\max\{v, 1\}$, $u_+=\max\{u, 1\}$, we have in particular
$$|\phi|\leq \frac{C_N}{v_+ u_+^N}, \quad \forall N\geq0.$$
If we just focus on the region $\{r\leq\frac{t}{2}\}$, where $t\sim v\sim u$, the decay can be written as
$$|\phi|\leq \frac{C_N}{t_+^N}, \quad \forall N\geq0.$$

However, the use of the representation formula (\ref{funsol}) is not available on perturbations of the Minkowski spacetime. In \cite{CK}, \cite{LR}, a more robust understanding of the decay of the linear waves was necessary. This was achieved by the vector field method. Let $\phi$ be a solution to the linear wave equation on Minkowski spacetime, $\Box_m\phi=0$. Define the energy momentum tensor
$$T_{\mu\nu}=\partial_{\mu}\phi\partial_{\nu}\phi-\frac{1}{2}m_{\mu\nu}\partial^\alpha\phi\partial_\alpha\phi.$$
Notice that the wave equation implies that the energy momentum tensor is divergence free, i.e.,
$$\nabla^\mu T_{\mu\nu}=0.$$
Given a vector field $V^\mu$, we define the associated currents
$$J^V_{\mu}\left(\phi\right)=V^{\nu}T_{\mu\nu}\left(\phi\right),$$
$$K^V\left(\phi\right)=\frac{1}{2}T_{\mu\nu}\left(\nabla^\mu V^\nu+\nabla^\nu V^\mu\right);$$ 
and the modified currents $$J^{V,w^V}_{\mu}\left(\phi\right)=J^{V}_{\mu}\left(\phi\right)+\frac{1}{8}\left(w^V\partial_{\mu}\phi^2-\partial_{\mu}w^V\phi^2\right),$$
$$K^{V,w^V}\left(\phi\right)=K^V\left(\phi\right)+\frac{1}{4}w^V\partial^{\nu}\phi\partial_{\nu}\phi-\frac{1}{8}\Box_g w^V\phi^2,$$
where $w^V$ is some scalar function associated to the vector field $V$.
Since the energy momentum tensor is divergence free, it is easy to check that
$$\nabla^{\mu}J^{V}_{\mu}\left(\phi\right)=K^{V}\left(\phi\right),$$
$$\nabla^{\mu}J^{V,w^V}_{\mu}\left(\phi\right)=K^{V,w^V}\left(\phi\right).$$
Notice that $K^V\left(\phi\right)=0$ whenever $V$ is Killing. In this case $J^{V}_{\mu}\left(\phi\right)$ is divergence free. Therefore, for any solution $\phi$ and Killing vector field $V$, there is a conservation law:
$$\int_{t=t_1} J^{V}_0\left(\phi\right) dx_{t_1}=\int_{t=t_0} J^{V}_0\left(\phi\right) dx_{t_0}.$$
This is a manifestation of Noether's Theorem, which states that a differentiable one-parameter family of symmetries gives rise to a conservation law. We call the vector field $V$ in this application a multiplier because we ``multiply'' it to the energy momentum tensor. An example of this is to take the Killing vector field $\partial_t$ and derive the energy conservation law
\begin{equation}\label{energyconservation}
\int \left(\left(\partial_t\phi\right)^2+\sum_{i=1}^3\left(\partial_{x_i}\phi\right)\right)dx_t=\int \left(\left(\partial_t\phi\right)^2+\sum_{i=1}^3\left(\partial_{x_i}\phi\right)^2\right)dx_{t_0}.
\end{equation}
Besides being multipliers, vector fields can also be used as commutators. This means that we commute the vector fields with $\Box_m$. For example, since $\partial\in\{\partial_t, \partial_{x_i}\}$ is Killing, $[\Box_m,\partial]=0$ and therefore $\Box_m \left(\partial\phi\right)=0$. Then the energy conservation law (\ref{energyconservation}) can be applied to $\partial\phi$ and we can control the $L^2$ norm of the derivatives of $\phi$ of orders 1 and 2. Then using a Sobolev-type inequality $||\phi||_{L^\infty\left(\mathbb R^3\right)} \leq C ||\phi||^{\frac{1}{2}}_{\dot{H}^1\left(\mathbb R^3\right)}||\phi||^{\frac{1}{2}}_{\dot{H}^2\left(\mathbb R^3\right)}$ (which holds for compactly supported functions), uniform boundedness of the solutions to the wave equation can be proved. The Killing vector fields $\Omega_i$ generating the spherical symmetry can also be used as commutators. This is especially useful because compared to the angular derivatives, $\Omega_i$ has an extra factor of $r$, i.e., $\Omega\sim r\nabb$. This allows one to prove in \cite{Klext} that for $\phi$ decaying sufficiently fast at spatial infinity: 
$$|\phi|\leq \frac{C}{r}\sum_{m=1}^2\sum_{k=0}^{2}||\partial_r^m \Omega^k \phi||_{L^2\left(\mathbb R^3\right)},$$
which implies a decay in the region $\{r>\frac{t}{2}\}$:
$$|\phi| \leq \frac{C}{r} \leq \frac{C}{v_+}$$ 
after applying (\ref{energyconservation}) to $\Omega^k\phi$. 

To achieve decay of $\phi$ in $\{r\leq\frac{t}{2}\}$, one can use the conformally Killing vector field $Z=\left(t^2+r^2\right)\partial_t+2tr\partial_r$ introduced by Morawetz \cite{Morawetz}. In this case, $K^Z\left(\phi\right)\not=0$. Nevertheless, by defining $w^Z=2t$, $K^{Z,w^Z}\left(\phi\right)=0$ and therefore $\int J_0^{Z,w^Z}\left(\phi\right) dx_t$ is a conserved quantity. Moreover, some algebraic manipulation would show
$$\int J_0^{Z,w^Z}\left(\phi\right)dx_t \geq c \int \left(v^2\left(\partial_v\phi\right)^2+u^2\left(\partial_u\phi\right)^2+\left(v^2+u^2\right)\left(\frac{\phi^2}{r^2}+|\nabb\phi|^2\right)\right) dx,$$
where $\nabb$ denotes the angular derivatives. The conserved nonnegative quantity $\int J^{Z,w^Z}_0\left(\phi\right) dx_t$ is known as the conformal energy. For the region $\{r\leq\frac{t}{2}\}$, notice that the boundedness of the conformal energy implies a local energy decay
$$\int_{\{r\leq\frac{t}{2}\}} \left(\frac{\phi^2}{r^2}+\left(\partial_v\phi\right)^2+\left(\partial_u\phi\right)+|\nabb\phi|^2\right)dx_t \leq \frac{C}{t_+}.$$
After considering the equations $\Box_m\left(\partial^k\phi\right)=0$, Sobolev embedding would imply the pointwise decay $|\phi|\leq \frac{C}{t_+},$ for  $r\leq\frac{t}{2}$. Notice that in this region $t_+$ is comparable to $v_+$. Therefore, we have in the whole of Minkowski spacetime
$$|\phi|\leq \frac{C}{v_+}.$$

Klainerman-Sideris \cite{KS} showed that more decay can be achieved in the interior region $\{r\leq\frac{t}{2}\}$ for the derivatives of $\phi$. They used the scaling vector field $S=t\partial_t+r\partial_r$ as a commutator. Notice that $S$ is conformally Killing and $[\Box_m,S]=2\Box_m$. In particular, if one has $\Box_m\phi=0$, then $\Box_m\left(S\phi\right)=S\Box_m\phi+2\Box_m\phi=0$. Therefore, any decay results that hold for $\phi$ also hold for $S\phi$. Klainerman-Sideris \cite{KS} showed that
$$\sum_{\partial\in\{\partial_t,\partial_{x_i}\}}||u_+\partial\partial_t\phi||_{L^2\left(\mathbb R^3\right)}\leq C\sum_{\partial\in\{\partial_t,\partial_{x_i}\}}\left(||\partial S\phi||_{L^2\left(\mathbb R^3\right)}+||\partial\phi||_{L^2\left(\mathbb R^3\right)}+||\partial^2\phi||_{L^2\left(\mathbb R^3\right)}+||\partial\Omega\phi||_{L^2\left(\mathbb R^3\right)}\right).$$
By cutting off appropriately and using the local energy decay estimates,
$$||\partial\partial_t\phi||_{L^2\left(\{r\leq\frac{t}{2}\}\right)}\leq \frac{C}{t_+^2} \quad 
\mbox{in } \{r\leq\frac{t}{2}\}$$ since $\frac{1}{u_+}\leq\frac{C}{t_+}$ in this region. Again, using the Sobolev-type inequality above, one shows that $|\partial_t\phi|\leq \frac{C}{t_+^2}$ in $\{r\leq\frac{t}{2}\}$. The other derivatives can also be estimated first by elliptic estimates and then the Sobolev inequality, since $||u_+\partial_t^2\phi||_{L^2\left(\mathbb R^2\right)}=||u_+\Delta\phi||_{L^2\left(\mathbb R^2\right)}$ by the linear wave equation. Therefore
$$|\partial\phi|\leq \frac{C}{t_+^2} \quad \mbox{in }\{r\leq \frac{t}{2}\}.$$ We remark that in \cite{KS}, the improved decay in $\{r\leq\frac{t}{2}\}$ can also be proved for the function $\phi$ itself by inverting the Laplacian. As we proceed to prove the analogous decay on Schwarzschild spacetimes, we will avoid doing so. This is because on Schwarzschild spacetimes, it is impossible to invert the Laplacian for functions that do not vanish on the bifurcate sphere ($\mathcal H^+ \cap \mathcal H^-$ in Figure 1).

\subsection{Some Known Results on the Wave Equation on Schwarzschild Spacetimes}

We now turn to the corresponding problem for linear waves on Schwarzschild spacetimes. The problem of the uniform boundedness of solutions to $\Box_g \phi=0$ on the exterior of Schwarzschild occupied the physics community for some time. The first mathematically rigorous result was obtained by Wald \cite{W} for solutions vanishing on the bifurcate sphere ($\mathcal H^+ \cap \mathcal H^-$ in Figure 1). Kay-Wald \cite{KW} later removed this restriction and proved the uniform boundedness of a more general class of solutions. They used the energy conservation law given by using $\partial_t$ as a multiplier as well as the Killing fields $\{\partial_t, \Omega_i\}$ as commutators. The decay rates
\begin{equation}\label{phidecay}
|\phi|\leq C v_+^{-1},\quad \forall r\geq 2M,
\end{equation}
\begin{equation*}
|r\phi|\leq C_R u_+^{-\frac{1}{2}}, \quad \forall r\geq R,
\end{equation*}
where $v_+=\max\{v, 1\}$, $u_+=\max\{u, 1\}$ and $C_R$ depends only on an appropriate norm of the initial data, for sufficiently regular solutions to $\Box_g \phi=0$, were proved by Dafermos-Rodnianski in \cite{DRS}. We note that the decay rate (\ref{phidecay}) holds in the entire exterior region of Schwarzschild spacetimes, including along the event horizon. In addition to the vector fields in \cite{W},\cite{KW}, their approach employed several other (non-Killing!) vector fields. One is an analog of the Morawetz vector field $Z$ in Minkowski spacetime. It has an associated nonnegative quantity which we will call the conformal energy. It has weights similar to that of the conformal energy on Minkowski spacetime so that its boundedness would imply a local energy decay. Another is a vector field of the form $X=f\left(r^*\right)\partial_{r^*}$. The construction of this vector field was motivated by \cite{LSo}. Unlike other multipliers, $X$ is constructed so that $K^{X,w^X}\left(\phi\right)$ (instead of $J^{X,w^X}\left(\phi\right)$) can be controlled. This is used to estimate some energy quantity integrated over spacetime, in particular error terms from the ``conservation law'' of the conformal energy. The estimates of $X$ are iterated together with that of $Z$ to achieve the boundedness of the conformal energy. This then implies the decay of $\phi$ away from the event horizon. The estimate associated to $X$ can be thought of as an \emph{integrated in time} local energy decay. It was extensively studied in \cite{A},\cite{BSX},\cite{BSt},\cite{DRN},\cite{DRS},\cite{MMTT}. 

In addition, \cite{DRS} introduced a new - red shift vector field - which takes advantage of the geometry of the event horizon and is used crucially in proving the decay rate close to and along the event horizon. This vector field is one of the few stable features of the Schwarzschild spacetime. In particular, it can be used to give a more robust proof of boundedness of the solutions to the linear wave equation on Schwarzschild spacetimes. It also plays key roles in the boundedness results for the linear wave equation on small axisymmetric stationary perturbations of Schwarzschild spacetimes and in the decay result for the linear wave equation on slowly rotating Kerr spacetimes \cite{DRK}, \cite{DRL}. As we will see later, it will make a crucial appearance in this article to achieve the improved decay rate along the event horizon.

The study of pointwise decay was carried out independently by Blue-Sterbenz \cite{BSt}. They showed a similar quantitative decay result for initial data vanishing on the bifurcate sphere, with a decay rate that is weaker than \cite{DRS} along the event horizon. In the proof they used analogues of the vector fields $Z$ and $X$ but not the vector field $Y$. Strichartz estimates for solutions of the wave equation on Schwarzschild background were shown in \cite{MMTT}. We refer the readers to Sections 3 and 4 in \cite{DRL} for further references on this problem.

Considerable attention has also been given to the problem of decay of solutions of the wave equations on the Schwarzschild spacetime restricted to a {\it fixed} spherical harmonic $\phi_\ell$ arising in the decomposition $\phi(t,r,\omega)=\displaystyle\sum_\ell \phi_\ell(r,t)Y_\ell(\omega)$, $\omega\in{\Bbb S}^2$. Such results for a fixed spherical harmonic have been obtained in \cite{DRPL}, \cite{DSS}, \cite{Kr}, \cite{MS}. We refer the readers to Section 4.6 in \cite{DRL} for a more detailed discussion.

\subsection{Outline of the Proof}

Our proof uses ideas from Dafermos-Rodnianski \cite{DRS} and Klainerman-Sideris \cite{KS}. In addition to the arguments used in \cite{DRS}, we introduce a vector field $S=t\partial_t+r^*\partial_{r^*}$ which is analogous to the scaling vector field in Minkowski spacetime. Since Schwarzschild spacetimes are asymptotically flat, $S$ is still an ``asymptotic conformal symmetry'' generating an ``asymptotic almost conservation law''. However, the error terms away from spacelike infinity are in general large. To see this more concretely, we recall that on Minkowski spacetimes, $\Box_m\phi=0$ implies $\Box_m\left(S\phi\right)=0$. This does not hold in Schwarzschild spacetimes. Nevertheless, for $\Box_g \phi=0$, we still have a (schematic) equation $\Box_g\left(S\phi\right)= h(r)\left(\nabla\phi+\nabla^2\phi\right)$ with $h\to 0$ as $r\to\infty$. The strategy is then to go through the argument in Dafermos-Rodnianski \cite{DRS} and control the error terms that arise from $\Box_g\left(S\phi\right)\not = 0$. To do so, we use a slightly modified version of the energy estimates that are available from the proof in \cite{DRS}.

As in later parts of the paper, we define $\psi=S\phi$. We would like to prove energy estimates for $\psi$ similar to those for $\phi$ that are established in \cite{DRS}, except for a loss of an arbitrarily small power of $t$. A key estimate that will be used to prove the main theorem is:
\begin{equation}\label{Xest}
\int_{t_1}^{t_2}\int_{-\frac{t}{2}}^{\frac{t}{2}} \int_{\mathbb S^2} \left(\psi^2+ \left(\nabla\psi\right)^2\right)\chi\left(r^*\right)dVol \leq C_\delta t_1^{-2+\delta},
\end{equation}
where $\chi$ is some weight and $t_1\leq t_2\leq\left(1.1\right)t_1$. A similar estimate is available with $\psi$ replaced by $\phi$ from \cite{DRS} using the $X$ vector field. In order to prove this, we argue in a similar fashion. We want to show, using the vector field $X$, that for $t_1$, $t_2$ as above
$$\int_{t_1}^{t_2}\int_{-\frac{t}{2}}^{\frac{t}{2}} \int_{\mathbb S^2} \left(\psi^2+ \left(\nabla\psi\right)^2\right)\chi\left(r^*\right)dVol \leq C_\delta t_1^{-2+\delta}\{\mbox{conf. energy}(\psi)\},$$
where the conformal energy is the current of the vector field $Z$ on the boundary $\{t=t_i\}$. We then hope to show
$$\{\mbox{conf. energy}(\psi)\mbox{ at } t_2\}\leq C \left(\{\mbox{conf. energy}(\psi)\mbox{  at } t_1\}+\int_{t_1}^{t_2}\int_{-\frac{t}{2}}^{\frac{t}{2}} \int_{\mathbb S^2} \left(\psi^2+ \left(\nabla\psi\right)^2\right)\chi\left(r^*\right)dVol\right).$$
We then iterate two inequalities to obtain (\ref{Xest}) as in \cite{DRS}.

The main difficulty in actually carrying out the above procedure is that each step is only true modulo some error terms that need not be small. These are error terms arising from the fact that $\psi$ does not satisfy the homogeneous wave equation, but only satisfies an inhomogeneous wave equation, which schematically can be thought of as $\Box_g \psi=h(r^*)\left(\nabla\phi+\nabla^2\phi\right)$. If one applies the vector field method to this equation, one would generate an error term of the form 
\begin{equation}\label{error}
\int_{t_1}^{t_2}\int_{-\infty}^{\infty} \int_{\mathbb S^2} V^\mu\partial_\mu \psi h(r^*)\left(\nabla\phi+\nabla^2\phi\right) dVol,
\end{equation}
for the vector fields $V\in\{\partial_t, X=f(r)\partial_{r^*}, Z=\left(t^2+\left(r^*\right)^2\right)\partial_t+2tr^*\partial_r^*\}$. (In practice there is still another error term if one uses the modified current, but since it can be controlled similarly, we omit the technicalities here.) Applying Cauchy-Schwarz, we can control (\ref{error}) by
\begin{equation}\label{errorCS}
\left(\int_{t_1}^{t_2}\int_{-\infty}^{\infty} \int_{\mathbb S^2} \left(\nabla\psi\right)^2 dVol\right)^{\frac{1}{2}}\left(\int_{t_1}^{t_2}\int_{-\infty}^{\infty} \int_{\mathbb S^2} \tilde{h}\left(r^*\right)\left((\nabla\phi)^2+(\nabla^2\phi)^2\right) dVol\right)^{\frac{1}{2}}
\end{equation}
We control the first factor by some energy quantities of $\psi$ which we are in the process of proving. They are set up so that we can estimate them with a bootstrap argument. In order that the bootstrap can close, we would need to show that the second factor decays or does not grow as $t_1, t_2 \to \infty$ (for example with $t_2=(1.1)t_1$). The precise rate of decay that is necessary depends on the vector field $V$ under consideration and is ultimately dictated by what the bootstrap argument requires. To achieve this, we recall the energy estimates derived from the $X$ vector field in \cite{DRS}. In particular, we have
\begin{equation}\label{Xphiwhole}
\int_{t_1}^{t_2}\int_{-\infty}^{\infty} \int_{\mathbb S^2}  \left(\nabla\phi\right)^2\chi\left(r^*\right)dVol \leq C,
\end{equation}
\begin{equation}\label{Xphimiddle}
\int_{t_1}^{t_2}\int_{-\frac{t}{2}}^{\frac{t}{2}} \int_{\mathbb S^2}  \left(\nabla\phi\right)^2\chi\left(r^*\right)dVol \leq C t_1^{-2},
\end{equation}
where $\chi$ is a weight that decays at spatial infinity. (\ref{Xphimiddle}) gives good control for the second factor in (\ref{errorCS}) for the region $\{-\frac{t}{2}\leq r^*\leq\frac{t}{2}\}$ as long as $\tilde{h}$ and $\chi$ behaves appropriately. We will slightly improve the weight $\chi$ from \cite{DRS} so that we have, loosely speaking, $\tilde{h}\left(r^*\right)\leq C\left(1+|r^*|\right)^{-2}\chi\left(r^*\right)$. This would give control for the second factor in (\ref{errorCS}) for the region $\{-\frac{t}{2}\leq r^*\leq\frac{t}{2}\}$. For the regions $\{r^*\leq-\frac{t}{2}\}$ and $\{r^*\geq\frac{t}{2}\}$, $\tilde{h}\left(r^*\right)\leq C\left(1+|r^*|\right)^{-2}\chi\left(r^*\right)\leq C\left(1+t\right)^{-2}\chi\left(r^*\right)$. Then we can control the second factor in (\ref{errorCS}) in this region with (\ref{Xphiwhole}) and the extra factor of $(1+t)^{-2}$. The reader should keep in mind that these are only heuristics and are not true if directly applied. The actual estimates for these error terms are slightly more involved considering firstly that $V^\mu$ might grow $t$; and secondly that we do not have energy estimates that control every derivatives of $\psi$; and thirdly that some error terms would tend to infinity as $r$ approaches the event horizon. The relevant estimates will be proved in Section 5.

In \cite{KS}, the estimates for $\psi$ are used to prove the decay for $\partial_t\phi$ in Minkowski spacetime. We show that it is possible to argue similarly to prove the decay for $\partial_t\phi$ in Schwarzschild spacetimes (Section 7.2). Recall that in \cite{KS}, one then proceeds with elliptic estimates to prove the decay for other derivatives. However, on Schwarzschild spacetimes, if we are to prove an $L^2$ elliptic estimate, we are bound to have some lower order terms involving only one derivative of $\phi$. These terms cannot be controlled by the estimates of $\psi$ and therefore we are unable to use a similar method to prove the decay of the the spatial derivatives of $\phi$.

Therefore, we introduce in this paper a new method, based on a novel application of $S$,to prove the decay for the function $\phi$ as well as its derivatives in spatial directions (Section 7.1). We notice that by (\ref{Xphimiddle}), 
$$\int_{t_1}^{t_2} \int_{r_1^*}^{r_2^*} \left(\phi^2+\left(\partial_{r^*}\phi\right)^2\right) dVol \leq C t_1^{-2},$$
for $t_1 \leq t_2 \leq \left(1.1\right)t_1$. Therefore, there exists a time $\tilde{t}\in[t_1,t_2]$ such that 
$$\int_{r_1^*}^{r_2^*} \left(\phi^2+\left(\partial_{r^*}\phi\right)^2\right) dVol_{\tilde{t}} \leq C \tilde{t}^{-3}.$$
In order to show that the same holds for any $t$, we note that $S$ is strictly timelike on a compact set of $r^*$. Therefore we can integrate in the direction of $S$ from the slice $\tilde{t}$ to a generic slice $t$. This integration would not give an extra factor of $t$ precisely because we already have the estimates for $\psi=S\phi$. After controlling the spacetime terms by (\ref{Xest}) and (\ref{Xphimiddle}), we show that for any $t$, 
$$\int_{r_1^*}^{r_2^*} \left(\phi^2+\left(\partial_{r^*}\phi\right)^2\right) dVol_{t} \leq C_\delta t^{-3+\delta}.$$
We use Sobolev Embedding to get the pointwise decay estimate for $\phi$ and its derivatives (for $r_1^*\leq r^* \leq r_2^*$) after commuting with an appropriate number of Killing vector fields. We note in particular that in this proof, it is unnecessary to invert the Laplacian on Schwarzschild spacetime to prove the decay of $\phi$.

The argument above gives the decay of $\phi$ and its derivatives in a compact region of $r^*$, i.e. a compact region of space that is also away from the event horizon. (Recall $r^*$ is defined so that $r^*=-\infty$ at the event horizon.) In order to prove that $\phi$ also decays along the event horizon, we use the red-shift vector field introduced in \cite{DRS}. This vector field was used in \cite{DRS} to show that in some (explicitly identified) neighborhood of the event horizon, some energy quantity on an initial slice can control some similar energy quantity in a spacetime slab provided that the error terms that are supported in a compact region of $r^*$ can be controlled. It is then used to propagate the decay of $\phi$ from a compact region of $r^*$ to the event horizon. In this article, we show along these lines that any decay estimate proved on a compact region of $r^*$ can be propagated to the event horizon, giving rise to a decay estimate of the same rate. This will be carried out in Section 6 and will give the full improved decay result.

\section{Notations}
Before proceeding, we would like to first define the notations used for the coordinates and volume form.\\ \\
For the $r, r^*$ coordinates, we always use $^*$ to denote the Regge-Wheeler tortoise coordinate of the same point.\\ \\
For the $t$ coordinates:\\
$t_0$ denotes the time slice on which the initial data is posed.\\
$t_*$ denotes the time slice on which we would like to control the solution.\\
$t_i$ denotes dyadic time slices (which will be defined in Section 4).\\
$t$ denotes a generic time slice.\\
We assume $t_0, t_*, t_i, t >0$. \\ \\
For volume forms:\\
$dVol$ denotes the spacetime volume form, $dVol=r^2\left(1-\mu \right) dA dr^* dt$.\\
$dVol_{t}$ denotes the volume form on a time slice, $dVol_{t}=r^2\sqrt{1-\mu } dA dr^*$.\\
$dVol_{v}$ denotes a volume form on a $v$ slice, $dVol_{v}=r^2\sqrt{1-\mu } dA du$.\footnote{
Most of the time it is clear from context whether we are integrating over a $t$ or $v$ slice. We will specify in the case of possible ambiguity, for example $dVol_{\{t=2v-r^*\}}$ is the volume form on a $t$ slice, where $t$ has the specified value.}\\
$dA$ denotes the volume form on the standard sphere of radius 1.\\
Whenever we write $\int$ without integration limits, it denotes the integration over ``whole space'' that is appropriate for the volume form.
\section{Vector Fields}
\subsection{Conservation Laws}
We consider the conservation laws for $\phi$ satisfying $\Box_{g}\phi=0$.
Define the energy-momentum tensor $$T_{\mu\nu}=\partial_{\mu}\phi\partial_{\nu}\phi-\frac{1}{2}g_{\mu\nu}\partial^\alpha\phi\partial_\alpha\phi.$$
We note that $T_{\mu\nu}$ is symmetric and the wave equation implies that $$\nabla^{\mu}T_{\mu\nu}=0.$$
Given a vector field $V^\mu$, we define the associated currents 
$$J^V_{\mu}\left(\phi\right)=V^{\nu}T_{\mu\nu}\left(\phi\right),$$
$$K^V\left(\phi\right)=\pi^V_{\mu\nu}T^{\mu\nu}\left(\phi\right),$$
where $\pi^V_{\mu\nu}$ is the deformation tensor defined by
$$\pi^V_{\mu\nu}=\frac{1}{2}\left(\nabla_{\mu}V_{\nu}+\nabla_{\nu}V_{\mu}\right).$$
In particular, $K^V\left(\phi\right)=\pi^V_{\mu\nu}=0$ if $V$ is Killing.
Since the energy-momentum tensor is divergence-free, 
$$\nabla^{\mu}J^{V}_{\mu}\left(\phi\right)=K^V\left(\phi\right).$$
We also define the modified current $$J^{V,w}_{\mu}\left(\phi\right)=J^{V}_{\mu}\left(\phi\right)+\frac{1}{8}\left(w\partial_{\mu}\phi^2-\partial_{\mu}w\phi^2\right).$$
Define $K^{V,w}\left(\phi\right)=K^V\left(\phi\right)+\frac{1}{4}w\partial^{\nu}\phi\partial_{\nu}\phi-\frac{1}{8}\Box_g w\phi^2$.\\
Then $$\nabla^{\mu}J^{V,w}_{\mu}\left(\phi\right)=K^{V,w}\left(\phi\right).$$
We integrate by parts with this in a region $\mathcal{B}$ bounded to the future by $\Sigma_1$ and to the past by $\Sigma_0$. The region $\mathcal{B}$ should have no other boundary. Denoting the future-directed normal to $\Sigma_0$ and $\Sigma_1$ by $n_{\Sigma_0}^{\mu}$ and $n_{\Sigma_1}^{\mu}$ respectively, we have
\begin{proposition}
$$\int_{\Sigma_1} J^{V}_{\mu}\left(\phi\right)n_{\Sigma_1}^{\mu}dVol_{\Sigma_1}+\int_\mathcal{B} K^V\left(\phi\right) dVol=\int_{\Sigma_0} J^{V}_{\mu}\left(\phi\right)n_{\Sigma_0}^{\mu} dVol_{\Sigma_0}.$$
$$\int_{\Sigma_1} J^{V, w}_{\mu}\left(\phi\right)n_{\Sigma_1}^{\mu}dVol_{\Sigma_1}+\int_\mathcal{B} K^{V,w}\left(\phi\right)dVol=\int_{\Sigma_0} J^{V, w}_{\mu}\left(\phi\right)n_{\Sigma_0}^{\mu}dVol_{\Sigma_0}.$$
\end{proposition}
In this paper, there are two choices of $\Sigma_i$ that we will use. The first is to choose $\Sigma_i$ to be $t=constant$ slices. The second choice is for estimates near the event horizon. In this case, $\mathcal{B}=\{v_0\leq v \leq v_1, t \geq t_0\}$, $\Sigma_0=\{v=v_0, t \geq t_0\} \cup \{v_0\leq v \leq v_1, t = t_0\}$ and $\Sigma_1=\{v=v_1, t \geq t_0\} \cup \{v_0\leq v \leq v_1, u = \infty\}$ (See Figure).\\
\begin{figure}[htbp]
\begin{center}
 
\input{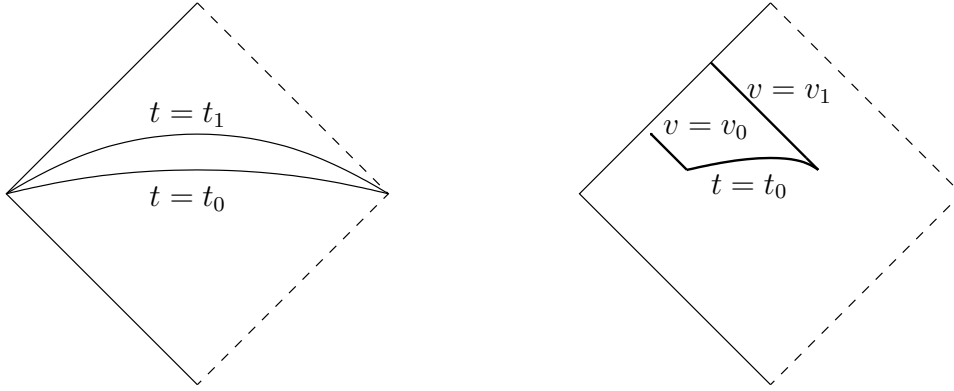}
 
\caption{Regions of integration}
\end{center}
\end{figure}

One can similarly define the above quantities for the inhomogeneous wave equation $\Box_{g}\psi=F$. In this case, the energy-momentum is no longer divergence free. Instead, we have
$$\nabla^{\mu}T_{\mu\nu}=F\partial_\nu\psi.$$
In this case,
$$\nabla^{\mu}J^{V}_{\mu}\left(\psi\right)=K^V\left(\psi\right)+FV^{\nu}\partial_{\nu}\psi.$$
For the modified current, $$\nabla^{\mu}J^{V,w}_{\mu}\left(\psi\right)=K^{V,w}\left(\psi\right)-\frac{1}{4}Fw\psi+FV^{\nu}\partial_{\nu}\psi.$$
\begin{proposition}
$$\int_{\Sigma_1} J^{V}_{\mu}\left(\psi\right)n_{\Sigma_1}^{\mu}dVol_{\Sigma_1}+\int_B K^V\left(\psi\right)dVol+\int_B FV^{\nu}\partial_{\nu}\psi=\int_{\Sigma_0} J^{V}_{\mu}\left(\psi\right)n_{\Sigma_0}^{\mu}dVol_{\Sigma_0}.$$
$$\int_{\Sigma_1} J^{V,w}_{\mu}\left(\psi\right)n_{\Sigma_1}^{\mu}dVol_{\Sigma_1}+\int_B K^{V,w}\left(\psi\right)dVol+\int_B \left(-\frac{1}{4}Fw\psi+FV^{\nu}\partial_{\nu}\psi\right)dVol=\int_{\Sigma_0} J^{V,w}_{\mu}\left(\psi\right)n_{\Sigma_0}^{\mu}dVol_{\Sigma_0}.$$
\end{proposition}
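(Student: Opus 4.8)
The plan is to obtain both identities as direct applications of the divergence theorem to the currents $J^{V}_{\mu}(\psi)$ and $J^{V,w}_{\mu}(\psi)$ over the region $\mathcal{B}$, exactly as in Proposition 1, with the extra contributions of the inhomogeneity $F$ carried along. First I would record the two pointwise divergence identities stated immediately above, namely $\nabla^{\mu}J^{V}_{\mu}(\psi)=K^V(\psi)+FV^{\nu}\partial_{\nu}\psi$ and $\nabla^{\mu}J^{V,w}_{\mu}(\psi)=K^{V,w}(\psi)-\frac14 Fw\psi+FV^{\nu}\partial_{\nu}\psi$. These are algebraic consequences of the definitions of the currents together with $\nabla^{\mu}T_{\mu\nu}=F\partial_{\nu}\psi$, which in turn follows from the product rule applied to $T_{\mu\nu}$ and the equation $\Box_g\psi=F$; the modified-current identity requires in addition expanding $\nabla^{\mu}(w\partial_{\mu}\psi^2-\partial_{\mu}w\,\psi^2)$ and using $\Box_g$ acting on $w$. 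I would verify these term by term, the only difference from the homogeneous case of Proposition 1 being that $\nabla^{\mu}T_{\mu\nu}$ no longer vanishes but produces the source term $F\partial_\nu\psi$.

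Next I would integrate each identity against the spacetime volume form $dVol$ over $\mathcal{B}$ and apply the divergence theorem. Since $\mathcal{B}$ is bounded to the future by $\Sigma_1$ and to the past by $\Sigma_0$ and has no other boundary, the integral $\int_{\mathcal{B}}\nabla^{\mu}J_{\mu}(\psi)\,dVol$ equals the difference of the boundary fluxes $\int_{\Sigma_1}J_{\mu}(\psi)n_{\Sigma_1}^{\mu}dVol_{\Sigma_1}-\int_{\Sigma_0}J_{\mu}(\psi)n_{\Sigma_0}^{\mu}dVol_{\Sigma_0}$, the signs being fixed by taking both $n_{\Sigma_i}^{\mu}$ future-directed. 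Substituting the two divergence identities and moving the source integrals $\int_{\mathcal{B}}FV^{\nu}\partial_{\nu}\psi\,dVol$, respectively $\int_{\mathcal{B}}\left(-\frac14 Fw\psi+FV^{\nu}\partial_{\nu}\psi\right)dVol$, to the left-hand side then produces precisely the two stated equalities.

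The step requiring the most care is the boundary bookkeeping in the Lorentzian divergence theorem, specifically the sign and orientation conventions together with the induced measures on $\Sigma_0$ and $\Sigma_1$. For the first choice of region ($t=\mathrm{const}$ slices) this is standard. For the second choice, the horizon region in which $\Sigma_0$ and $\Sigma_1$ include null pieces $v=v_i$ and the segment $u=\infty$ on $\mathcal{H}^+$, I would note that the flux of a current through a null hypersurface is well defined once one fixes the normalization of the null generator and the companion volume form, and that the identity is unaffected by the causal character of the boundary since it is merely the contracted form of Stokes' theorem. Finally I would invoke the standing assumption that $\psi$ is supported away from spatial infinity, which guarantees that all integrals converge and that no boundary term arises at $r=\infty$; this is the only analytic input, the remainder being the algebra of the divergence identities already in hand.
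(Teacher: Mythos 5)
Your proposal is correct and follows essentially the same route as the paper: the paper records the pointwise divergence identities $\nabla^{\mu}J^{V}_{\mu}(\psi)=K^V(\psi)+FV^{\nu}\partial_{\nu}\psi$ and its modified-current analogue immediately before the proposition, and the proposition is then just the integrated form obtained by the divergence theorem over $\mathcal{B}$ with the boundary conventions fixed as in Proposition 1. Your additional remarks on the null boundary pieces and the support assumption are consistent with the paper's setup and do not change the argument.
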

In the case of wave equation on Schwarzschild background, we can compute the energy momentum tensor explicitly in local coordinates $\left(t,r^*,x^A,x^B\right)$ or equivalently $\left(u,v,x^A,x^B\right)$, where $x^A,x^B$ is an orthonormal basis on $\mathbb S^2$.
$$T_{uu}\left(\phi\right)=\left(\partial_u\phi\right)^2,$$
$$T_{vv}\left(\phi\right)=\left(\partial_v\phi\right)^2,$$
$$T_{uv}\left(\phi\right)=\left(1-\mu \right)|\nabb\phi|^2,$$
$$T_{AA}\left(\phi\right)+T_{BB}\left(\phi\right)=|\nabb\phi|^2-\partial^{\alpha}\phi\partial_{\alpha}\phi.$$
As a result,
\begin{equation*}
\begin{split}
K^V\left(\phi\right)=&\frac{1}{4\left(1-\mu \right)}\left(\left(\partial_u\phi \right)^2\partial_v \left(V_v\left(1-\mu \right)^{-1}\right)+\left(\partial_v\phi \right)^2\partial_u \left(V_u\left(1-\mu \right)^{-1}\right)\right.\\
&\left.+|\nabb\phi|^2\left(\partial_uV_v+\partial_vV_u\right) \right)-\frac{1}{2r}\left(V_u-V_v\right)\left(|\nabb\phi|^2-\partial^{\alpha}\phi\partial_{\alpha}\phi \right).
\end{split}
\end{equation*}
\subsection{Vector Field Multiplier $T$}
Define $T=\partial_t$.
Recall that $T$ is Killing.
Therefore,
$$K^T\left(\phi\right)=0.$$
In the following, we will consider this current on a constant t-slice.\\
One computes that in local coordinates 
$$J^T_{\mu}\left(\phi\right) n_t^{\mu}=\frac{1}{2\sqrt{1-\mu}}\left(\left(\partial_t \phi\right)^2+\left(\partial_{r^*} \phi\right)^2+\left(1-\mu \right)|\nabb\phi|^2\right),$$
where $n_t^{\mu}$ is the normal to a $t$-slice.\\
\subsection{Vector Field Multiplier $X$}
Define $X=f\left(r^*\right)\partial_{r^*}$. In the following we will use different functions $f$.
One computes that $$K^X\left(\phi\right)=\frac{f'\left(r^*\right)\left(\partial_{r^*}\phi\right)^2}{1-\mu}+\frac{1}{2}|\nabb\phi|^2\left(\frac{2-3\mu}{r}\right)f\left(r^*\right)-\frac{1}{4}\left(2f'\left(r^*\right)+\frac{4\left(1-\mu \right)}{r}f\left(r^*\right)\right)\partial^{\alpha}\phi\partial_{\alpha}\phi.$$
We consider the modified current using $w^X=2f'\left(r^*\right)+\frac{4\left(1-\mu \right)}{r}f\left(r^*\right)$.
Then 
\begin{equation}\label{Xid}
\begin{split}
K^{X,w^X}=&\frac{f'\left(r^*\right) \left( \partial_{r^*}\phi \right)^2}{1-\mu }+\frac{1}{2}|\nabb\phi|^2 \left(\frac{2-3\mu }{r}\right)f\left(r^*\right)-\frac{1}{8}\Box_g w^X \phi^2\\
=&\frac{f'\left(r^*\right)\left(\partial_{r^*}\phi \right)^2}{1-\mu}+\frac{1}{2}|\nabb\phi|^2 \left(\frac{2-3\mu}{r} \right)f\left(r^*\right)\\
&-\frac{1}{4}\left(\frac{1}{1-\mu}f'''\left(r^*\right)+\frac{4}{r}f''\left(r^*\right)+\frac{\mu}{r^2}f'\left(r^*\right)-\frac{2\mu}{r^3}\left(3-4\mu \right) f\left(r^*\right)\right)\phi^2,\\
J^{X,w^X}_{\mu} n_t^{\mu}=&\frac{1}{\sqrt{1-\mu}}f\left(r^*\right)\partial_t\phi\partial_{r^*}\phi+\frac{1}{2 \sqrt{1-\mu}}\left(f'\left(r^*\right)+\frac{2\left(1-\mu \right)}{r}f\left(r^*\right)\right)\left(\partial_t\phi\right)\phi,
\end{split}
\end{equation}
where $n_t^{\mu}$ is the normal to a $t$-slice.\\
The vector field $X$ is constructed to control a spacetime integral by the boundary terms, i.e., one hopes to control the integral of $K^{X,w^X}\left(\phi\right)$ by the integral of $J^{X,w^X}_{\mu}\left(\phi\right)n_t^{\mu}$. In order for this to be useful, we need $K^X\left(\phi\right)$ to be everywhere positive. Such vector fields are constructed in \cite{DRS} using spherical harmonic decomposition.
In particular, it was shown in \cite{DRS} that there exists a family of vector fields $X_l=f_l\left(r^*\right)\partial_{r^*}$ for $l \ge 0$ such that for any function $\phi$ (not necessarily satisfying the wave equation), if we write out the spherical harmonic decomposition $\phi=\displaystyle\sum_{l=0}^{\infty} \phi_l$, $K^{X_l, w^{X_l}}\left(\phi_l\right)\geq 0$.\\
Moreover, one has 
\begin{equation} \label{Xl}
\int_{\mathbb S^2}\left(\frac{\left(\partial_{r^*}\phi_l\right)^2}{\left(1+|r^*|\right)^2\left(1-\mu \right)}+\frac{\phi_l^2}{\left(1+|r^*|\right)^4 \left(1-\mu \right)}\right)dA \leq C \int_{\mathbb S^2} K^{X_l, w^{X_l}}\left(\phi_l\right) dA
\end{equation}
for $l \ge 1$, where $C$ can be picked to be independent of $l$, and 
\begin{equation*}
\int_{\mathbb S^2} \frac{\left(\partial_{r^*}\phi_0\right)^2}{\left(1+|r^*|\right)^{1+\delta}r^2\left(1-\mu \right)}dA\leq C \int_{\mathbb S^2} K^{X_l, w^{X_l}}\left(\phi_0\right) dA.
\end{equation*}
Moreover for this choice of $X_l$, the boundary terms are also controllable as shown in \cite{DRS}:
\begin{equation*}
\int K^{X_l, w^{X_l}}\left(\phi_l\right) dVol \leq C \int J^{T}_{\mu}\left(\phi\right)n_{t_0}^{\mu}dVol_{t_0}.
\end{equation*}
\begin{remark}
We note that although $K^{X_l,w^{X_l}}\left(\phi\right)$ is shown to be nonnegative everywhere, it has a weight in front of $|\nabb\phi|^2$ that degenerates at $r=3M$. Therefore, we cannot directly estimate the integral of $|\nabb\phi|^2$ by that of $K^{X_l,w^{X_l}}\left(\phi\right)$. Instead, we will consider the equation $\Box_g\left(\Omega\phi\right)=0$ and estimate the relevant quantities with $\int K^{X_l,w^{X_l}}\left(\Omega\phi\right) dVol$. This loss of derivative is related to the trapping phenomenon that we mentioned in Section 1.1.
\end{remark}
In section 3 we will construct two more vector fields of this form. One will be a modified $X_0$ to control a weighted $L^2$-norm of the zeroth spherical harmonic and the other will be used to control the behavior at infinity.
\subsection{Vector Field Multiplier $Z$}
Define $Z=u^2\partial_u+v^2\partial_v$.
This is the analogue of the conformal vector field in Minkowski spacetime. Like the case in Minkowski spacetime, it is used to show decay for the solution to the wave equation.
One computes that 
$$K^Z=-t|\nabb\phi|^2 \left(\frac{1}{2}+\frac{\mu r^*}{4r}-\frac{r^*\left(1-\mu \right)}{2r}\right)-\frac{1}{4}\frac{2tr^*\left(1-\mu \right)}{r}\partial^{\alpha}\phi\partial_{\alpha}\phi.$$
We consider the modified current using $w^Z=\frac{2tr^*\left(1-\mu \right)}{r}$.
Then
\begin{equation*}
\begin{split}
K^{Z,w^Z}=&-t|\nabb\phi|^2 \left(\frac{1}{2}+\frac{\mu r^*}{4r}-\frac{r^*\left(1-\mu \right)}{2r}\right)-\frac{1}{8}\Box_g w^Z \phi^2\\
=&-t|\nabb\phi|^2 \left(\frac{1}{2}+\frac{\mu r^*}{4r}-\frac{r^*\left(1-\mu \right)}{2r}\right)-\frac{t}{4}\mu r^{-2}\phi^2 \left(2+\frac{r^* \left(4\mu -3\right)}{r}\right),\\
J^{Z,w^Z}_{\mu} n_t^{\mu}=&\frac{1}{4\sqrt{1-\mu }}\left(u^2\left(\partial_u\phi \right)^2+v^2\left(\partial_v\phi \right)^2+\left(1-\mu \right)\left(u^2+v^2\right)|\nabb\phi|^2+\frac{2tr^*\left(1-\mu \right)}{r}\phi\partial_t\phi-\frac{r^*\left(1-\mu \right)}{r}\phi^2 \right).
\end{split}
\end{equation*}
where $n_t^{\mu}$ is the normal to a $t$-slice.\\
It is shown in \cite{DRS} that there exist $r_1^*, r_2^*$ such that for $r^* \leq r_1^*$ or $r^* \geq r_2^*$, $K^{Z,w^Z}\geq 0$.\\
Moreover, it is shown that $\int J^{Z,w^Z}_{\mu} n_t^{\mu} dVol_{t}$ is everywhere non-negative.\\
More specifically, if we define $S=u\partial_u+v\partial_v$ and $\underline{S}=-u\partial_u+v\partial_v$,
\begin{equation*}
\begin{split}
&\int J^{Z,w^Z}_{\mu} n_t^{\mu} dVol_t\\
=&\int \frac{1}{8\sqrt{1-\mu}}\left(\mu\left(\left(S\phi\right)^2+\left(\underline{S}\phi\right)^2\right)+\left(1-\mu \right)\left(\left(S\phi+\frac{r^*}{r}\phi\right)^2+\left(\underline{S}\phi+\frac{t}{r}\phi\right)^2\right)+2\left(1-\mu \right)\left(u^2+v^2\right)|\nabb\phi|^2\right) dVol_t.
\end{split}
\end{equation*}
\subsection{Vector Field Multipliers $Y'$, $Y$ and $N$}
Define $Y'=\frac{y_1\left(r^*\right)}{1-\mu }\partial_u+y_2\left(r^*\right)\partial_v$, where $y_1, y_2 > 0$ are supported in $r \leq \left(1.2\right)r_0$, with $y_1=1$, $y_2=0$ at the event horizon and $y_1'\left(r^*\right)\sim y_2\left(r^*\right)\sim C\left(1+|r^*|\right)^{-1-\delta}$ for $2M\leq r \leq r_0$.\\
Here we want to choose $r_0$ small enough so that \\
1. $Y'$ is supported on $r<3M$ (i.e. $\left(1.2\right)r_0< 3M$),\\
2. $K^{Y'}\left(\phi\right)\geq 0$ on $2M\leq r \leq r_0$,\\
3. $C K^{Y'}\left(\phi\right) \geq \frac{1}{\sqrt{1-\mu }}J^{Y'}_{\mu}\left(\phi\right) n^{\mu}_{\{v=const.\}}$ on $2M\leq r \leq r_0$.\\
The vector field $Y'$ is designed to capture the red-shift effect at the event horizon \cite{DRS}. Using the current $J^{Y'}$, we will not only produce estimates on constant $t$-slices, but also on constant $v$- slices. We will therefore record here all the relevant computations.\\
We have
$$K^{Y'}=\frac{\left(\partial_u\phi\right)^2}{2\left(1-\mu \right)^2}\left(\frac{y_1\mu }{r}-y_1' \right)+\frac{\left(\partial_v\phi\right)^2}{2\left(1-\mu \right)}y_2'+\frac{1}{2}|\nabb\phi|^2\left(\frac{y_1'}{1-\mu }-\frac{\left(y_2\left(1-\mu \right)\right)'}{1-\mu }\right)-\frac{1}{r}\left(\frac{y_1}{1-\mu }-y_2\right)\partial_u\phi\partial_v\phi,$$
$$J^{Y'}_{\mu}\left(\phi\right) n^{\mu}_{\{v=const.\}}=\frac{1}{2\sqrt{1-\mu }}\left(\frac{y_1}{1-\mu }\left(\partial_u\phi\right)^2+\left(1-\mu \right)y_2|\nabb\phi|^2\right),$$
$$J^{Y'}_{\mu}\left(\phi\right) n^{\mu}_{\{t=const.\}}=\frac{1}{2\sqrt{1-\mu }}\left(\frac{y_1}{1-\mu }\left(\partial_u\phi\right)^2 + y_2\left(\partial_v\phi\right)^2 + \left(y_1+\left(1-\mu \right)y_2\right)|\nabb\phi|^2\right).$$
From this we see that if $r_0$ is chosen to be close enough to $2M$, requirements 2 and 3 can be satisfied.\\
We modify this vector field so that it has better bounds on constant $t$- slices.\\
Define $Y=Y'+\chi\left(r\right)T$, where $\chi\left(r\right)$ is a cutoff function with $\chi\left(r\right)=\left\{\begin{array}{clcr}1&r\le r_0\\0&r\ge \left(1.2\right)r_0\end{array}\right.$. $Y$ has the following properties:\\
1. $Y$ is supported on $r<\left(1.2\right)r_0$,\\
2. $K^Y=K^{Y'}$ on $r<r_0$,\\
3. $C K^{Y}\left(\phi\right) \geq \frac{1}{\sqrt{1-\mu }}J^{Y}_{\mu}\left(\phi\right) n^{\mu}_{\{v=const.\}}$ on $2M\leq r \leq r_0$.\\
On the region $2M \leq r \leq r_0$, we have
$$J^{Y}_{\mu}\left(\phi\right) n^{\mu}_{\{v=const.\}}=\frac{1}{2\sqrt{1-\mu }}\left(\left(\frac{y_1}{1-\mu }+\frac{1}{2}\right)\left(\partial_u\phi\right)^2+\left(1-\mu \right)\left(y_2+1\right)|\nabb\phi|^2\right),$$
$$J^{Y}_{\mu}\left(\phi\right) n^{\mu}_{\{t=const.\}}=\frac{1}{2\sqrt{1-\mu }}(\left(\frac{y_1}{1-\mu }+\frac{1}{2}\right)\left(\partial_u\phi\right)^2 + \left(y_2+\frac{1}{2}\right)\left(\partial_v\phi\right)^2 + \left(y_1+\left(1-\mu \right)\left(y_2+1)\right)|\nabb\phi|^2\right).$$
We argue without computation that for $r_0 \leq r \leq \left(1.2\right)r_0$,
$$|K^Y| \leq C \frac{1}{\sqrt{1-\mu }}J^T_{\mu }\left(\phi\right) n^{\mu}_{\{t=const.\}}, $$
$$J^Y_{\mu}\left(\phi\right) n^{\mu}_{\{v=const.\}}\leq C J^T_{\mu}\left(\phi\right) n^{\mu}_{\{v=const.\}},$$
$$J^Y_{\mu }\left(\phi\right) n^{\mu}_{\{t=const.\}} \leq C J^T_{\mu }\left(\phi\right) n^{\mu}_{\{t=const.\}}. $$
This is true because $J^T_{\mu }\left(\phi\right) n^{\mu}_{\{t=const.\}}$ controls every derivative of $\phi$ while the terms in $J^T_{\mu}\left(\phi\right) n^{\mu}_{\{v=const.\}}$ and $J^{Y'}_{\mu }\left(\phi\right) n^{\mu}_{\{t=const.\}}$ contain only derivatives $\partial_u$ and $\nabb$. Thus the only difference is the weights, which are functions of $r$ and are harmless since $r$ is bounded on this region.\\
Define $N=T+Y$. $N$ is clearly causal, thus $J^N_{\mu}\left(\phi\right) n^{\mu}_{\{t=const.\}}\geq 0$. Away from the horizon, namely when $r \geq 1.2r_0$, $J^N_{\mu}\left(\phi\right) n^{\mu}_{\{t=const.\}}=J^T_{\mu}\left(\phi\right) n^{\mu}_{\{t=const.\}}$. However, as we approach the horizon, $J^N_{\mu}\left(\phi\right) n^{\mu}_{\{t=const.\}} \sim J^Y_{\mu}\left(\phi\right) n^{\mu}_{\{t=const.\}}$ and thus $J^N_{\mu}\left(\phi\right) n^{\mu}_{\{t=const.\}}$ gives a much stronger bound. We assume for our energy classes that the integral of $J^N_{\mu}\left(\phi\right) n^{\mu}_{\{t=const.\}}$ is bounded initially and this clearly implies the boundedness for the corresponding integrals for $J^T$ and $J^Y$ initially. The flux corresponding to $J^N$ should be thought of as a non-degenerate energy, which does not degenerate at the event horizon. This allows us to prove decay results along the event horizon.

Before introducing the vector field commutator $S$, we end this part on vector field multipliers by explicitly noting what each of the positive quantities bounds. Most of these are direct consequences of the expressions of the currents, except that for $J^{Z,w^Z}$, which requires some manipulation and is proved in \cite{DRS}.
\begin{proposition}
\begin{enumerate}
\item $\frac{1}{\sqrt{1-\mu }}\left(\left(\partial_{r^*}\phi \right)^2+\left(\partial_t\phi \right)^2+ \left(1-\mu \right)|\nabb\phi|^2\right) \leq C J^T_{\mu}\left(\phi\right) n^{\mu}_t$,
\item $\int_{-\infty}^{\infty}\int_{\mathbb S^2}\frac{1}{\sqrt{1-\mu }}\left(u^2\left(\partial_u\phi \right)^2+v^2\left(\partial_v\phi \right)^2 +\left(1-\mu \right)\left(u^2+v^2\right)|\nabb\phi|^2\right)dVol_t \leq C \int_{-\infty}^{\infty}\int_{\mathbb S^2}J^{Z,w^Z}_{\mu}\left(\phi\right) n^{\mu}_t dVol_t$,
\item $\int_{-\infty}^{\infty}\int_{\mathbb S^2}\frac{1}{\sqrt{1-\mu }}\left(\left(1-\mu \right)\frac{\left(r^*\right)^2+t^2}{r^2} \phi^2\right) dVol_t\leq C \int_{-\infty}^{\infty}\int_{\mathbb S^2} J^{Z,w^Z}_{\mu}\left(\phi\right) n^{\mu}_t dVol_t$,
\item $\int_{\mathbb S^2}\frac{\left(\partial_{r^*}\phi\right)^2}{\left(1+|r^*|\right)^2 r^{1+\delta}\left(1-\mu \right)} dA \leq C \int_{\mathbb S^2}\displaystyle\sum_l K^{X_l,w^{X_l}}\left(\phi\right)dA$,
\item $\int_{\mathbb S^2}\frac{|\nabb\phi|^2}{\left(1+|r^*|\right)^4 \left(1-\mu \right)} dA\leq C \int_{\mathbb S^2}\displaystyle\sum_l K^{X_l,w^{X_l}}\left(\Omega\phi\right)dA$.
\end{enumerate}
\end{proposition}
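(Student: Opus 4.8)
The plan is to read statements (1), (4), (5) directly off the current expressions already computed in Section 3, and to obtain (2) and (3) from the sum-of-squares form of the conformal flux displayed just above the proposition, the only genuinely non-algebraic input being a Hardy inequality needed for (3).

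\emph{Statement (1)} is immediate: the formula for $J^T_\mu(\phi)n^\mu_t$ from Section 3.2 shows that the left-hand side is exactly $2\,J^T_\mu(\phi)n^\mu_t$, so one takes $C=2$. For \emph{statements (2) and (3)} I would start from the sum-of-squares identity for $\int J^{Z,w^Z}_\mu(\phi)n^\mu_t\,dVol_t$ and use $u\partial_u\phi=\tfrac12(S\phi-\underline S\phi)$ and $v\partial_v\phi=\tfrac12(S\phi+\underline S\phi)$, whence $u^2(\partial_u\phi)^2+v^2(\partial_v\phi)^2=\tfrac12\big((S\phi)^2+(\underline S\phi)^2\big)$. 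The angular term in (2) matches the summand $2(1-\mu)(u^2+v^2)|\nabb\phi|^2$ up to a fixed constant. For the first-order terms I would split into $\{\mu\ge\tfrac12\}$, where the summand $\mu\big((S\phi)^2+(\underline S\phi)^2\big)$ already dominates $(S\phi)^2+(\underline S\phi)^2$, and $\{\mu<\tfrac12\}$, where $1-\mu>\tfrac12$ and one writes $(S\phi)^2\le 2\big(S\phi+\tfrac{r^*}{r}\phi\big)^2+2\big(\tfrac{r^*}{r}\phi\big)^2$ together with the analogue for $\underline S\phi$: the first pieces are controlled by the $(1-\mu)$-square summands, while the second pieces are absorbed into the weighted $\phi^2$ quantity of (3). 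Integrating, (2) reduces to (3).

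Statement (3) is where the work lies, and I expect it to be the main obstacle: the weighted $\phi^2$ term cannot be extracted pointwise from the squares $\big(S\phi+\tfrac{r^*}{r}\phi\big)^2$ and $\big(\underline S\phi+\tfrac{t}{r}\phi\big)^2$, since these control only $\phi$-derivatives-plus-$\phi$, not $\phi$ itself. The remedy is to integrate by parts in the $(u,v)$ variables and invoke a Hardy inequality, trading the cross term against $\int\frac{(r^*)^2+t^2}{r^2}\phi^2$; this is precisely the manipulation carried out in \cite{DRS}, which I would quote.

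Finally, for \emph{statements (4) and (5)} I would decompose $\phi=\sum_l\phi_l$ into spherical harmonics. Since every weight depends only on $r^*$ and distinct harmonics are $L^2(\mathbb S^2)$-orthogonal, $\int_{\mathbb S^2}(\partial_{r^*}\phi)^2\,dA=\sum_l\int_{\mathbb S^2}(\partial_{r^*}\phi_l)^2\,dA$, with the same orthogonality on the right-hand sides. For (4) I would apply (\ref{Xl}) termwise: for $l\ge1$ the left weight is bounded by $C(1+|r^*|)^{-2}(1-\mu)^{-1}$ because $r\ge 2M$ makes $r^{-1-\delta}$ bounded, while for $l=0$ one compares with the zeroth-harmonic estimate using that $r/(1+|r^*|)$ is bounded above on the whole exterior (it tends to $1$ at infinity and to $0$ at the horizon), so that the weight ratio equals $\big(r/(1+|r^*|)\big)^{1-\delta}\le C$. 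For (5), the degeneracy of the $|\nabb\phi|^2$ weight in $K^{X_l,w^{X_l}}$ at $r=3M$ noted in the Remark forces one to work with $\Omega\phi$: since $\Omega$ commutes with $\lapp$ it preserves the harmonic decomposition and annihilates $\phi_0$, and $\sum_i(\Omega_i\phi)^2=r^2|\nabb\phi|^2$. Applying the $\phi_l^2$-half of (\ref{Xl}) to each $\Omega_i\phi$, summing over $l\ge1$ and over $i$, and using $r\ge 2M$ to discard the factor $r^2$, then yields (5).
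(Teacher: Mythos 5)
Your proposal is correct and matches the paper's treatment: the paper offers no separate proof, stating only that parts (1), (4), (5) are direct consequences of the displayed current expressions and the estimate (\ref{Xl}) (applied harmonic by harmonic, with $\Omega\phi$ substituted for part (5)), while parts (2) and (3) follow from the sum-of-squares identity for $\int J^{Z,w^Z}_\mu n^\mu_t\,dVol_t$ together with the Hardy-type manipulation quoted from \cite{DRS}. Your reduction of (2) to (3) via the splitting $\{\mu\ge\tfrac12\}$ versus $\{\mu<\tfrac12\}$ and the weight comparisons $r^{-1-\delta}\le C$ and $r/(1+|r^*|)\le C$ on the exterior are exactly the elementary observations the paper leaves implicit.
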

\subsection{Vector Field Commutator $S$}
Define $S=t \partial_t+r^* \partial_{r^*}=v\partial_v+u\partial_u.$\\
This vector field, together with the usual Killing fields, will be commuted with $\Box_g$. We note that the vector field $t \partial_t+r \partial_r$ is conformally Killing on Minkowski with $[\Box_m,t \partial_t+r \partial_r]=2\Box_m$. Therefore, the commutator $[\Box_g,S]$ is expected to approach $2\Box_g$ towards spatial infinity, where the spacetime approaches Minkowski. \\
We set $\psi=S\phi$ and derive an equation for $\psi$.
\begin{proposition}
\begin{enumerate}
\item $[\Box_g,S]=\left(2+\frac{r^*\mu}{r}\right)\Box_g +\frac{2}{r}\left(\frac{r^*}{r}-1-\frac{2r^*\mu}{r}\right)\partial_{r^*}+2\left(\left(\frac{r^*}{r}-1\right)-\frac{3r^*\mu}{2r}\right)\lapp\phi$.
\item $\Box_g \psi=g_1\left(r^*\right)\partial_{r^*}\phi+g_2\left(r^*\right)\lapp\phi$, where $|g_1\left(r^*\right)|,\frac{|g_2\left(r^*\right)|}{r}\sim\left\{\begin{array}{clcr}\frac{\left(\log r\right)_+}{r^2}& r >>2M\\|r^*|& r\sim 2M \end{array}\right.$,\\ $\left(\log r\right)_+=\max \{ \log r, 1 \}$.
\end{enumerate}
\end{proposition}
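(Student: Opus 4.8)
The plan is to reduce everything to a direct commutator computation, exploiting the clean form of $\Box_g$ in $(t,r^*)$ coordinates. First I would rewrite the wave operator: using $\partial_{r^*}r=1-\mu$, the radial part $(1-\mu)^{-1}r^{-2}\partial_{r^*}(r^2\partial_{r^*}\phi)$ expands to $(1-\mu)^{-1}\partial_{r^*}^2\phi+\frac{2}{r}\partial_{r^*}\phi$, so that
$$\Box_g=-(1-\mu)^{-1}\partial_t^2+(1-\mu)^{-1}\partial_{r^*}^2+\frac{2}{r}\partial_{r^*}+\lapp,$$
where every coefficient is a function of $r^*$ alone and $\lapp=r^{-2}\times(\text{unit-sphere Laplacian})$.

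Next I would compute $[\Box_g,S]$ with $S=t\partial_t+r^*\partial_{r^*}$, splitting $S$ into its two summands. Since $t\partial_t$ commutes with every coefficient and with $\partial_{r^*},\partial_{r^*}^2,\lapp$, only the $\partial_t^2$ term contributes, giving $[\Box_g,t\partial_t]=-2(1-\mu)^{-1}\partial_t^2$. For $r^*\partial_{r^*}$ I would use the elementary identities (with $'=d/dr^*$ and $F=F(r^*)$) $[F,r^*\partial_{r^*}]=-r^*F'$, $[F\partial_{r^*}^2,r^*\partial_{r^*}]=(2F-r^*F')\partial_{r^*}^2$, and $[F\partial_{r^*},r^*\partial_{r^*}]=(F-r^*F')\partial_{r^*}$, together with $[\lapp,r^*\partial_{r^*}]=-r^*(r^{-2})'\,r^2\lapp=\frac{2r^*(1-\mu)}{r}\lapp$. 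Collecting these and substituting the explicit derivatives $\partial_{r^*}(1-\mu)^{-1}=-\frac{\mu}{r}(1-\mu)^{-1}$ and $\partial_{r^*}(2/r)=-\frac{2(1-\mu)}{r^2}$ produces all five terms of the commutator.

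The structural heart of Part 1 is the observation that the $\partial_t^2$ and $\partial_{r^*}^2$ coefficients of $[\Box_g,S]$ come out equal and opposite, exactly as in $\Box_g$ itself, so both are absorbed simultaneously by a single scalar multiple $\lambda\Box_g$. Using $\partial_{r^*}(1-\mu)^{-1}/(1-\mu)^{-1}=-\mu/r$ one gets $\lambda=2-r^*\,\frac{\partial_{r^*}(1-\mu)^{-1}}{(1-\mu)^{-1}}=2+\frac{r^*\mu}{r}$. Subtracting $\lambda\Box_g$ cancels both second-order terms in $\partial_t$ and $\partial_{r^*}$ but leaves a first-order radial term and an angular term; matching coefficients yields precisely $\frac{2}{r}\left(\frac{r^*}{r}-1-\frac{2r^*\mu}{r}\right)\partial_{r^*}$ and $2\left(\left(\frac{r^*}{r}-1\right)-\frac{3r^*\mu}{2r}\right)\lapp$, as claimed. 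The angular coefficient $\frac{2r^*(1-\mu)}{r}-\lambda$ is \emph{not} equal to $\lambda$, which is exactly why the $\lapp\phi$ source survives. Part 2 is then immediate: since $\Box_g\phi=0$, we have $\Box_g\psi=\Box_g(S\phi)=[\Box_g,S]\phi+S\Box_g\phi=[\Box_g,S]\phi$, and the $\lambda\Box_g\phi$ term vanishes, leaving $\Box_g\psi=g_1\partial_{r^*}\phi+g_2\lapp\phi$ with $g_1,g_2$ read off from Part 1.

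Finally, the asymptotics follow from $r^*=r+2M\log(r-2M)-3M-2M\log M$. As $r\to\infty$, $\mu\to0$ and $\frac{r^*}{r}-1\sim\frac{2M\log r}{r}$, which dominates the $O(1/r)$ contributions coming from the $\mu$-terms, so $g_1\sim\frac{\log r}{r^2}$ and $g_2/r\sim\frac{\log r}{r^2}$; as $r\to2M$, $\mu\to1$ and $r^*\to-\infty$, so the combinations collapse to leading order $\sim|r^*|$. The main obstacle is not any individual step but the bookkeeping: tracking which coefficient derivatives land on which factors, verifying the equal-and-opposite cancellation that yields the clean $\lambda\Box_g$ (this is the concrete manifestation of $S$ being an asymptotic conformal symmetry), and handling in the asymptotic analysis the competition between the logarithmic growth of $r^*$ and the polynomial decay near infinity, versus the degeneration $1-\mu\to0$ with $r^*\to-\infty$ near the horizon.
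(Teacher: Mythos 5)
Your proposal is correct and follows essentially the same route as the paper: both expand $\Box_g$ in $(t,r^*)$ coordinates and compute the commutator with $S=t\partial_t+r^*\partial_{r^*}$ term by term, absorb the equal-and-opposite second-order coefficients into the multiple $\left(2+\frac{r^*\mu}{r}\right)\Box_g$, and obtain Part 2 by reading off $g_1,g_2$ after using $\Box_g\phi=0$. The only difference is presentational (you package the calculation in general commutator identities and make the asymptotics of $g_1,g_2$ explicit, which the paper leaves implicit).
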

\begin{remark}
Equivalently, we write $|g_1\left(r^*\right)|,\frac{|g_2\left(r^*\right)|}{r}\sim \frac{\left(1+|r^*|\right)\left(\log r\right)_+}{r^3}.$
\end{remark}
\begin{proof}
\begin{equation*}
\begin{split}
[-\left(1-\mu \right)^{-1}\partial_t^2,S ]=&-2\left(1-\mu \right)^{-1}\partial_t^2+r^*\partial_{r^*}\left(1-\mu \right)^{-1}\partial_t^2\\
=&-2\left(1-\mu \right)^{-1}\partial_t^2-\frac{r^*\mu}{r\left(1-\mu \right)}\partial_t^2,
\end{split}
\end{equation*}
\begin{equation*}
\begin{split}
[\left(1-\mu \right)^{-1}\partial_{r^*}^2,S ]=&2\left(1-\mu \right)^{-1}\partial_{r^*}^2-r^*\partial_{r^*}\left(1-\mu \right)^{-1}\partial_{r^*}^2\\
=&2\left(1-\mu \right)^{-1}\partial_{r^*}^2+\frac{r^*\mu}{r\left(1-\mu \right)}\partial_{r^*}^2,
\end{split}
\end{equation*}
\begin{equation*}
\begin{split}
[\frac{2}{r}\partial_{r^*},S ]=&\frac{2}{r}\partial_{r^*}+\frac{2r^*\left(1-\mu \right)}{r^2}\partial_{r^*}\\
=&\frac{4}{r}\partial_{r^*}+\left(\frac{2r^*\left(1-\mu \right)}{r^2}-\frac{2}{r}\right)\partial_{r^*},
\end{split}
\end{equation*}
\begin{equation*}
\begin{split}
[\lapp,S ]=&\frac{2r^*\left(1-\mu \right)}{r}\lapp,
\end{split}
\end{equation*}
\begin{equation*}
\begin{split}
[\Box_g, S]=&\left(2+\frac{r^*\mu}{r}\right)\Box_g+\left(\frac{2r^*\left(1-\mu \right)}{r^2}-\frac{2}{r}-\frac{2r^*\mu}{r^2}\right)\partial_{r^*}+\left(\frac{2r^*\left(1-\mu \right)}{r}-2-\frac{r^*\mu}{r}\right)\lapp\\
=&\left(2+\frac{r^*\mu }{r}\right)\Box_g+\frac{2}{r}\left(\frac{r^*}{r}-1-\frac{2r^*\mu}{r}\right)\partial_{r^*}+2\left(\left(\frac{r^*}{r}-1\right)-\frac{3r^*\mu}{2r}\right)\lapp\phi.
\end{split}
\end{equation*}
2. is immediate from 1. if we let $g_1\left(r^*\right)=\frac{2}{r}\left(\frac{r^*}{r}-1-\frac{2r^*\mu}{r}\right)$ and $g_2\left(r^*\right)=2\left(\left(\frac{r^*}{r}-1\right)-\frac{3r^*\mu}{2r}\right)$.
\end{proof}
\section{Estimates for $\phi$}
The following has been proved in \cite{DRS} and is collected for later use.
\begin{theorem}[Dafermos-Rodnianski]
\begin{enumerate}
\item $\int J^T_\mu\left(\phi\right) n^{\mu}_{t_*} dVol_{t_*}=\int J^T_\mu\left(\phi\right) n^{\mu}_{t_0} dVol_{t_0}$,
\item $\int \displaystyle\sum_l K^{X_l,w^{X_l}} \left(\phi\right) dVol \leq C\int J^T_\mu\left(\phi\right) n^{\mu}_{t_0} dVol_{t_0}$,
\item $\int J^Z_\mu\left(\phi\right) n^{\mu}_{t_*} dVol_{t_*} \leq C E_0\left(\phi\right)$,
\item $\int_{-\frac{t_*}{2}}^{\frac{t_*}{2}} J^T_\mu\left(\phi\right) n^{\mu}_{t_*} dVol_{t_*} \leq C E_0\left(\phi\right) t_*^{-2}$,
\item $\int_{t_1}^{t_2} \int_{-\frac{t}{2}}^{\frac{t}{2}} \displaystyle\sum_l K^{X_l,w^{X_l}} \left(\phi\right) dVol \leq C E_0\left(\phi\right) t_1^{-2},$ where $t_1 \leq t_2 \leq \left(1.1\right)t_1$.
\end{enumerate}
\end{theorem}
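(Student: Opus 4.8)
The five assertions are consequences of the vector-field identities in Proposition~2 (the divergence theorem for $J^{V}$ and $J^{V,w}$), so the plan is to dispatch them in order of increasing difficulty, reserving the real work for the boundedness of the conformal energy.

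\textbf{Statements 1 and 2.} For the first, since $T=\partial_t$ is Killing we have $K^T(\phi)=0$, so applying the $J^V$ identity with $V=T$ on the slab $\mathcal{B}=\{t_0\le t\le t_*\}$ leaves no bulk term; as $\phi$ is smooth and supported away from spatial infinity, only the fluxes on $\{t=t_0\}$ and $\{t=t_*\}$ survive, giving the conservation law. For the second, I would apply the modified identity with $V=X_l$, $w=w^{X_l}$ to each harmonic $\phi_l$. By the construction recalled above, $K^{X_l,w^{X_l}}(\phi_l)\ge 0$ pointwise, so this bulk term has a favorable sign and may be kept on the left. It then remains to dominate the boundary fluxes: reading off $J^{X,w^X}_\mu n_t^\mu$ from (\ref{Xid}), every term is either a product of two first derivatives or the zeroth-order piece $(\partial_t\phi)\phi$, all against $r^*$-weights that are bounded because $f_l,f_l'$ are. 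Cauchy--Schwarz (with a Hardy inequality absorbing the zeroth-order term into the derivative energy) bounds these by $C\,J^T_\mu(\phi)n_t^\mu$ with $C$ independent of $l$; summing over $l$ and using Statement~1 to transfer the $t_*$-flux back to $t_0$ yields the spacetime bound. This is exactly the boundary control quoted from \cite{DRS} before the Remark.

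\textbf{Statement 3} is the crux. Apply the modified identity with $V=Z$, $w=w^Z$ on $\{t_0\le t\le t_*\}$. The flux on $\{t=t_*\}$ is the conformal energy, nonnegative by the manifestly positive rewriting in terms of $S\phi$ and $\underline S\phi$ displayed above, while the flux on $\{t=t_0\}$ is part of $E_0(\phi)$. The obstruction is the bulk term: $K^{Z,w^Z}\ge 0$ only for $r^*\le r_1^*$ or $r^*\ge r_2^*$, and in the compact window $r_1^*\le r^*\le r_2^*$ it has indefinite sign and carries an explicit factor of $t$. The good part is moved to the left, and the bad part is estimated by
$$\left|\int_{\mathcal{B}\cap\{r_1^*\le r^*\le r_2^*\}}K^{Z,w^Z}\,dVol\right|\le C\int_{\mathcal{B}\cap\{r_1^*\le r^*\le r_2^*\}}t\left((\nabla\phi)^2+\phi^2\right)dVol,$$
which must be absorbed using the integrated local energy decay generated by the $X_l$. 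Since the weight in front of $|\nabb\phi|^2$ in $K^{X_l,w^{X_l}}$ degenerates at the photon sphere $r=3M$ (the Remark), the angular contribution has to be recovered by running the $X$-estimate on $\Omega\phi$, which is why $E_0$ carries the $\Omega$-commuted fluxes. The factor of $t$ above means no single application closes: the argument is an iteration over dyadic time intervals between the $Z$-identity and the $X$-identity, in which boundedness of the conformal energy and decay of the local energy are bootstrapped together, precisely as in \cite{DRS}. I expect this iteration --- reconciling the $t$-weight in $K^{Z,w^Z}$ against the local-energy decay it is itself used to prove --- to be the main obstacle, together with the trapping-induced derivative loss just noted.

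\textbf{Statements 4 and 5} follow once Statement~3 is available. For Statement~4, in $\{|r^*|\le t_*/2\}$ one has $u\sim v\sim t_*$, so the lower bounds for $J^{Z,w^Z}$ recorded in the Proposition above give $u^2(\partial_u\phi)^2+v^2(\partial_v\phi)^2+(1-\mu)(u^2+v^2)|\nabb\phi|^2\gtrsim t_*^2\,J^T_\mu(\phi)n_{t_*}^\mu$ on this region; integrating, dividing by $t_*^2$, and invoking boundedness of the conformal energy produces the $t_*^{-2}$ decay. For Statement~5, I would localize the nonnegative $X_l$-bulk identity to $\mathcal{R}=\{t_1\le t\le t_2,\ |r^*|\le t/2\}$. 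The nonnegative bulk is bounded by the boundary fluxes: the two $t$-slice pieces are exactly the quantities controlled by Statement~4, hence $\lesssim E_0(\phi)\,t_1^{-2}$, and the fluxes across the slanted boundaries $r^*=\pm t/2$ are dominated by the same decaying energy, giving the stated estimate over the dyadic slab $t_1\le t_2\le(1.1)t_1$.
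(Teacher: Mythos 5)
This theorem is stated in the paper without proof: it is quoted verbatim from \cite{DRS} (``The following has been proved in \cite{DRS} and is collected for later use''), so there is no internal argument to compare against line by line. Your sketch of statements 1--4 correctly reproduces the \cite{DRS} strategy as the paper itself summarizes it in Sections 1.5--1.6 and 3: conservation of the $T$-flux, positivity of $K^{X_l,w^{X_l}}$ with boundary terms absorbed via the Hardy inequality of Lemma 6, the $Z$/$X$ dyadic bootstrap for the conformal energy (including the correct identification of the two genuine difficulties, the $t$-weight in $K^{Z,w^Z}$ on the compact window and the trapping degeneration at $r=3M$ forcing commutation with $\Omega$), and the $u\sim v\sim t_*$ pigeonholing for local energy decay. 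You are candid that the bootstrap in statement 3 is described rather than executed; since the paper also defers this to \cite{DRS}, that is an acceptable level of detail for everything except one step.

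The genuine gap is in your treatment of statement 5. You propose to apply the divergence identity for $J^{X_l,w^{X_l}}$ directly on $\mathcal{R}=\{t_1\le t\le t_2,\ |r^*|\le t/2\}$ and to dominate the fluxes ``across the slanted boundaries $r^*=\pm t/2$'' by the decaying energy. But the hypersurfaces $r^*=\pm t/2$ are \emph{timelike} (they have slope $|dr^*/dt|=\tfrac12<1$), so the flux of $J^{X_l,w^{X_l}}$ through them has no sign and is not controlled by statement 4 (which lives on $t$-slices) nor by the conformal energy in any direct way; you cannot discard or absorb it as written. The mechanism that actually works --- and the one this paper itself uses in the analogous localization for $\psi$ in Proposition 14 --- is to cut off the data on $\{t=t_1\}$ to be supported in $\{|r^*|\le 0.715\,t_1\}$, solve the homogeneous equation forward, note by finite speed of propagation that the cutoff solution agrees with $\phi$ on $\mathcal{R}$ (using $t_2\le(1.1)t_1$), bound the initial $T$-energy of the cutoff solution by $Ct_1^{-2}$ times the conformal energy (the cutoff derivative costs $t_1^{-1}$ and the resulting $\phi^2$ term is handled by the one-dimensional Poincar\'e-type estimate together with Proposition 3.2--3.3), and then apply the global-in-space statement 2 to the cutoff solution. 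Replacing your slanted-boundary step by this cutoff-and-domain-of-dependence argument closes statement 5; the rest of the proposal stands as a faithful outline of the cited result.
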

The following Hardy type inequality is also proved in \cite{DRS} and will be used throughout this paper.
\begin{lemma}
$$\int \left(1+|r^*|\right)^{-2}\phi^2\left(1-\mu \right)^{-\frac{1}{2}} dVol_{t_0}\leq C \int \left(\partial_{r^*}\phi\right)^2 \left(1-\mu \right)^{-\frac{1}{2}} dVol_{t_0}.$$
\end{lemma}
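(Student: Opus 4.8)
The plan is to reduce the inequality to a one–dimensional weighted Hardy inequality in $r^*$ and prove it by a single integration by parts. First I would note that, since $dVol_{t_0}=r^2\sqrt{1-\mu}\,dA\,dr^*$, the factor $(1-\mu)^{-\frac12}$ exactly cancels the $\sqrt{1-\mu}$ in the volume form, so that both sides carry the net weight $r^2\,dr^*\,dA$. By Tonelli it therefore suffices to prove, for each fixed $\omega\in\mathbb S^2$,
$$\int_{-\infty}^{\infty}(1+|r^*|)^{-2}\phi^2\,r^2\,dr^*\le C\int_{-\infty}^{\infty}(\partial_{r^*}\phi)^2\,r^2\,dr^*,$$
where $r=r(r^*)$ and $C$ depends only on $M$. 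The key structural inputs are that $\phi$ vanishes near spatial infinity ($r^*\to+\infty$) by our standing support assumption, whereas $\phi$ need not decay at the event horizon ($r^*\to-\infty$); the latter is precisely why the weight $(1+|r^*|)^{-2}$, which \emph{decays} toward the horizon, is essential.

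I would introduce the increasing function $\mathcal W(r^*)=\int_{-\infty}^{r^*}(1+|s|)^{-2}r(s)^2\,ds$, which is finite for every $r^*$ because near the horizon $r$ is bounded (by $2M$) while $(1+|s|)^{-2}$ is integrable at $-\infty$. Writing the left-hand weight as $\mathcal W'$ and integrating by parts gives $\int \mathcal W'\phi^2 = [\mathcal W\phi^2]_{-\infty}^{\infty}-2\int \mathcal W\,\phi\,\partial_{r^*}\phi$. Both boundary terms drop: at $+\infty$ because $\phi\equiv 0$ there, and at $-\infty$ because $\mathcal W(r^*)\to 0$ even though $\phi$ stays bounded. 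A Cauchy–Schwarz split $\mathcal W\phi\,\partial_{r^*}\phi=\big((1+|r^*|)^{-1}r\,\phi\big)\big(\mathcal W(1+|r^*|)r^{-1}\partial_{r^*}\phi\big)$ then reduces the claim to the pointwise bound
$$\mathcal W(r^*)\le C\,(1+|r^*|)^{-1}r^2,$$
after which absorbing one factor of the left-hand integral yields the inequality with constant $4C^2$.

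Let me now verify the pointwise bound, which is the heart of the matter. For $r^*\le 0$ one has $2M<r\le 3M$, so $r^2\sim M^2$ and $\mathcal W(r^*)\sim M^2\int_{-\infty}^{r^*}(1+|s|)^{-2}\,ds\sim M^2(1+|r^*|)^{-1}$, matching the right-hand side. For $r^*\ge 0$ the tortoise asymptotics give $r\sim 1+r^*$ (up to $M$-dependent constants), hence the integrand is $\sim 1$ and $\mathcal W(r^*)\sim (1+r^*)$, while $(1+|r^*|)^{-1}r^2\sim (1+r^*)$ as well. Continuity and positivity on the compact middle range fill the gap, giving the bound throughout.

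I expect the main obstacle to be exactly this balance: the problem sits at the borderline of Hardy's inequality (a crude Cauchy–Schwarz that peels off $(1+|r^*|)^{-2}$ directly would lose a logarithm at spatial infinity), so the sharp choice of the multiplier $\mathcal W$ — equivalently, verifying the Muckenhoupt-type condition $\sup_x \big(\int_{-\infty}^x (1+|r^*|)^{-2}r^2\,dr^*\big)\big(\int_x^\infty r^{-2}\,dr^*\big)<\infty$ — is what makes the estimate close. The only geometric input needed is the pair of asymptotics of $r(r^*)$ at the two ends, together with the fact that the near-horizon boundary term is killed by $\mathcal W(-\infty)=0$ rather than by any decay of $\phi$.
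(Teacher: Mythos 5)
Your proof is correct. Note that the paper does not actually prove this lemma: it is quoted from \cite{DRS}, so there is no internal proof to compare against; your argument is the standard weighted Hardy argument and is essentially what is done there. The reduction via $dVol_{t_0}=r^2\sqrt{1-\mu}\,dA\,dr^*$ is right, the multiplier $\mathcal W(r^*)=\int_{-\infty}^{r^*}(1+|s|)^{-2}r(s)^2\,ds$ is well defined because $r$ is bounded near the horizon, and the two regimes of the pointwise bound $\mathcal W(r^*)\leq C_M(1+|r^*|)^{-1}r^2$ check out, since $r^*=0$ at $r=3M$ so that $2M<r\leq 3M$ for $r^*\leq 0$ and $r\sim_M 1+r^*$ for $r^*\geq 0$. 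The one point worth making explicit is the vanishing of the boundary term at $r^*\to-\infty$: this follows either from the paper's standing smoothness assumption (so $\phi$ is bounded on the slice while $\mathcal W\to 0$), or, for general $\phi$ with finite right-hand side, from $|\phi(r^*)|=o(|r^*|^{1/2})$ obtained by integrating $\partial_{r^*}\phi\in L^2$ from a fixed point, which still gives $\mathcal W\phi^2\to 0$; performing the integration by parts on $[-R,R]$ and letting $R\to\infty$ makes this rigorous.
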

\begin{remark}
This can be written equivalently in local coordinates as
$$\int_{-\infty}^{\infty} \int_{\mathbb S^2} \frac{\phi^2}{\left(1+|r^*|\right)^{2}} r^2 dA dr^*\leq C \int_{-\infty}^{\infty} \int_{\mathbb S^2} \left(\partial_{r^*}\phi\right)^2 r^2 dA dr^*.$$
\end{remark}
We construct a vector field $X_0$ to control the spacetime integral of $\phi^2$ itself.
\begin{proposition}
$$\int \frac{\phi^2}{\left(1+|r^*|\right)^4} dVol\leq C\int J^{T}_{\mu}\left(\phi\right)n_{t_0}^{\mu}dVol_{t_0}.$$
\end{proposition}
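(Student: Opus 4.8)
The plan is to run the standard multiplier scheme for $X_0=f_0(r^*)\partial_{r^*}$ that the excerpt has already set up, choosing the profile $f_0$ so that the $\phi^2$-coefficient appearing in $K^{X_0,w^{X_0}}$ is positive and comparable to $(1+|r^*|)^{-4}$, and then to reduce the full statement to the spherically symmetric part of $\phi$, since the higher harmonics are already handled by the Dafermos--Rodnianski estimates. Accordingly I would first decompose $\phi=\phi_0+\phi_{\geq1}$ into its zeroth spherical harmonic $\phi_0$ and the remainder $\phi_{\geq1}=\sum_{l\geq1}\phi_l$.

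For the remainder, inequality (\ref{Xl}) (valid for $l\geq1$) already bounds $\int_{\mathbb S^2}(1+|r^*|)^{-4}(1-\mu)^{-1}\phi_l^2\,dA$ by $\int_{\mathbb S^2}K^{X_l,w^{X_l}}(\phi_l)\,dA$. Converting the $(1-\mu)^{-1}$ weight there into the $dVol$ measure is harmless since $0<1-\mu\leq1$ and $K^{X_l,w^{X_l}}(\phi_l)\geq0$; summing over $l\geq1$, integrating in $(r^*,t)$ and invoking part 2 of the Dafermos--Rodnianski theorem gives $\int(1+|r^*|)^{-4}\phi_{\geq1}^2\,dVol\leq C\int J^T_\mu(\phi)n^\mu_{t_0}\,dVol_{t_0}$, where the $\mathbb S^2$-integral of $\phi_{\geq1}^2$ splits orthogonally into $\sum_{l\geq1}\int_{\mathbb S^2}\phi_l^2\,dA$. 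By Parseval it therefore remains to control $\int(1+|r^*|)^{-4}\phi_0^2\,dVol$.

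For $\phi_0$ the essential simplification is that $\lapp\phi_0=0$, so the $|\nabb\phi_0|^2$ term in $K^{X_0,w^{X_0}}$ — whose coefficient $\tfrac12\frac{2-3\mu}{r}f_0$ changes sign at the photon sphere $r=3M$ — drops out entirely. Using identity (\ref{Xid}), we are left with $K^{X_0,w^{X_0}}(\phi_0)=\frac{f_0'}{1-\mu}(\partial_{r^*}\phi_0)^2-\tfrac18\Box_g w^{X_0}\,\phi_0^2$ with $w^{X_0}=2f_0'+\frac{4(1-\mu)}{r}f_0$. I would take $f_0$ bounded and nondecreasing ($f_0'\geq0$), so the first term is nonnegative, and engineered so that $-\tfrac18\Box_g w^{X_0}=-\tfrac18(1-\mu)^{-1}r^{-2}\partial_{r^*}\big(r^2\partial_{r^*}w^{X_0}\big)\geq c\,(1+|r^*|)^{-4}$. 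Integrating $\nabla^\mu J^{X_0,w^{X_0}}_\mu(\phi_0)=K^{X_0,w^{X_0}}(\phi_0)$ over the slab $\{t_0\leq t\leq t_*\}$ and letting $t_*\to\infty$, the spacetime integral of $K^{X_0,w^{X_0}}(\phi_0)$ is bounded by the boundary fluxes $\int J^{X_0,w^{X_0}}_\mu(\phi_0)n^\mu_t\,dVol_t$. Because $f_0$ is bounded with derivatives decaying in $r^*$, and the term linear in $\phi_0$ in that flux can be absorbed via the Hardy inequality (the Lemma) together with Cauchy--Schwarz, these fluxes are controlled by the conserved $T$-energy $\int J^T_\mu(\phi)n^\mu_{t_0}\,dVol_{t_0}$; since $K^{X_0,w^{X_0}}(\phi_0)\geq0$ the slab integrals increase to a finite limit. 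Combining with the $\phi_{\geq1}$ bound yields the proposition.

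The main obstacle is the explicit construction of the profile $f_0$: one must simultaneously arrange $f_0'\geq0$, $f_0$ bounded with $f_0'$ and $\tfrac{1-\mu}{r}f_0$ of size $O((1+|r^*|)^{-2})$ for the boundary and Hardy control, and a clean lower bound $c\,(1+|r^*|)^{-4}$ on the third-order expression $-(1-\mu)^{-1}r^{-2}\partial_{r^*}(r^2\partial_{r^*}w^{X_0})$. The delicate region is the event horizon, where $r^*\to-\infty$ and $1-\mu\to0$ exponentially: the factors $\frac{f_0'''}{1-\mu}$ and $\frac{f_0'}{1-\mu}$ threaten to blow up and must be balanced against the merely polynomial target weight $(1+|r^*|)^{-4}$, so the decay of $f_0$ and its derivatives toward the horizon has to be tuned carefully. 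Matching this to the $r\to\infty$ regime (where $\mu\to0$) is the technical crux of the argument.
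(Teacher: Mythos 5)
Your overall scheme coincides with the paper's: decompose into spherical harmonics, dispose of $l\geq 1$ via inequality (\ref{Xl}) together with part 2 of the Dafermos--Rodnianski theorem, and then design a dedicated multiplier $X_0=f_0\partial_{r^*}$ for the zeroth mode, exploiting $\nabb\phi_0=0$ to kill the indefinite photon-sphere term, with the boundary fluxes absorbed into the conserved $T$-energy by the Hardy inequality and Cauchy--Schwarz. All of that is correct and is exactly the paper's setup.

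The gap is that you never exhibit $f_0$, and once the setup is in place the existence of such an $f_0$ \emph{is} the proposition; you correctly flag it as the technical crux and then stop there. The paper's resolution is worth noting. It takes $f_0=-M^3/(1+4\mu^{-2})=-\mu^3r^3/(8(1+4\mu^{-2}))$, a bounded, negative, increasing function of $r$ alone. Since $\partial_{r^*}=(1-\mu)\partial_r$, every $r^*$-derivative of such an $f_0$ carries a factor of $(1-\mu)$, so the quantities $\frac{f_0'''}{1-\mu}$ and $\frac{f_0'}{1-\mu}$ that you rightly identify as dangerous at the horizon are automatically regular there: the delicate tuning toward $r^*=-\infty$ is obtained for free by working with smooth functions of $r$ (equivalently of $\mu$) rather than of $r^*$. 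The sign of the $\phi_0^2$-coefficient then reduces to checking that an explicit degree-seven polynomial in $\mu$ is nonnegative on $[0,1]$, which the paper verifies by hand in two cases. One further misalignment: you ask for the pointwise lower bound $-\frac{1}{8}\Box_g w^{X_0}\geq c(1+|r^*|)^{-4}$, but the paper's coefficient degenerates like $\mu^6\sim r^{-6}$ as $r\to\infty$ (while being bounded below by a positive constant near the horizon); the paper instead argues directly that $\phi_0^2\leq CK^{X_0,w^{X_0}}(\phi_0)$ and recovers sharper weights in the large-$r^*$ region only later, with the separate multiplier $\tilde X$ of the next proposition. Insisting on your stronger pointwise bound at spatial infinity would make the construction harder than the argument actually requires.
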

\begin{proof}
We first notice that we already have control of a weighted $L^2$-norm of the non-zeroth spherical harmonics. This is because by (\ref{Xl}),
$$\int_{\mathbb S^2} \frac{\phi_l^2}{\left(1+|r^*|\right)^4 \left(1-\mu \right)} dA\leq C \int_{\mathbb S^2} K^{X_l, w^{X_l}}\left(\phi_l\right) dA$$ for $l \ge 1$. This together with Theorem 5.2 would give
$$\int \frac{\phi_l^2}{\left(1+|r^*|\right)^4 \left(1-\mu \right)} dVol\leq C \int J^T_\mu\left(\phi\right) n^{\mu}_{t_0} dVol_{t_0}$$ for $l \ge 1$.
So it suffices to consider the zeroth spherical harmonic.\\
Define $X_0=f_0 \partial_{r^*}$, with $f_0\left(r^*\right)=f_0\left(r\right)=-\frac{M^3}{\left(1+4\mu^{-2}\right)}=-\frac{\mu^3r^3}{8\left(1+4\mu^{-2}\right)}$.\\
Suppose we act with $X_0$ on the zeroth spherical harmonic of $\phi_0$.\\
Using (\ref{Xid}),
\begin{equation*}
\begin{split}
K^{X_0,w^{X_0}}\left(\phi_0\right)=&\frac{f_0'\left(r^*\right)\left(\partial_{r^*}\phi_0 \right)^2}{1-\mu}+\frac{1}{2}|\nabb\phi_0|^2 \left(\frac{2-3\mu}{r} \right)f_0\left(r^*\right)\\
&-\frac{1}{4}\left(\frac{1}{1-\mu}f_0'''\left(r^*\right)+\frac{4}{r}f_0''\left(r^*\right)+\frac{\mu}{r^2}f_0'\left(r^*\right)-\frac{2\mu}{r^3}\left(3-4\mu \right) f_0\left(r^*\right)\right)\phi_0^2\\
=&\frac{f_0'\left(r^*\right)\left(\partial_{r^*}\phi_0 \right)^2}{1-\mu}-\frac{1}{4}\left(\frac{1}{1-\mu}f_0'''\left(r^*\right)+\frac{4}{r}f_0''\left(r^*\right)+\frac{\mu}{r^2}f_0'\left(r^*\right)-\frac{2\mu}{r^3}\left(3-4\mu \right) f_0\left(r^*\right)\right)\phi_0^2,\\
J^{X_0,w^{X_0}}_{\mu}\left(\phi_0\right) n_t^{\mu}=&\frac{1}{\sqrt{1-\mu}}f_0\left(r^*\right)\partial_t\phi_0\partial_{r^*}\phi_0+\frac{1}{2 \sqrt{1-\mu}}\left(f_0'\left(r^*\right)+\frac{2\left(1-\mu \right)}{r}f_0\left(r^*\right)\right)\left(\partial_t\phi_0\right)\phi_0,
\end{split}
\end{equation*}
where we have used $\nabb\phi_0=0$.\\
We would have to show firstly that $K^{X_0,w^{X_0}}\left(\phi_0\right) \ge 0$ and controls $\phi^2$, and secondly that $J^{X_0,w^{X_0}}_{\mu}\left(\phi_0\right) n^{\mu}$ is controllable by $J^{T}_{\mu}\left(\phi_0\right) n^{\mu}$. We first compute the derivatives of $f_0$:
\begin{equation*}
\begin{split}
f'_0\left(r^*\right)=&\left(1-\mu \right)\partial_{r}f_0\left(r\right)\\
=&\frac{\mu r^2\left(1-\mu \right)}{\left(1+4\mu^{-2}\right)^2}\geq0\\
f_0''\left(r^*\right)=&\left(1-\mu \right)^2\partial_r^2 f_0+\frac{\mu\left(1-\mu\right)}{r}\partial_r f_0\\
=&\left(1-\mu \right)^2\left(-\frac{16r}{\mu \left(1+4\mu^{-2}\right)^3}+\frac{2\mu r}{\left(1+4\mu^{-2}\right)^2}\right)+\frac{\mu^2 r\left(1-\mu \right)}{\left(1+4\mu^{-2}\right)^2}\\
f_0'''\left(r^*\right)=&\left(1-\mu \right)^3\partial_r^3 f_0+\frac{3\mu \left(1-\mu \right)^2}{r}\partial_r^2 f_0+\frac{\mu^2 \left(1-\mu \right)}{r^2}\partial_r f_0-\frac{2\mu\left(1-\mu \right)^2}{r^2}\partial_r f_0\\
=&\left(1-\mu\right)^3 \left( \frac{384}{\mu^3 \left( 1+4\mu^{-2} \right)^4}-\frac{48}{\mu \left(1+4\mu^{-2}\right)^3}\right)+\frac{3\mu\left(1-\mu\right)^2}{r}\left(-\frac{16r}{\mu \left(1+4\mu^{-2}\right)^3}+\frac{\mu r}{\left(1+4\mu^{-2}\right)^2}\right)\\
&+\frac{\mu^2 \left(1-\mu\right)\left(3 \mu -2\right)}{\left(1+4\mu^{-2}\right)^2}.
\end{split}
\end{equation*}
A computation shows that 
\begin{equation*}
\begin{split}
&\frac{1}{1-\mu}f_0'''+\frac{4}{r}f_0''+\frac{\mu}{r^2}f_0'-\frac{2\mu}{r^3}\left(3-4\mu\right)f_0\\
=&-\frac{\mu^6\left(192+\mu \left(128+\mu \left(-784+\mu \left(464+\mu \left(-28+\mu \left(52+\mu \left(-3+4\mu \right)\right)\right)\right)\right)\right)\right)}{4\left(4+\mu^{2}\right)^4}.
\end{split}
\end{equation*}
We need to show that $192+\mu \left(128+\mu \left(-784+\mu \left(464+\mu \left(-28+\mu \left(52+\mu \left(-3+4\mu \right)\right)\right)\right)\right)\right)\geq 0$ for $0\leq \mu \leq 1$.
Case 1: $\frac{11}{20} \leq \mu \leq 1$\\
$192+128 \mu - 784 \mu^2 +464\mu^3 =16\left(-12-20 \mu +29 \mu^2 \right)\left(\mu -1\right)\geq 0.$\\
$52-3\mu +4 \mu^2$ reaches its minimum at $\frac{3}{8}$. Hence, $52-3\mu +4 \mu^2 \geq \frac{823}{16}.$\\
$-28+\mu \left(52-3\mu +4 \mu^2\right) \geq-28+\frac{11}{20}\frac{823}{16}\geq 0.$\\
Case 2: $0 \leq \mu \leq \frac{11}{20}$\\
$464-28\mu +\frac{823}{16}\mu^2$ has negative discriminant, hence $\geq 0$.\\
Also, for this range of $\mu$, $192+128 \mu-784\mu^2 \geq 0$.\\
Therefore, $K^{X_0,w^{X_0}}\left(\phi_0\right) \ge 0$.
Moreover, $\phi_0^2 \leq C K^{X_0,w^{X_0}}\left(\phi_0\right)$.\\
It now remains only to control the boundary terms. Using Lemma 6 and Cauchy-Schwarz,
\begin{equation*}
\begin{split}
&\int J^{X_0,w^{X_0}}_{\mu}\left(\phi_0\right) n^{\mu} dVol_{t}\\
=&\int \frac{1}{\sqrt{1-\mu}}f_0\left(r^*\right)\partial_t\phi_0\partial_{r^*}\phi_0+\frac{1}{2 \sqrt{1-\mu}}\left(f_0'\left(r^*\right)+\frac{2\left(1-\mu \right)}{r}f_0\left(r^*\right)\right)\left(\partial_t\phi_0\right)\phi_0 dVol_{t_0}\\
\leq &C \int \frac{1}{\sqrt{1-\mu}}\left(\left(\partial_t\phi_0\right)^2+\left(\partial_{r^*}\phi_0\right)^2+\frac{1}{\left(1+|r^*|\right)^2}\phi_0^2\right) dVol_{t_0}\\
\leq &C \int \frac{1}{\sqrt{1-\mu}}(\left(\partial_t\phi_0\right)^2+\left(\partial_{r^*}\phi_0)^2\right) dVol_{t_0}\\
\leq &C \int J^{T}_{\mu}\left(\phi_0\right) n_{t_0}^{\mu} dVol_{t_0}.
\end{split}
\end{equation*}
\end{proof}
We would like to construct a vector field $\tilde{X}=\tilde{f}\left(r^*\right)\partial_{r^*}$ so as to improve the weights in $r$ of  the spacetime integral that can be controlled. More precisely, we have the following:
\begin{proposition}
$$\int_{t_0}^{t_*}\int_1^{\infty}\int_{\mathbb S^2} \left(r^{-1-\delta}\left(\partial_{r^*}\phi\right)^2+r^{-3-\delta}\phi^2\right)dVol\leq  C\int J^T_\mu \left(\phi\right) n^{\mu}_{t_0} dVol_{t_0},$$ 
$$\int_{t_0}^{t_*}\int_1^{\infty}\int_{\mathbb S^2} r^{-1}|\nabb\phi|^2 dVol\leq  C \sum_{k=0}^1 \int J^T_\mu \left(\Omega^k\phi\right) n^{\mu}_{t_0} dVol_{t_0},$$ 
for $0<\delta<\frac{1}{2}$.
\end{proposition}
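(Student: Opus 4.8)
The plan is to apply the multiplier identity (\ref{Xid}) to a vector field $\tilde X=\tilde f(r^*)\partial_{r^*}$ whose profile is tuned to produce exactly the weights $r^{-1-\delta}$, $r^{-3-\delta}$ and $r^{-1}$ at spatial infinity, and then to run the divergence identity for the modified current between $\{t=t_0\}$ and $\{t=t_*\}$. Concretely I would take $\tilde f$ nonnegative and increasing, vanishing for $r^*\le \tfrac12$, and with $\frac{d\tilde f}{dr}\sim r^{-1-\delta}$ for large $r$, so that $\tilde f$ increases to a positive constant as $r\to\infty$. Since $\frac{\tilde f'}{1-\mu}=\frac{d\tilde f}{dr}$, the coefficient of $(\partial_{r^*}\phi)^2$ in $K^{\tilde X,w^{\tilde X}}$ is then $\sim r^{-1-\delta}$; since $\tilde f\to\mathrm{const}>0$ and $2-3\mu\to 2$, the coefficient $\tfrac12\frac{2-3\mu}{r}\tilde f$ of $|\nabb\phi|^2$ is $\sim r^{-1}$; and differentiating shows the $\phi^2$ coefficient is asymptotically $\tfrac14(1+\delta)(2-\delta)r^{-3-\delta}>0$. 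Thus on $\{r^*\ge 1\}$, where $2-3\mu$ is bounded below, all three terms of $K^{\tilde X,w^{\tilde X}}(\phi)$ are nonnegative and bound the desired integrands from above.

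The computational heart of the argument, as in the construction of $X_0$, is to verify that the zeroth-order coefficient
$$-\tfrac14\left(\tfrac{1}{1-\mu}\tilde f'''+\tfrac{4}{r}\tilde f''+\tfrac{\mu}{r^2}\tilde f'-\tfrac{2\mu}{r^3}(3-4\mu)\tilde f\right)$$
is genuinely nonnegative, and of the claimed size, for \emph{all} $r^*\ge 1$ and not merely asymptotically. After inserting an explicit admissible $\tilde f$ and clearing denominators this reduces to a sign check for a rational function of $\mu$ on a subinterval of $(0,1)$, exactly of the flavor of the polynomial positivity verified in the $X_0$ proof. Pinning down a single profile $\tilde f$ that simultaneously realizes all three weights while keeping this coefficient positive is the main obstacle.

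With positivity in hand I would close as follows. Integrating (\ref{Xid}) over the slab and using $K^{\tilde X,w^{\tilde X}}=0$ for $r^*\le\tfrac12$ gives
$$\int_{\{r^*\ge 1\}} K^{\tilde X,w^{\tilde X}}(\phi)\,dVol=\int_{t_0} J^{\tilde X,w^{\tilde X}}_\mu(\phi)n^\mu-\int_{t_*} J^{\tilde X,w^{\tilde X}}_\mu(\phi)n^\mu-\int_{\{\frac12\le r^*\le 1\}} K^{\tilde X,w^{\tilde X}}(\phi)\,dVol.$$
The boundary fluxes involve only $\partial_t\phi$, $\partial_{r^*}\phi$ and $\phi$, so Cauchy--Schwarz together with the Hardy inequality of the Lemma bounds each slice flux by $\int J^T_\mu(\phi)n^\mu$ on that slice, and conservation of the $T$-energy (item (1) of the Dafermos--Rodnianski theorem) replaces the $t_*$ flux by the $t_0$ flux. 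In the cutoff range $\tfrac12\le r^*\le 1$ the $(\partial_{r^*}\phi)^2$ and $|\nabb\phi|^2$ terms carry the good sign, so the only genuine error is a zeroth-order term $\int|\,\cdot\,|\,\phi^2$ supported in a compact range of $r^*$, controlled by the $X_0$-type bound $\int \phi^2(1+|r^*|)^{-4}\,dVol\le C\int J^T_\mu(\phi)n^\mu_{t_0}$. This yields the first inequality with right-hand side $\int J^T_\mu(\phi)n^\mu_{t_0}$. The angular inequality follows by the same scheme, the integrand $r^{-1}|\nabb\phi|^2$ being dominated pointwise on $\{r^*\ge1\}$ by the angular term of $K^{\tilde X,w^{\tilde X}}$; the extra commutation with $\Omega$ on the right-hand side enters because the only integrated-local-energy control of $|\nabb\phi|^2$ that is uniform down to the photon sphere is the degenerate Dafermos--Rodnianski estimate (the fifth item of the earlier coercivity proposition, via \eqref{Xl}), which—owing to the trapping loss of derivative noted in the Remark—is available only after commuting $\Box_g$ with $\Omega$. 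Since $\Box_g(\Omega\phi)=0$, feeding $\Omega^k\phi$ into the argument produces the sum $\sum_{k=0}^1\int J^T_\mu(\Omega^k\phi)n^\mu_{t_0}$.
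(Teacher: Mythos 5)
Your overall scheme is the paper's: the multiplier $\tilde X=\tilde f\left(r^*\right)\partial_{r^*}$ with $\tilde f$ increasing from $0$ to a positive constant and $\tilde f'\sim r^{-1-\delta}$, the identity (\ref{Xid}), boundary fluxes handled by Cauchy--Schwarz, the Hardy inequality and conservation of the $T$-energy, and a compact-region error absorbed by the $X_0$-type bound. The gap is exactly the one you flag yourself: your closing argument needs the zeroth-order coefficient of $K^{\tilde X,w^{\tilde X}}$ to be nonnegative on \emph{all} of $\{r^*\geq 1\}$, while your error analysis covers only the cutoff strip $\frac12\leq r^*\leq 1$ (and only its $\phi^2$ term). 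This positivity is not merely a tedious sign check --- it fails for the natural profile: near $r^*=1$ the contribution $\frac{1}{1-\mu}\tilde f'''$ is of size $\sim\delta$ and dominates the other three terms (which carry extra factors of $r^{-1}$, $r^{-2}$, $r^{-3}$ with $r$ close to $3M$), so the bracket is positive and the $\phi^2$ coefficient of $K^{\tilde X,w^{\tilde X}}$ is \emph{negative} there. (A minor slip: the asymptotic coefficient is $\frac14\delta\left(1+\delta\right)\left(2-\delta\right)r^{-3-\delta}$, with a factor of $\delta$.)

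The paper sidesteps this entirely: it takes $\tilde f=\chi\left(r^*\right)\left(1-\left(1+r^*\right)^{-\delta}\right)$ with $\chi\equiv1$ only for $r^*\geq\max\{100,100M\}$, verifies $K^{\tilde X,w^{\tilde X}}\geq0$ with the stated coercivity only in that far region (this is where $\delta<\frac12$ and the smallness of $\mu$ are used to beat the positive $\tilde f'''$ term), and treats the whole intermediate range $1\leq r^*\leq\max\{100,100M\}$ as an error region, on which the target integrands are already controlled: $\left(\partial_{r^*}\phi\right)^2$ by Theorem 5.2 together with Proposition 3.4, $\phi^2$ by Proposition 7, and $|\nabb\phi|^2$ by Proposition 3.5 --- the last being where the commutation with $\Omega$ genuinely enters. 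Your proof closes once you make the same move: do not insist on pointwise positivity near $r^*=1$; enlarge the error region to $1\leq r^*\leq R_0$ with $R_0\gg M$; observe that there the $\left(\partial_{r^*}\phi\right)^2$ and $|\nabb\phi|^2$ coefficients still carry the good sign (so only the bounded $\phi^2$ coefficient needs to be absorbed, via Proposition 7); and verify positivity only for $r^*\geq R_0$, where the computation reduces to the straightforward asymptotic one you describe.
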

\begin{remark}
The loss of derivative above is unnecessary because we are considering only a subregion of $\{r^* > 0 \}$. One can construct yet another variant of the vector field $X$ to achieve the above estimate without any loss of derivatives. However, since this would not improve the regularity in our final result, it is not pursued here.
\end{remark}
\begin{proof}
Let $\tilde{X}=\tilde{f}\left(r^*\right)\partial_{r^*}$, where $\tilde{f}=\chi\left(r^*\right)\left(1-\frac{1}{\left(1+r^*\right)^{\delta}}\right)$ and $\chi$ is a cutoff function satisfying $$\chi=
\left\{\begin{array}{clcr}0&r^*\le 1\\1&r^*\ge \max\{100,100M\}\end{array}\right..$$
We recall (\ref{Xid}):
\begin{equation*}
\begin{split}
K^{\tilde{X},w^{\tilde{X}}}\left(\phi \right)=&\frac{\tilde{f}'\left(r^*\right)\left(\partial_{r^*}\phi \right)^2}{1-\mu}+\frac{1}{2}|\nabb\phi|^2 \left(\frac{2-3\mu}{r} \right)\tilde{f}\left(r^*\right)\\
&-\frac{1}{4}\left(\frac{1}{1-\mu}\tilde{f}'''\left(r^*\right)+\frac{4}{r}\tilde{f}''\left(r^*\right)+\frac{\mu}{r^2}\tilde{f}'\left(r^*\right)-\frac{2\mu}{r^3}\left(3-4\mu \right) \tilde{f}\left(r^*\right)\right)\phi^2,\\
J^{\tilde{X},w^{\tilde{X}}}_{\mu}\left(\phi \right) n_t^{\mu}=&\frac{1}{\sqrt{1-\mu}}\tilde{f}\left(r^*\right)\partial_t\phi\partial_{r^*}\phi+\frac{1}{2 \sqrt{1-\mu}}\left(\tilde{f}'\left(r^*\right)+\frac{2\left(1-\mu \right)}{r}\tilde{f}\left(r^*\right)\right)\left(\partial_t\phi\right)\phi,
\end{split}
\end{equation*}
Since we already have control of the spacetime integrals on a compact set using Theorem 5 and Proposition 7, we only have to show that $K^{\tilde{X},w^{\tilde{X}}}\left(\phi \right) \ge 0 $ for $r^* \geq \max\{100,100M\}$.
For $r^*\geq \max\{100,100M\}$, we have 
$$\tilde{f}'\left(r^*\right)=\frac{\delta}{\left(1+r^*\right)^{1+\delta}}$$
$$\tilde{f}''\left(r^*\right)=-\frac{\delta\left(1+\delta\right)}{\left(1+r^*\right)^{2+\delta}}$$
$$\tilde{f}'''\left(r^*\right)=\frac{\delta\left(1+\delta\right)\left(2+\delta\right)}{\left(1+r^*\right)^{3+\delta}}.$$
Clearly, the coefficient of $\left(\partial_{r^*}\phi \right)^2$ and $|\nabb\phi|^2$ in $K^{\tilde{X},w^{\tilde{X}}}\left(\phi \right)$ is positive for $r^*\geq \max\{100,100M\}$.
We now study the coefficient of $\phi^2$ in $K^{\tilde{X},w^{\tilde{X}}}\left(\phi \right)$ for $r^*\geq \max\{100,100M\}$:
\begin{equation*}
\begin{split}
&\frac{1}{1-\mu}\tilde{f}'''+\frac{4}{r}\tilde{f}''+\frac{\mu}{r^2}\tilde{f}'-\frac{2\mu}{r^3}\left(3-4\mu\right)\tilde{f}\\
=&\frac{1}{1-\mu}\frac{\delta\left(1+\delta\right)\left(2+\delta\right)}{\left(1+r^*\right)^{3+\delta}}-\frac{4\delta\left(1+\delta\right)}{r\left(1+r^*\right)^{2+\delta}}+\frac{2M\delta}{r^2\left(1+r^*\right)^{1+\delta}}\\
&-\frac{12M}{r^3\left(1+r^*\right)^\delta}+\frac{32M^2}{r^5\left(1+r^*\right)^\delta}-\frac{2\mu}{r^3}\left(3-4\mu\right)\\
\leq&\frac{3\delta\left(1+\delta\right)\left(2+\delta\right)}{2\left(1+r^*\right)^{3+\delta}}-\frac{4\delta\left(1+\delta\right)}{r\left(1+r^*\right)^{2+\delta}}+\frac{2M\delta}{r^2\left(1+r^*\right)^{1+\delta}}-\frac{12M}{r^3\left(1+r^*\right)^\delta}+\frac{32M^2}{r^5\left(1+r^*\right)^\delta}\\
\leq&\frac{\delta\left(1+\delta\right)}{r\left(1+r^*\right)^{2+\delta}}\left(\frac{3\delta}{2}-1\right)+\frac{M}{r^3\left(1+r^*\right)^\delta}\left(2\delta-12+\frac{32}{100}\right)\\
<&0.
\end{split}
\end{equation*}
Hence $K^{\tilde{X},w^{\tilde{X}}}\left(\phi \right) \ge 0 $ for $r^*\geq \max\{100,100M\}$.\\ Moreover, on this region of $r^*$, $\left(r^{-1-\delta}\left(\partial_{r^*}\phi\right)^2+r^{-3-\delta}\phi^2 + r^{-1}|\nabb\phi|^2\right) \leq C K^{\tilde{X},w^{\tilde{X}}}\left(\phi \right)$.\\
Finally, we have $\int J^{\tilde{X},w^{\tilde{X}}}_{\mu}\left(\phi \right) n^{\mu} dVol_{t}\leq C J^{T}_{\mu}\left(\phi\right) n^{\mu} dVol_{t}$ using Lemma 6 and Cauchy-Schwarz exactly as in Proposition 7.
\end{proof}
\begin{remark}
The weights in the Proposition are the same as those for Minkowski space. Since Schwarzschild is asymptotically flat, they are the expected weights.
\end{remark}
\begin{corollary}
In local coordinates, Theorem 5, Proposition 7 and 8 imply via Proposition 3 the following bounds:
\begin{enumerate}
\item $\int \frac{1}{\sqrt{1-\mu }}\left(\left(\partial_{r^*}\phi \right)^2+\left(\partial_t\phi \right)^2+ \left(1-\mu \right)|\nabb\phi|^2\right) dVol_{t_*}\leq C\int J^T_\mu\left(\phi\right) n^{\mu}_{t_0} dVol_{t_0}$,
\item $\int \frac{1}{\sqrt{1-\mu }} \left(u^2\left(\partial_u\phi \right)^2+v^2\left(\partial_v\phi \right)^2 +\left(1-\mu \right)|\nabb\phi|^2\right) dVol_{t_*}\leq C E_0\left(\phi\right)$,
\item $\int \sqrt{1-\mu }\frac{\left(r^*\right)^2+t^2}{r^2} \phi^2 dVol_{t_*}\leq C E_0\left(\phi\right) t_*^{-2}$,
\item $\int_{t_1}^{t_2}\int_{-\frac{t}{2}}^{\frac{t}{2}} \frac{r^{1-\delta}\left(\partial_{r^*}\phi\right)^2}{\left(1+|r^*|\right)^2\left(1-\mu \right)} + \frac{r^{1-\delta}\phi^2}{\left(1+|r^*|\right)^4} dVol \leq C E_0\left(\phi\right) t_1^{-2}$, where $t_1 \leq t_2 \leq \left(1.1\right)t_1$,
\item $\int_{t_1}^{t_2}\int_{-\frac{t}{2}}^{\frac{t}{2}} \frac{r^3|\nabb\phi|^2}{\left(1+|r^*|\right)^4 \left(1-\mu \right)} dVol \leq C \sum_{k=0}^1 E_0\left(\Omega^k\phi\right) t_1^{-2}$, where $t_1 \leq t_2 \leq \left(1.1\right)t_1$.
\end{enumerate}
\end{corollary}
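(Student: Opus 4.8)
The five estimates are all consequences of feeding the energy bounds of Theorem~5 (supplemented by Propositions~7 and~8) into the coercivity inequalities of Proposition~3, so the work is bookkeeping rather than fresh analysis. I would treat the single-slice bounds (1)--(3) first and the spacetime-slab bounds (4)--(5) afterward.

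For (1), the integrand on the left is exactly the quantity that Proposition~3(1) bounds pointwise by $C\,J^T_\mu(\phi)n^\mu_{t_*}$; integrating that pointwise inequality over $\{t=t_*\}$ against $dVol_{t_*}$ and then applying the conservation law Theorem~5(1) to replace $\int J^T_\mu(\phi)n^\mu_{t_*}\,dVol_{t_*}$ by its value at $t_0$ gives the claim. For (2), I would invoke Proposition~3(2); since $u^2+v^2=\tfrac12(t_*^2+(r^*)^2)\ge\tfrac12$ on $\{t=t_*\}$, the bare term $(1-\mu)|\nabb\phi|^2$ on the left is dominated by the $(u^2+v^2)$-weighted angular term that actually appears in Proposition~3(2), and boundedness of the conformal energy, Theorem~5(3), bounds the right-hand side by $E_0(\phi)$. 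Statement (3) uses the same conformal-energy input, but now I would read the weight as a gain: Proposition~3(3) controls $\int\sqrt{1-\mu}\,\frac{(r^*)^2+t_*^2}{r^2}\phi^2\,dVol_{t_*}$ by $C\,E_0(\phi)$, and since $(r^*)^2+t_*^2\ge t_*^2$ on the slice, dividing out the factor $t_*^2$ turns the $O(1)$ bound into the asserted $t_*^{-2}$ decay of the $\phi^2/r^2$ piece.

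The slab estimates (4)--(5) carry the real content. Here I would convert the sphere-normalized inequalities Proposition~3(4)--(5) into spacetime-normalized ones by multiplying through by the Jacobian $r^2(1-\mu)$ of $dVol=r^2(1-\mu)\,dA\,dr^*\,dt$, integrate over the dyadic slab $\{t_1\le t\le t_2,\ |r^*|\le t/2\}$, and then apply the integrated local-energy-decay bound Theorem~5(5) to harvest the $t_1^{-2}$ factor. Two features must be handled. First, the coefficient of $|\nabb\phi|^2$ in $\sum_l K^{X_l,w^{X_l}}$ degenerates at the photon sphere $r=3M$ (the trapping already visible in~(\ref{Xl})), so the angular term cannot be recovered from $\phi$ directly; I would instead apply the coercivity bound to $\Omega\phi$, which is precisely why the right side of (5) carries $\sum_{k=0}^1 E_0(\Omega^k\phi)$. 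Second, the $l=0$ part of the $\phi^2$ term is invisible to the $l\ge1$ coercivity, and for it I would substitute the dedicated zeroth-harmonic estimate Proposition~7.

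The main obstacle is the far region, $r\to\infty$ within $\{|r^*|\le t/2\}$, where the raw $X_l$-weights are too weak to generate the $r^{1-\delta}$ and $r^3$ growth demanded on the left of (4) and (5). I would supply this from the improved-weight field $\tilde X$ of Proposition~8, whose $K^{\tilde X,w^{\tilde X}}$ controls exactly $r^{-1-\delta}(\partial_{r^*}\phi)^2+r^{-3-\delta}\phi^2+r^{-1}|\nabb\phi|^2$ for large $r^*$; after multiplying by the Jacobian and using $r^*\sim r$, hence $(1+|r^*|)\sim r$, these match the stated weights. The delicate point is that Proposition~8 alone carries no time decay, so to keep the $t_1^{-2}$ rate I would adapt the argument of Proposition~8 to the dyadic slab, feeding the resulting boundary terms through the conformal and local-energy-decay bounds Theorem~5(3)--(5) rather than the conserved $T$-flux, exploiting that on $\{|r^*|\le t/2\}$ one has simultaneously $r\lesssim t$ and $u,v\gtrsim t$. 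Verifying that these weight identifications are mutually consistent across the three regimes---the horizon $r\to 2M$, the photon sphere $r=3M$, and spatial infinity---is the bulk of the remaining routine checking.
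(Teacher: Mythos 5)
Your proposal is correct and follows exactly the derivation the paper intends: the corollary is stated without proof as a consequence of Theorem 5, Propositions 7 and 8 via Proposition 3, and your bookkeeping --- including the key observation that the $t_1^{-2}$ rate in (4)--(5) in the large-$r$ portion of the slab cannot come from Proposition 8 as a global-in-time bound, but requires rerunning the cutoff/localization argument behind Theorem 5(4)--(5) with the improved multiplier $\tilde{X}$ (and $X_0$, and the $\Omega$-commuted fields), harvesting the $t_1^{-2}$ from the conformal flux of the truncated data --- is the right way to fill it in. Your reading of (3) is also the intended one: the $t_*^{-2}$ on the right should be understood as applying to the $\phi^2/r^2$ piece after dividing out $\left(r^*\right)^2+t^2\geq t_*^2$, since the inequality as literally printed would demand a $t_*^{-4}$ local decay of $\phi$ that is not available from the cited results.
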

\section{Estimates for $\psi$}
In this section, we would like to imitate \cite{DRS} and prove an analogue of Theorem 5. For technical reasons, however, we will need to lose an arbitrarily small power of $t$.
\begin{theorem}
\begin{enumerate}
\item $\int J^T_\mu\left(\psi\right) n^{\mu}_{t_*} dVol_{t_*}\leq C \left(\int J^T_\mu\left(\psi\right) n^{\mu}_{t_0} dVol_{t_0}+E_1\left(\phi\right)\right)$,
\item $\int J^Z_\mu\left(\psi\right) n^{\mu}_{t_*} dVol_{t_*} \leq C t_*^\delta E_1\left(\phi\right) $,
\item $\int_{\{-\frac{t_*}{2} \leq r^* \leq \frac{t_*}{2}\}} J^T_\mu\left(\psi\right) n^{\mu}_{t_*} dVol_{t_*} \leq C E_1\left(\phi\right) t_*^{-2+\delta}$,
\item $\int_{t_1}^{t_2} \int_{-\frac{t}{2}}^{\frac{t}{2}} \displaystyle\sum_l K^{X_l,w^{X_l}} \left(\psi\right) dVol \leq C E_1\left(\phi\right) t_1^{-2+\delta},$ where $t_1 \leq t_2 \leq \left(1.1\right)t_1$,
\end{enumerate}
\end{theorem}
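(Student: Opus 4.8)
\emph{Overall strategy.} I would prove all four estimates simultaneously by transcribing the Dafermos--Rodnianski iteration behind Theorem 5 to the \emph{inhomogeneous} equation for $\psi=S\phi$. By Proposition 4 (and the Remark following it) one has $\Box_g\psi=F$ with $F=g_1(r^*)\partial_{r^*}\phi+g_2(r^*)\lapp\phi$ and $|g_1|,\,|g_2|/r\sim(1+|r^*|)(\log r)_+\,r^{-3}$; the initial fluxes of $\psi$ that start the argument sit inside $E_0(S\phi)$, hence inside $E_1(\phi)$. I would apply the inhomogeneous divergence identities of Proposition 2 with the multipliers $T$ (and $N=T+Y$ near $\mathcal{H}^+$), the family $\{X_l\}$ summed over spherical harmonics, and the conformal field $Z$. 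Relative to the homogeneous case each identity acquires the extra bulk terms $\int_{\mathcal B}F\,V^\nu\partial_\nu\psi\,dVol$ and $-\tfrac14\int_{\mathcal B}Fw\psi\,dVol$, exactly of the schematic shape (\ref{error}); the whole scheme reduces to estimating these by Cauchy--Schwarz as in (\ref{errorCS}), with one factor a $\psi$-flux that is bootstrapped and the other a weighted spacetime integral of $(\nabla\phi)^2+(\nabla^2\phi)^2$ controlled by Theorem 5, Corollary 9 and Propositions 7--8.

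\emph{Parts 1 and 4.} For part 1 I would use the $T$-identity (so $K^T=0$) on $\{t_0\le t\le t_*\}$, tiled by dyadic sub-slabs $t_{i+1}\le(1.1)t_i$. On each sub-slab, $\int F\partial_t\psi\le(\int a(\partial_t\psi)^2)^{1/2}(\int a^{-1}F^2)^{1/2}$ with $a$ matched to the weights of Corollary 9(4)--(5) (interior) and Proposition 8 (exterior): the source factor is $O(t_i^{-1})E_1^{1/2}$, the $\psi$-factor is $O(t_i^{1/2})$ times the square root of the not-yet-bounded energy, so each sub-slab contributes $O(t_i^{-1/2})$ and the geometric series in $i$ sums; a standard absorption yields part 1 with no loss of $t$. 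Near the horizon the linear growth of $g_1,g_2$ in $r^*$ is harmless because $1-\mu$ decays exponentially in $r^*$ and the non-degenerate $N$-flux controls the relevant derivatives. For part 4 I would apply $\sum_l K^{X_l,w^{X_l}}$ on a dyadic slab: positivity of $K^{X_l,w^{X_l}}(\psi_l)$ is inherited from the harmonic decomposition (the weights (\ref{Xl})), the boundary $X$-fluxes (including the lateral $\{|r^*|=t/2\}$ pieces) are dominated by the interior $T$-flux of $\psi$ via Lemma 6, which decays like $t^{-2+\delta}$ by part 3, and the source error $\int F f_l\partial_{r^*}\psi$ is split by (\ref{errorCS}) into a small $\psi$-Morawetz factor and a $\phi$-factor bounded by Corollary 9(4)--(5).

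\emph{Parts 2 and 3.} Part 3 follows from part 2 verbatim as Theorem 5(4) follows from Theorem 5(3): on $\{|r^*|\le t_*/2\}$ one has $u^2,v^2\sim t_*^2$, hence $J^T_\mu(\psi)n^\mu_{t_*}\lesssim t_*^{-2}J^{Z,w^Z}_\mu(\psi)n^\mu_{t_*}$, and the conformal bound $t_*^{\delta}E_1$ becomes $t_*^{-2+\delta}E_1$. Part 2 is the heart. I would apply the modified $Z$-identity; $K^{Z,w^Z}(\psi)\ge0$ for $r^*\le r_1^*$ and $r^*\ge r_2^*$ by \cite{DRS}, while the indefinite middle $r_1^*\le r^*\le r_2^*$ is absorbed by the Morawetz bulk of part 4 applied to $\psi$ and to $\Omega\psi$ (the latter because the $|\nabb|^2$ weight in $K^{X_l,w^{X_l}}$ degenerates at $r=3M$, cf.\ the Remark after (\ref{Xl})). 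This couples parts 2 and 4 into one dyadic induction: the conformal flux $\int J^{Z,w^Z}_\mu(\psi)n^\mu_{t_{i+1}}$ is bounded by its value at $t_i$, plus the Morawetz bulk, plus the $Z$-error, and the recursion is closed under the bootstrap $\int J^{Z,w^Z}_\mu(\psi)n^\mu_{t_i}\le A\,t_i^{\delta}E_1$ (which controls the claimed $\int J^Z$ by Proposition 3).

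\emph{Main obstacle.} The crux is the conformal error $\int_{\mathcal B}F\,Z\psi\,dVol$, since $Z\psi=u^2\partial_u\psi+v^2\partial_v\psi$ carries weights $\sim t^2$ in the interior, so a naive estimate loses two powers of $t$ and would only give $\int J^{Z,w^Z}(\psi)\lesssim t^2$. With $\rho\sim u^2+v^2$ one has $(Z\psi)^2/\rho\lesssim$ (conformal density), so the task becomes showing $\int\rho\,F^2$ is integrable in $t$ once the $\phi$-decay is inserted; this is exactly where Proposition 8 is essential, as it sharpens the $\phi$-Morawetz weight so that, loosely, $\tilde h(r^*)\le C(1+|r^*|)^{-2}\chi(r^*)$, precisely the gain needed to cancel the two interior powers of $t$. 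The region must then be split: in $\{|r^*|\le t/2\}$ one uses the interior decay of Corollary 9(4)--(5), while in $\{|r^*|>t/2\}$ one gains an extra $(1+|r^*|)^{-2}\le Ct^{-2}$ from the weight in $F$ and closes with the global bounds (\ref{Xphiwhole})/Proposition 8. The residual logarithmic and $X_0$/far-field losses, accumulated over the $O(\log t_*)$ dyadic steps of the iteration, are exactly what force the arbitrarily small power $t_*^{\delta}$. Two technical nuisances persist throughout and must be tracked carefully: the multipliers $X_l,Z$ themselves grow in $t$, and $E_1$ controls only finitely many derivatives of $\psi$, so every commutation and every occurrence of $\lapp\phi$ in $F$ has to be fed by $\Omega$-commuted $\phi$-estimates and by the non-degenerate $N$-energy near $\mathcal{H}^+$.
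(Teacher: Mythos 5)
Your overall architecture matches the paper's: the paper proves part 1 as Proposition 11 (splitting the error $\int\partial_t\psi\,\Box_g\psi$ into the three regions $r^*\geq t/2$, $|r^*|\leq t/2$, $r^*\leq -t/2$ and absorbing via Lemma 12), proves global and localized $X$-estimates for $\psi$ (Propositions 13--14, the latter via a cutoff $\tilde\phi$ at the start of each dyadic slab and finite speed of propagation), proves a $Z$-estimate (Proposition 15), and closes parts 2--4 with exactly the coupled dyadic bootstrap $I_{t}\leq A t^{\delta}E_1$, $II_{t_i}\leq C(t_i^{-2}I_{t_i}+t_i^{-2+\delta}E_1)$ that you describe. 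Parts 1, 3, and the logical coupling of 2 and 4 are essentially correct as you present them.

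However, there is a genuine gap at the step you yourself identify as the crux. Your plan for the conformal error is to write $\int F\,Z\psi\leq(\int(Z\psi)^2/\rho)^{1/2}(\int\rho F^2)^{1/2}$ with $\rho\sim u^2+v^2$ and to observe that Proposition 8 makes $\int\rho F^2$ harmless. This does not close. The quantity $(Z\psi)^2/\rho$ is pointwise comparable to the conformal \emph{density}, but what is controlled (and only under the bootstrap hypothesis) is the conformal \emph{flux} on slices, not its spacetime integral. Applying Cauchy--Schwarz slab by slab costs a factor $t_i^{1/2}$ from $\int_{t_i}^{t_{i+1}}J^{Z,w^Z}_\mu(\psi)n^\mu_t\,dt\lesssim t_i\sup_tJ^{Z,w^Z}$; applying it slice by slice instead requires $\int_{\{t\}}\rho F^2\lesssim t^{-2-\epsilon}$, i.e.\ pointwise-in-$t$ decay $\int_{\{t\}}\tilde h(\nabla\phi)^2\lesssim t^{-4}$, whereas Corollary 9.4--9.5 only give the slab-averaged rate $t_i^{-3}$ (and the conformal energy of $\phi$ only gives $t^{-2}$ pointwise in $t$). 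Either way the interior term $\int t^2\partial_t\psi\,\Box_g\psi$ loses at least $t_*^{1/2}$, and the claimed $t_*^{\delta}$ bound in part 2 does not follow. The paper's essential additional idea, missing from your proposal, is the integration by parts in $t$ in Proposition 15: on each region $\{t_i\leq t\leq t_{i+1},\,|r^*|\leq t_i/2\}$ one trades $t^2\partial_t\psi\,\Box_g\psi$ for $t^2\psi\,\Box_g(\partial_t\psi)$ plus controllable boundary terms, so that the $\psi$-factor becomes $\int t^{2+2\delta}\bigl(|K^{\tilde X,w^{\tilde X}}(\psi)|+\sum_lK^{X_l,w^{X_l}}(\psi_l)\bigr)$ --- a Morawetz bulk which the bootstrap bounds by $\sum_it_i^{2\delta}(\cdots)$ --- rather than the conformal density. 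This is also precisely where the $\partial_t$-commuted energies $E_0(\partial_t\Omega^k\phi)$ inside $E_1$ are consumed, and why the regions must be chosen with flat tops and bottoms to avoid spurious boundary terms. Without this device (or an equivalent substitute) your part 2, and hence parts 3 and 4, remain unproved.
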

The general strategy is as follows. We follow the argument in \cite{DRS} but now in the conservation law for each of the vector fields, there is an extra error term which is a spacetime integral that looks like $\int_{t_0}^{t_*} V^\mu\partial_{\mu}\psi \Box_g \psi dVol$ (as well as an extra term $-\frac{1}{4}\int_{t_0}^{t_*} w \psi \Box_g \psi dVol$ for the modified currents). Very often, we need to show that this integral decays (or does not grow) with $t_*$, thus we need to "produce" some decay in $t$. We do this by splitting the domain of integration into three regions and estimating them separately:
\begin{enumerate}
\item For the region $\{\frac{t}{2} \leq r^* \leq \infty\}$, we use the fact that $\Box_g \psi$ contains negative powers of $r^*$, (which is a consequence of the asymptotic flatness of Schwarzschild). In this region, negative powers of $r^*$ can be estimated by negative powers of $t$.
\item For the region $\{ -\frac{t}{2} \leq r^* \leq \frac{t}{2} \}$, we note that we have decay in the spacetime integral of $\phi$ for each dyadic slab by Corollary 9.4, 9.5. We therefore estimate the integral on this region by that of $K^X\left(\phi\right)$. Here, it is essential that we use the improved $X$ estimates given by Proposition 8.
\item For the region $\{-\infty \leq r^* \leq -\frac{t}{2}\}$, we make use of the fact that there is an extra factor of $\left(1-\mu \right)^{\frac{1}{2}}$ in the spacetime volume form compared to the volume form on a time-slice (see Section 1.5). From the definition of $r^*$, we have $\left(1-\mu \right) \leq C e^{cr^*}$, thus the factor of $\left(1-\mu \right)^\frac{1}{2}$ gives exponential decay in $r^*$, which translates to exponential decay in $t$ in this region. Therefore, on this region, we first estimate on each time slice, and then carry out the integration in $t$.
\end{enumerate}
Since we will often perform integration dyadically, we first set up the notation. We define a dyadic partition of $[t_0,t_*]$ by $t_0 \leq t_1 \leq ... \leq t_n =t_*$, where $t_i \leq \left(1.1\right) t_{i-1}$ and $n$ is the minimal integer such that this can be done. In particular, $\log\left(t_*-t_0\right)\sim n$.\\
We begin with the $T$ estimate.
\begin{proposition}
$$\int J^{T}_{\mu}\left(\psi\right)n_{t_*}^{\mu}dVol_{t_*}\leq C \int J^{T}_{\mu}\left(\psi\right)n_{t_0}^{\mu}dVol_{t_0}+C \sum_{k=0}^2 E_0\left(\Omega^k \phi\right).$$
\end{proposition}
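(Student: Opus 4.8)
The plan is to run the $T$-energy identity for the inhomogeneous equation that $\psi=S\phi$ satisfies and reduce the statement to an estimate on the resulting spacetime error term. By Proposition 4, $\Box_g\psi = g_1(r^*)\partial_{r^*}\phi + g_2(r^*)\lapp\phi$, so applying the inhomogeneous identity of Proposition 2 with $V=T$, for which $K^T=0$, on the slab bounded by $\{t=t_0\}$ and $\{t=t_*\}$ gives
\[
\int J^T_\mu(\psi)n^\mu_{t_*}\,dVol_{t_*} = \int J^T_\mu(\psi)n^\mu_{t_0}\,dVol_{t_0} - \int_{t_0}^{t_*}\int\left(g_1\partial_{r^*}\phi + g_2\lapp\phi\right)\partial_t\psi\,dVol .
\]
Writing $\mathcal{E}(t)=\int J^T_\mu(\psi)n^\mu_t\,dVol_t$ and $\mathcal{R}$ for the error integral, the whole proposition reduces to showing $|\mathcal{R}|\le \tfrac12\sup_{[t_0,t_*]}\mathcal{E} + C\sum_{k=0}^2 E_0(\Omega^k\phi)$. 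Indeed, running the identity up to an arbitrary $t\le t_*$ yields $\mathcal{E}(t)\le \mathcal{E}(t_0)+\tfrac12\sup_{[t_0,t]}\mathcal{E}+C\sum_k E_0(\Omega^k\phi)$, and taking the supremum and absorbing closes the argument by a standard continuity/bootstrap step. The point of the $\tfrac12$ is that Cauchy--Schwarz will always produce $\mathcal{R}$ in the shape $(\sup\mathcal{E})^{1/2}(\sum_k E_0(\Omega^k\phi))^{1/2}$, to which Young's inequality applies.

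To bound $\mathcal{R}$ I would decompose $[t_0,t_*]$ dyadically into slabs $[t_i,t_{i+1}]$ with $t_{i+1}\le(1.1)t_i$ and, on each slab, split the $r^*$-range into the three regions $r^*\ge t/2$, $-t/2\le r^*\le t/2$, and $r^*\le -t/2$, exactly as outlined after Theorem 10. On every piece I apply Cauchy--Schwarz and factor out $\bigl(\int(\partial_t\psi)^2\,dVol\bigr)^{1/2}\le C(t_i\sup\mathcal{E})^{1/2}$, using that $\mathcal{E}(t)$ controls $\int r^2(\partial_t\psi)^2\,dA\,dr^*$ on each slice and that the slab has length $\sim t_i$. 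Everything then hinges on showing that the remaining factor $\bigl(\int g_j^2|\nabla^{(j)}\phi|^2\,dVol\bigr)^{1/2}$ decays fast enough in $t_i$ to make the slab sum converge, where $\nabla^{(1)}\phi=\partial_{r^*}\phi$ and $\nabla^{(2)}\phi=\lapp\phi$.

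In the middle region the necessary decay is precisely the improved $X$-estimates: using $|g_1|,\,|g_2|/r\sim (1+|r^*|)(\log r)_+/r^3$ one checks the pointwise bounds $g_1^2\le C\,r^{1-\delta}(1+|r^*|)^{-2}(1-\mu)^{-1}$ and, after commuting once with $\Omega$ to rewrite $\lapp\phi$ in terms of $r^{-2}\nabb(\Omega\phi)$, $g_2^2 r^{-2}\le C\,r^3(1+|r^*|)^{-4}(1-\mu)^{-1}$, so that Corollary 9 (items 4 and 5) bounds the $\phi$-factor by $C E_0(\Omega^k\phi)\,t_i^{-2}$; this piece contributes $\lesssim (\sup\mathcal{E})^{1/2}(\sum_k E_0(\Omega^k\phi))^{1/2}t_i^{-1/2}$, summable over the geometric sequence $t_i$. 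In the far region $r^*\ge t/2$ one has $r\gtrsim r^*\gtrsim t$ and $|g_1|,|g_2|\sim r^{-2}\log r$ small, so bounding $\partial_{r^*}\phi$ and $\lapp\phi$ (again after one $\Omega$) by the non-decaying $T$-energy of Corollary 9 (item 1) and trading the negative powers of $r$ for negative powers of $t$ yields a factor $\sim t_i^{-3/2}\log t_i$, which is likewise summable. In the near-horizon region $r^*\le -t/2$, where $g_1,g_2$ grow only polynomially in $|r^*|$, the extra factor $(1-\mu)\le Ce^{cr^*}$ in $dVol$ gives $e^{-ct/2}$, so a crude application of the $T$-energy produces an exponentially small contribution. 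Summing the three pieces over $i$ gives $|\mathcal{R}|\lesssim(\sup\mathcal{E})^{1/2}(\sum_{k=0}^2 E_0(\Omega^k\phi))^{1/2}$, as needed.

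The main obstacle is structural rather than computational. Because the $T$-energy of $\psi$ is at best bounded and carries no $t$-decay, its spacetime integral grows linearly in $t_*$; hence \emph{all} of the decay used to close the estimate must come from the inhomogeneity $\Box_g\psi$, and the delicate place is the trapping region near $r=3M$, where $\phi$ decays only through the degenerate $X$-estimates. Matching the weights $g_1^2,g_2^2$ against those degenerate weights is exactly why one needs the sharpened $X$-vector field of Proposition 8 (that is, Corollary 9, items 4--5) together with the commutation by $\Omega$ that converts $\lapp\phi$ into an angular derivative controllable away from $r=3M$ (cf.\ the Remark following the $X_l$ estimate). One must also keep track of the fact that near the horizon $g_1,g_2$ blow up in $|r^*|$, which is harmless only because of the exponential gain from the $(1-\mu)^{1/2}$ in the volume form.
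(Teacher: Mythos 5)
Your proposal is correct and follows essentially the same route as the paper: the inhomogeneous $T$-energy identity for $\psi$, a three-region split of the error integral with the improved $X$-estimates (Corollary 9.4--9.5 after one $\Omega$-commutation) in the middle, the $r\gtrsim t$ gain in the far region, the exponential smallness of $1-\mu$ near the horizon, and an absorption step identical in substance to the paper's Lemma 12. The only deviations are organizational (slab-wise rather than slice-wise Cauchy--Schwarz in some regions), and these do not affect the argument.
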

\begin{proof}
The conservation law gives
$$\int J^{T}_{\mu}\left(\psi\right)n_{t_*}^{\mu}dVol_{t_*}=\int J^{T}_{\mu}\left(\psi\right)n_{t_0}^{\mu}dVol_{t_0}+ \int \partial_t\psi \Box_g \psi dVol.$$
We split the error term into three parts and estimate them separately.\\
By Corollary 9.1,
\begin{equation*}
\begin{split}
&|\int_{\frac{t}{2}}^{\infty} \partial_t\psi \Box_g \psi dVol| \\
\leq & C\int_{t_0}^{t_*} \left(\int_{-\infty}^{\infty} \int_{\mathbb S^2} J^T_\mu \left(\psi\right) n^\mu_{t} dVol_{t}\right)^{\frac{1}{2}}\left(\int_{\frac{t}{2}}^{\infty} \int_{\mathbb S^2} \frac{\left(\log r\right)_+^2}{r^4} \left(\left(\partial_{r^*}\phi \right)^2+|\nabb\Omega\phi|^2 \right)dVol_{t}\right)^{\frac{1}{2}}dt\\
\leq & C\sup_{t_0 \leq t \leq t_*} \left(\int_{-\infty}^{\infty} \int_{\mathbb S^2} J^T_\mu \left(\psi\right) n^\mu_{t} dVol_{t}\right)^{\frac{1}{2}}\left(\sum_{k=0}^1 \int_{-\infty}^{\infty} \int_{\mathbb S^2} J^T_\mu \left(\Omega^k\phi\right) n^\mu_{t_0} dVol_{t_0}\right)^{\frac{1}{2}} \int_{t_0}^{t_*} t^{-\frac{3}{2}} dt\\
\leq & C\sup_{t_0 \leq t \leq t_*} \left(\int_{-\infty}^{\infty} \int_{\mathbb S^2} J^T_\mu \left(\psi\right) n^\mu_{t} dVol_{t}\right)^{\frac{1}{2}}\left(\sum_{k=0}^1 \int_{-\infty}^{\infty} \int_{\mathbb S^2} J^T_\mu \left(\Omega^k\phi\right) n^\mu_{t_0} dVol_{t_0}\right)^{\frac{1}{2}}
\end{split}
\end{equation*}
For the middle region, we observe that 
By Corollary 9.4 and 9.5,
\begin{equation*}
\begin{split}
&|\int_{-\frac{t}{2}}^{\frac{t}{2}} \partial_t\psi \Box_g \psi dVol| \\
\leq & C\int_{t_0}^{t_*} \left(\int_{-\infty}^{\infty} \int_{\mathbb S^2} J^T_\mu \left(\psi\right) n^\mu_{t} dVol_{t}\right)^{\frac{1}{2}}\left(\int_{-\frac{t}{2}}^{\frac{t}{2}} \int_{\mathbb S^2} \frac{\left(1+|r^*|\right)^2\left(\log r\right)_+^2\left(1-\mu \right)^{\frac{3}{2}}}{r^6} \left(\left(\partial_{r^*}\phi \right)^2+|\nabb\Omega\phi|^2 \right)dVol_{t}\right)^{\frac{1}{2}}dt\\
\leq & C\sup_{t_0 \leq t \leq t_*} \left(\int_{-\infty}^{\infty} \int_{\mathbb S^2} J^T_\mu \left(\psi\right) n^\mu_{t} dVol_{t}\right)^{\frac{1}{2}}\left(\sum_{k=0}^2 \sum_{i=0}^{n-1} \int_{t_i}^{t_{i+1}} \left(\int_{-\frac{t}{2}}^{\frac{t}{2}} \sum_l K^{X_l, w^{X_l}}\left(\Omega^k\phi \right)\left(1- \mu\right) dVol_{t}\right)^\frac{1}{2} dt\right)\\
\leq & C\sup_{t_0 \leq t \leq t_*} \left(\int_{-\infty}^{\infty} \int_{\mathbb S^2} J^T_\mu \left(\psi\right) n^\mu_{t} dVol_{t}\right)^{\frac{1}{2}}\left(\sum_{k=0}^2 \sum_{i=0}^{n-1} t_i^{\frac{1}{2}} \left(\int_{t_i}^{t_{i+1}} \int_{-\frac{t}{2}}^{\frac{t}{2}} \sum_l K^{X_l, w^{X_l}}\left(\Omega^k\phi \right) dVol\right)^{\frac{1}{2}}\right)\\
\leq & C\sup_{t_0 \leq t \leq t_*} \left(\int_{-\infty}^{\infty} \int_{\mathbb S^2} J^T_\mu \left(\psi\right) n^\mu_{t} dVol_{t}\right)^{\frac{1}{2}}\left(\sum_{k=0}^2 E_0\left(\Omega^k \phi\right)\right)^{\frac{1}{2}} \left(\sum_{i=0}^{n-1} t_i^{-\frac{1}{2}}\right)\\
\leq & C\sup_{t_0 \leq t \leq t_*} \left(\int_{-\infty}^{\infty} \int_{\mathbb S^2} J^T_\mu \left(\psi\right) n^\mu_{t} dVol_{t}\right)^{\frac{1}{2}}\left(\sum_{k=0}^2 E_0\left(\Omega^k \phi\right)\right)^{\frac{1}{2}}
\end{split}
\end{equation*}
By Corollary 9.1,
\begin{equation*}
\begin{split}
&|\int_{-\infty}^{-\frac{t}{2}} \partial_t\psi \Box_g \psi dVol| \\
\leq & C\int_{t_0}^{t_*} \left(\int_{-\infty}^{\infty} \int_{\mathbb S^2} J^T_\mu \left(\psi \right) n^\mu_{t} dVol_{t}\right)^{\frac{1}{2}}\left(\int_{-\infty}^{-\frac{t}{2}} \int_{\mathbb S^2} \left(1+|r^*|\right)^2\left(1-\mu \right)^{\frac{3}{2}} \left(\left(\partial_{r^*}\phi \right)^2+|\nabb\Omega\phi|^2 \right)dVol_{t}\right)^{\frac{1}{2}}dt\\
\leq & C \sup_{t_0 \leq t \leq t_*} \left(\int_{-\infty}^{\infty} \int_{\mathbb S^2} J^T_\mu \left(\psi \right) n^\mu_{t} dVol_{t}\right)^{\frac{1}{2}}\left(\sum_{k=0}^1\int_{-\infty}^{\infty} \int_{\mathbb S^2} J^T_\mu \left(\Omega^k\psi\right) n^\mu_{t} dVol_{t}\right)^{\frac{1}{2}}\int_{t_0}^{t_*} e^{-ct} dt\\
\leq & C \sup_{t_0 \leq t \leq t_*} \left(\int_{-\infty}^{\infty} \int_{\mathbb S^2} J^T_\mu \left(\psi \right) n^\mu_{t} dVol_{t}\right)^{\frac{1}{2}}\left(\sum_{k=0}^1\int_{-\infty}^{\infty} \int_{\mathbb S^2} J^T_\mu \left(\Omega^k\psi\right) n^\mu_{t} dVol_{t}\right)^{\frac{1}{2}}\\
\end{split}
\end{equation*}
These together show that 
$$\int J^{T}_{\mu}\left(\psi\right)n_{t_*}^{\mu}dVol_{t_*}\leq \int J^{T}_{\mu}\left(\psi\right)n_{t_0}^{\mu}dVol_{t_0}+C\sup_{t_0 \leq t \leq t_*} \left(\int J^T_\mu \left(\psi\right) n^\mu_{t} dVol_{t}\right)^{\frac{1}{2}}\left(\sum_{k=0}^2 E_0\left(\Omega^k \phi\right)\right)^{\frac{1}{2}},$$
which implies the Proposition with the following Lemma, taking $h_1\left(t\right)=0$ and $h_2\left(t\right)=\displaystyle\sum_{k=0}^2 E_0\left(\Omega^k\phi\right)$.
\end{proof}
\begin{lemma}
Suppose $f\left(t\right)$ is continuous, $h_1\left(t\right)$, $h_2\left(t\right)$ are increasing and we have 
$$f\left(t_*\right)\leq C\left(f\left(t_0\right)+ h_1\left(t_*\right) + \sup_{t_0 \leq t \leq t_*} f\left(t\right)^{\frac{1}{2}} h_2\left(t_*\right)^{\frac{1}{2}}\right),$$
for all $t_* \geq t_0$.\\
Then $$f\left(t_*\right)\leq C(f\left(t_0\right)+ h_1\left(t_*\right) + h_2\left(t_*)\right).$$
\end{lemma}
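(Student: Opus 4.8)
The plan is to reduce the self-referential hypothesis to a scalar inequality for the \emph{running supremum} of $f$ and then absorb the offending $f^{1/2}$ factor by Young's inequality. The whole difficulty is that the term $\sup_{t_0\le t\le t_*}f(t)^{1/2}h_2(t_*)^{1/2}$ on the right already contains (the square root of) the quantity being estimated, so one cannot simply iterate; instead one must set up the bound so that the $f^{1/2}$ can be soaked into the left-hand side.

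First I would introduce $F\left(t_*\right)=\sup_{t_0\le t\le t_*}f\left(t\right)$. Since $f$ is continuous, its restriction to the compact interval $[t_0,t_*]$ attains a finite maximum, so $F\left(t_*\right)<\infty$; moreover $F$ is nondecreasing. This finiteness is the one and only place where the continuity hypothesis is used, and it is exactly what will make the absorption step legitimate. Next I would apply the hypothesis not merely at $t_*$ but at every intermediate time $s\in[t_0,t_*]$. For such $s$, the monotonicity of $h_1$ and $h_2$ gives $h_1\left(s\right)\le h_1\left(t_*\right)$ and $h_2\left(s\right)\le h_2\left(t_*\right)$, while the inner supremum obeys $\sup_{t_0\le t\le s}f\left(t\right)^{1/2}\le F\left(t_*\right)^{1/2}$. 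Hence, uniformly in $s\le t_*$,
$$f\left(s\right)\le C\left(f\left(t_0\right)+h_1\left(t_*\right)+F\left(t_*\right)^{1/2}h_2\left(t_*\right)^{1/2}\right).$$
Taking the supremum over $s\in[t_0,t_*]$ converts this into a closed inequality for $F\left(t_*\right)$ alone:
$$F\left(t_*\right)\le C\left(f\left(t_0\right)+h_1\left(t_*\right)\right)+C\,F\left(t_*\right)^{1/2}h_2\left(t_*\right)^{1/2}.$$

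The final step is the absorption. Writing $A=C\left(f\left(t_0\right)+h_1\left(t_*\right)\right)$ and $B=C\,h_2\left(t_*\right)^{1/2}$, the inequality reads $F\le A+BF^{1/2}$. By Young's inequality (equivalently AM--GM), $BF^{1/2}\le \tfrac12 F+\tfrac12 B^2$; because $F\left(t_*\right)$ is \emph{finite}, I may subtract $\tfrac12 F$ from both sides to obtain $F\le 2A+B^2$, that is,
$$F\left(t_*\right)\le 2C\left(f\left(t_0\right)+h_1\left(t_*\right)\right)+C^2h_2\left(t_*\right).$$
Since $f\left(t_*\right)\le F\left(t_*\right)$, this is the asserted bound after renaming the constant. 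I expect the only genuinely delicate point to be the justification that $F\left(t_*\right)<\infty$ before subtracting $\tfrac12 F$: without it the manipulation $F-\tfrac12 F=\tfrac12 F$ is meaningless, and this is precisely why the hypothesis is phrased with the running supremum $\sup_{t_0\le t\le t_*}f\left(t\right)^{1/2}$ rather than with $f\left(t_*\right)^{1/2}$. Everything else is routine.
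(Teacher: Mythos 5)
Your argument is correct and is essentially the paper's own proof: both reduce the hypothesis to a closed scalar inequality for the running supremum of $f$ (you by taking the supremum of the pointwise bound over $s\in[t_0,t_*]$, the paper by evaluating at the point $\tilde{t}$ where the supremum is attained) and then absorb the $F^{1/2}h_2^{1/2}$ term by Cauchy--Schwarz/Young. Your remark that only finiteness of the supremum (rather than its attainment) is needed is a minor refinement, not a different route.
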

\begin{proof}
Suppose $\displaystyle\sup_{t_0 \leq t \leq t_*} f\left(t\right)$ is achieved by $f\left(\tilde{t}\right)$ for some $t_0 \leq \tilde{t} \leq t_*$.
Then $$f\left(\tilde{t}\right)\leq C\left(f\left(t_0\right)+ h_1\left(\tilde{t}\right) + f\left(\tilde{t}\right)^{\frac{1}{2}} h_2\left(\tilde{t}\right)^{\frac{1}{2}}\right).$$
$h_1\left(t\right)$, $h_2\left(t\right)$ increasing implies,
$$f\left(\tilde{t}\right)\leq C\left(f\left(t_0\right)+ h_1\left(t_*\right) + f\left(\tilde{t}\right)^{\frac{1}{2}} h_2\left(t_*\right)^{\frac{1}{2}}\right).$$
Using Cauchy-Schwarz and subtracting $\frac{1}{2}f\left(\tilde{t}\right)$ from both sides,
$$f\left(\tilde{t}\right)\leq C(f\left(t_0\right)+ h_1\left(t_*\right) + h_2\left(t_*)\right).$$
Clearly, $f\left(t_*\right) \leq \displaystyle\sup_{t_0 \leq t \leq t_*} f\left(t\right)=f\left(\tilde{t}\right)$. Hence we have the lemma.
\end{proof}
We then derive an $X$ estimate. Here unlike in the case for $\phi$, in which the $\tilde{X}$ estimate was used to improve the already known estimate from $X_l$, we need to consider both of them at the same time.
\begin{proposition}
\begin{equation*}
\begin{split}
&\int_{t_0}^{t_*} |K^{\tilde{X},w^{\tilde{X}}}\left(\psi\right)| + \sum_l K^{X_l,w^{X_l}}\left(\psi_l\right)dVol\\
\leq &C\left(\int J^T_{\mu}\left(\psi \right)n_{t_*}^{\mu}dVol_{t_*}+\int J^T_{\mu}\left(\psi \right)n_{t_0}^{\mu}dVol_{t_0}\right)+C t_0^{-2+\delta}\displaystyle\sum_{k=0}^2 E_0\left(\Omega^k \phi \right)
\end{split}
\end{equation*}
\end{proposition}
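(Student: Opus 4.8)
The plan is to run the inhomogeneous energy identity of Proposition 2 for $\psi=S\phi$ with the two families of multipliers $\tilde X$ and $X_l$, mirroring the Dafermos--Rodnianski proof of Theorem 5, and then to dispose of the extra error terms created by the fact that $\psi$ solves the \emph{inhomogeneous} equation $\Box_g\psi=F:=g_1(r^*)\partial_{r^*}\phi+g_2(r^*)\lapp\phi$ of Proposition 4. Decomposing into spherical harmonics (so that $\Box_g\psi_l=F_l$), I would apply Proposition 2 to $\psi_l$ with $(V,w)=(X_l,w^{X_l})$ and sum over $l$, and separately to $\psi$ with $(V,w)=(\tilde X,w^{\tilde X})$, over the slab $\mathcal B=\{t_0\le t\le t_*\}$. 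Each identity rearranges to
\begin{equation*}
\begin{split}
\int_{\mathcal B} K^{V,w}(\psi)\,dVol={}&\int_{\{t=t_0\}} J^{V,w}_\mu(\psi)n^\mu_{t_0}\,dVol_{t_0}-\int_{\{t=t_*\}} J^{V,w}_\mu(\psi)n^\mu_{t_*}\,dVol_{t_*}\\
&+\int_{\mathcal B}\Bigl(\tfrac14 Fw\psi-FV^\nu\partial_\nu\psi\Bigr)\,dVol,
\end{split}
\end{equation*}
whose first two right-hand terms are the desired boundary fluxes and whose last term is the error to be controlled.

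For the left-hand side, the currents $K^{X_l,w^{X_l}}(\psi_l)$ are nonnegative by the Dafermos--Rodnianski construction, so no absolute value is needed there; $K^{\tilde X,w^{\tilde X}}(\psi)\ge 0$ only for $r^*\ge\max\{100,100M\}$ by Proposition 8. On the remaining compact piece of $\mathrm{supp}\,\tilde X$, namely $1\le r^*\le\max\{100,100M\}$, one is bounded away from the photon sphere $r=3M$ (i.e.\ $r^*=0$), so the weight $\frac{2-3\mu}{r}$ multiplying $|\nabb\psi|^2$ in the $X_l$ currents is nondegenerate; hence $|K^{\tilde X,w^{\tilde X}}(\psi)|\le C\sum_l K^{X_l,w^{X_l}}(\psi_l)$ on this region and the $\tilde X$ contribution there is folded into the $X_l$ sum. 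For the boundary fluxes, the Hardy inequality of Lemma 6 together with Cauchy--Schwarz gives $\int J^{\tilde X,w^{\tilde X}}_\mu(\psi)n^\mu_t\,dVol_t\le C\int J^T_\mu(\psi)n^\mu_t\,dVol_t$ and $\sum_l\int J^{X_l,w^{X_l}}_\mu(\psi_l)n^\mu_t\,dVol_t\le C\int J^T_\mu(\psi)n^\mu_t\,dVol_t$ on each of $\{t=t_0\}$ and $\{t=t_*\}$, exactly as in the proofs of Propositions 7 and 8; this produces the term $C(\int J^T_\mu(\psi)n^\mu_{t_*}+\int J^T_\mu(\psi)n^\mu_{t_0})$.

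The heart of the matter is the error integral, which I would estimate by splitting the $r^*$-range into $\{r^*\ge \tfrac t2\}$, $\{-\tfrac t2\le r^*\le\tfrac t2\}$, and $\{r^*\le -\tfrac t2\}$, exactly as in the proof of the $T$-estimate in Proposition 11. In each region I apply Cauchy--Schwarz with a free parameter, arranging the $\psi$-factor to be a first-order density ($f_l^2(\partial_{r^*}\psi_l)^2$ or $\psi_l^2$ with the appropriate weight) that is absorbed into the nonnegative left-hand side $\int_{\mathcal B}\sum_l K^{X_l,w^{X_l}}(\psi_l)\,dVol$ for small parameter, and the $\phi$-factor to be a weighted spacetime integral of $F^2$. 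In the middle region the decay of $F$ and the improved weights of Proposition 8 are chosen so that, summing the dyadic estimates of Corollary 9.4 and 9.5, the $\phi$-factor is bounded by $Ct_0^{-2+\delta}\sum_{k=0}^2 E_0(\Omega^k\phi)$, with the angular commutation up to $k=2$ forced by the trapping degeneracy when bounding $\lapp\phi$. In the far region $r^*\ge t/2$ the negative powers of $r^*\gtrsim t$ in $g_1,g_2$ convert to negative powers of $t$, and the $\phi$-factor is handled by the conserved flux of Corollary 9.1; in the near region $r^*\le -t/2$ the extra factor $(1-\mu)^{1/2}\le Ce^{cr^*/2}$ in the spacetime volume form yields $e^{-ct}$ decay, so one estimates slicewise and integrates in $t$, bounding the $\psi$-flux by $\sup_{t_0\le t\le t_*}\int J^T_\mu(\psi)n^\mu_t\,dVol_t$, which the already-established Proposition 11 controls by $C(\int J^T_\mu(\psi)n^\mu_{t_0}+\sum_{k}E_0(\Omega^k\phi))$.

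I expect the main obstacle to be the bookkeeping of weights in the middle-region estimate: one must verify that the weight $\tilde h(r^*)\sim (1+|r^*|)^2(\log r)_+^2 r^{-6}$ coming from $F^2$, together with the Cauchy--Schwarz weight, is dominated by the improved $X$-weight $\frac{r^{1-\delta}}{(1+|r^*|)^2}$ of Corollary 9.4 (up to the $\delta$-loss), so that the $(\partial_{r^*}\psi)^2$ and $\psi^2$ parts genuinely absorb into $\sum_l K^{X_l,w^{X_l}}(\psi_l)$ while the $\phi$-part retains the full $t_0^{-2+\delta}$ decay after the dyadic summation $\sum_i t_i^{-2}\sim t_0^{-2}$. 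A secondary subtlety is the near-horizon region, where $F$ grows like $|r^*|$ and the $X_l$ currents offer no control, so the exponential gain from $(1-\mu)^{1/2}$ must be used to beat this growth after commuting $\psi$ with up to one $\Omega$.
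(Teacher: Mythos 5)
Your proposal follows essentially the same route as the paper: spherical harmonic decomposition, the same handling of the sign of $K^{\tilde{X},w^{\tilde{X}}}$ on $1\le r^*\le\max\{100,100M\}$ via domination by $K^{X_l,w^{X_l}}$, boundary fluxes controlled by $J^T$ through Lemma 6 and Cauchy--Schwarz, and the three-region splitting of the inhomogeneous error with absorption into the nonnegative $K$-integrals in the outer and middle regions, dyadic summation via Corollary 9.4--9.5 in the middle, and exponential $(1-\mu)$ decay plus Proposition 11 near the horizon. The only cosmetic deviation is that in the far region the paper absorbs the $\phi$-factor through the spacetime $\tilde{X}$-estimate rather than the slicewise $T$-flux, but both yield the same $t_0^{-2+\delta}$ gain.
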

\begin{remark}
The reader may ask why this Proposition gives decay for the error term while the statement of Proposition 11 does not. In fact, the proof of Proposition 11 is sufficient to show that the error term decays. However, we do not pursue this as it is unnecessary for later use.
\end{remark}
\begin{proof}
Decompose $\psi=\displaystyle\sum_l \psi_l$ into spherical harmonics.\\
Since Schwarzschild spacetimes are spherically symmetric, $\Box_g \psi_l=g_1\left(r^*\right)\partial_{r^*}\phi_l+g_2\left(r^*\right)\nabb\left(\Omega\phi_l\right)$.\\
Notice that $K^{\tilde{X},w^{\tilde{X}}}\left(\psi\right)$ is not everywhere positive. It is identically zero for $r^* \leq 1$ and as we have shown in the proof of Proposition 8, $K^{\tilde{X},w^{\tilde{X}}} \geq 0$ for $r^* \geq max\{100,100M\}$. On the remaining (not necessarily positive) region $1 \leq r^* \leq max\{100,100M\}$, we have $|K^{\tilde{X},w^{\tilde{X}}}\left(\psi_l\right)| \le C K^{X_l,w^{X_l}}\left(\psi_l\right)$. (Notice that we have avoided the region around $r=3M$ where this inequality is potentially problematic.)\\
In particular, applying Proposition 2 for the vector field $X$, we have
\begin{equation*}
\begin{split}
&\int |K^{\tilde{X},w^X}\left(\psi_l\right)|dVol + \int K^{X_l,w^{X_l}}\left(\psi_l\right)dVol\\
\leq &\int K^{\tilde{X},w^{\tilde{X}}}\left(\psi_l\right)dVol + \left(C+1\right) \int K^{X_l,w^{X_l}}\left(\psi_l\right)dVol\\
=&\int J^{\tilde{X},w^{\tilde{X}}}_{\mu}\left(\psi_l\right)n_{t_*}^{\mu}dVol_{t_*}-\int J^{\tilde{X},w^{\tilde{X}}}_{\mu}\left(\psi_l\right)n_{t_0}^{\mu}dVol_{t_0}\\
&+\left(C+1\right)\left(\int J^{X_l,w^{X_l}}_{\mu}\left(\psi_l\right)n_{t_*}^{\mu}dVol_{t_*}-\int J^{X_l,w^{X_l}}_{\mu}\left(\psi_l\right)n_{t_0}^{\mu}dVol_{t_0}\right)\\
&+\frac{1}{4}\int \left(\tilde{f}'+\frac{2\left(1-\mu \right)}{r}\tilde{f}\right)\psi_l\Box\psi_l dVol-\int \left(\tilde{f}\partial_{r^*}\psi_l \right)\Box\psi_l dVol\\
&+\left(C+1\right)\left(\frac{1}{4}\int \left(f'_l+\frac{2\left(1-\mu \right)}{r}f_l\right)\psi_l\Box\psi_l dVol-\int \left(f_l\partial_{r^*}\psi_l \right)\Box\psi_l dVol\right)\\
\leq &C\left(\int J^T_{\mu}\left(\psi_l \right)n_{t_*}^{\mu}dVol_{t_*}+\int J^T_{\mu}\left(\psi_l \right)n_{t_0}^{\mu}dVol_{t_0}+\int |\left(r^{-1}\psi_l+\partial_{r^*}\psi_l \right)\Box\psi_l| dVol\right).
\end{split}
\end{equation*}
We split the last term into three integrals and estimate them separately.\\
By Theorem 5.2,
\begin{equation*}
\begin{split}
&\int_{t_0}^{t_*}\int_{\frac{t}{2}}^{\infty}\int_{\mathbb S^2} |\left(r^{-1}\psi_l+\partial_{r^*}\psi_l \right)\Box\psi_l| dVol\\
\leq &C \int_{t_0}^{t_*}\int_{\frac{t}{2}}^{\infty}\int_{\mathbb S^2} r^{-1+\frac{\delta}{2}}\left(\log r\right)_{+}\left(r^{-\frac{3}{2}-\frac{\delta}{4}}|\psi_l|+r^{-\frac{1}{2}-\frac{\delta}{4}}|\partial_{r^*}\psi_l| \right)\left(r^{-\frac{1}{2}-\frac{\delta}{4}}\left(|\partial_{r^*}\phi|+|\nabb\Omega\phi| \right)\right)dVol\\
\leq &C t_0^{-1+\frac{\delta}{2}}\left(\int_{t_0}^{t_*}\int_{\frac{t}{2}}^{\infty}\int_{\mathbb S^2} |K^{\tilde{X},w^{\tilde{X}}}\left(\psi_l \right)| dVol\right)^{\frac{1}{2}}\left(\displaystyle\sum_{k=0}^1 \int_{t_0}^{t_*}\int_{\frac{t}{2}}^{\infty}\int_{\mathbb S^2} |K^{\tilde{X},w^{\tilde{X}}}\left(\Omega^k \phi_l \right)| dVol\right)^{\frac{1}{2}}\\
\leq &C t_0^{-1+\frac{\delta}{2}}\left(\int_{t_0}^{t_*}\int_{-\infty}^{\infty}\int_{\mathbb S^2} |K^{\tilde{X},w^{\tilde{X}}}\left(\psi_l \right)| dVol\right)^{\frac{1}{2}}\left(\displaystyle\sum_{k=0}^1 \int_{t_0}^{t_*}\int_{-\infty}^{\infty}\int_{\mathbb S^2} |K^{\tilde{X},w^{\tilde{X}}}\left(\Omega^k \phi_l \right)| dVol\right)^{\frac{1}{2}}\\
\leq &\frac{1}{4}\int_{t_0}^{t_*}\int_{-\infty}^{\infty}\int_{\mathbb S^2} |K^{\tilde{X},w^{\tilde{X}}}\left(\psi_l \right)| dVol+ C t_0^{-2+\delta}\displaystyle\sum_{k=0}^1 \int_{-\infty}^{\infty}\int_{\mathbb S^2} J^T_\mu\left(\Omega^k \phi_l \right) n^{\mu }_0 dVol_{t_0}.\\
\end{split}
\end{equation*}
By Theorem 5.5,
\begin{equation*}
\begin{split}
&\int_{t_0}^{t_*}\int_{-\frac{t}{2}}^{\frac{t}{2}}\int_{\mathbb S^2} |\left(r^{-1}\psi_l+\partial_{r^*}\psi_l \right)\Box\psi_l| dVol\\
\leq &C \int_{t_0}^{t_*}\int_{-\frac{t}{2}}^{\frac{t}{2}}\int_{\mathbb S^2} \left(r^{-\frac{3}{2}-\frac{\delta}{4}}|\psi_l|+r^{-\frac{1}{2}-\frac{\delta}{4}}|\partial_{r^*}\psi_l| \right)\left(r^{-\frac{1}{2}-\frac{\delta}{4}}\left(|\partial_{r^*}\phi|+|\nabb\Omega\phi| \right)\right)dVol\\
\leq &C \left(\int_{t_0}^{t_*}\int_{-\frac{t}{2}}^{\frac{t}{2}}\int_{\mathbb S^2} \left(|K^{\tilde{X},w^{\tilde{X}}}\left(\psi_l \right)|+K^{X_l,w^{X_l}}\left(\psi_l \right)\right) dVol\right)^{\frac{1}{2}}\left(\displaystyle\sum_{k=0}^2 \int_{t_0}^{t_*}\int_{-\frac{t}{2}}^{\frac{t}{2}}\int_{\mathbb S^2} |K^{\tilde{X},w^{\tilde{X}}}\left(\Omega^k \phi_l \right)| dVol\right)^{\frac{1}{2}}\\
\leq& \frac{1}{4}\int_{t_0}^{t_*}\int_{-\infty}^{\infty}\int_{\mathbb S^2} \left(|K^{\tilde{X},w^{\tilde{X}}}\left(\psi_l \right)|+K^{X_l,w^{X_l}}\left(\psi_l \right)\right) dVol+C\displaystyle\sum_{i=0}^{n-1}\displaystyle\sum_{k=0}^2 t_i^{-2}E_0\left(\Omega^k \phi_l \right)\\
\leq& \frac{1}{4}\int_{t_0}^{t_*}\int_{-\infty}^{\infty}\int_{\mathbb S^2} \left(|K^{\tilde{X},w^{\tilde{X}}}\left(\psi_l \right)|+K^{X_l,w^{X_l}}\left(\psi_l \right)\right) dVol+C t_0^{-2}\displaystyle\sum_{k=0}^1 E_0\left(\Omega^k \phi_l \right).\\
\end{split}
\end{equation*}
By Theorem 5.1, Proposition 11 and Lemma 6,
\begin{equation*}
\begin{split}
&\int_{t_0}^{t_*}\int_{-\infty}^{-\frac{t}{2}}\int_{\mathbb S^2} |\left(r^{-1}\psi_l+\partial_{r^*}\psi_l \right)\Box\psi_l| dVol\\
\leq &C \int_{t_0}^{t_*}\int_{-\infty}^{-\frac{t}{2}}\int_{\mathbb S^2} |r^*|\left(|\psi_l|+|\partial_{r^*}\psi_l| \right)\left(|\partial_{r^*}\phi_l|+|\nabb\Omega\phi_l| \right)dVol\\
\leq &C \int_{t_0}^{t_*} \left(\int_{-\infty}^{\infty}\left(\frac{1}{\left(1+|r^*|\right)^2}\psi_l^2 + \left(\partial_{r^*}\psi_l \right)^2\right) dA dr^*\right)^{\frac{1}{2}}\\
&\times\left(\int_{-\infty}^{-\frac{t}{2}} \left(r^*\right)^4\left(1-\mu \right) \left(\left(\partial_{r^*}\phi_l\right)^2 + \left(1-\mu \right)|\nabb\Omega\phi_l|^2\right) dA dr^*\right)^{\frac{1}{2}} dt\\
\leq &C \int_{t_0}^{t_*} \left(\int_{-\infty}^{\infty} \left(\partial_{r^*}\psi_l \right)^2 dA dr^*\right)^{\frac{1}{2}}\left(\int_{-\infty}^{\infty} \left(\left(\partial_{r^*}\phi_l\right)^2 + \left(1-\mu \right)|\nabb\Omega\phi_l|^2\right) dA dr^*\right)^{\frac{1}{2}} e^{-ct} dt\\
\leq &C \sup_{t_0 \leq t \leq t_*}\left(\int_{-\infty}^{\infty} J^T_\mu\left(\psi_l \right) n^\mu_{t} dVol_{t}\right)^{\frac{1}{2}}\left(\sum_{k=0}^2 \int_{-\infty}^{\infty} J^T_\mu\left(\Omega^k\phi_l \right) n^\mu_{t} dVol_{t_0}\right)^{\frac{1}{2}} e^{-ct_0}\\
\leq &C \int J^{T}_{\mu}\left(\psi_l\right)n_{t_0}^{\mu}dVol_{t_0}+C t_0^{-2} \sum_{k=0}^2 E_0\left(\Omega^k \phi_l\right),
\end{split}
\end{equation*}
Subtract the terms with $K$ from both sides and get
\begin{equation*}
\begin{split}
&\int_{t_0}^{t_*} |K^{\tilde{X},w^{\tilde{X}}}\left(\psi_l\right)| + K^{X_l,w^X}\left(\psi_l\right)dVol\\
\leq &C\left(\int J^T_{\mu}\left(\psi_l \right)n_{t_*}^{\mu}dVol_{t_*}+\int J^T_{\mu}\left(\psi_l \right)n_{t_0}^{\mu}dVol_{t_0}\right)+C t_0^{-2+\delta}\displaystyle\sum_{k=0}^2 E_0\left(\Omega^k \phi_l \right).
\end{split}
\end{equation*}
Sum over $l \ge 0$ to get the Proposition.
\end{proof}
We localize the estimates in the above Proposition to obtain decay as in \cite{DRS}.
\begin{proposition}
Let $t_0 \leq t_1 \leq \left(1.1\right)t_0$, $|r_1^*|+|r_2^*|\leq\frac{t_0}{2}$. Then 
\begin{equation*}
\begin{split}
&\int_{\mathcal{P}} |K^{\tilde{X},w^{\tilde{X}}}\left(\psi\right)|+\displaystyle\sum_l K^{X_l,w^{X_l}}\left(\psi_l\right) dVol\\
\leq &C\left(t_0^{-2}\int J^Z_\mu\left(\psi\right) n^\mu_0 dVol_{t_0}+t_0^{-2+\delta}\displaystyle\sum_{k=0}^{2}\displaystyle\sum_{m=0}^1 E_0\left(\partial_t^m\Omega^k\phi \right)\right),
\end{split}
\end{equation*}
where $\mathcal{P}=\{t_0\leq t\leq t_1, -\frac{t}{2}\leq r^* \leq \frac{t}{2}\}$ or $\mathcal{P}=\{t_0\leq t\leq t_1, r_1^*-\left(t_1-t\right) \leq r^* \leq r_2^* +\left(t_1-t\right)\}$.
\end{proposition}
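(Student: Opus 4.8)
The plan is to localize the global spacetime estimate of Proposition 13 to the truncated region $\mathcal P$, the gain being that the boundary fluxes may now be taken over faces on which $|r^*|\lesssim t$, so that $u$ and $v$ are both comparable to $t$ and the conformal-energy current $J^Z(\psi)$ dominates the $T$-flux with an extra factor of $t^{-2}$. Concretely, I would apply the inhomogeneous divergence identity of Proposition 2 to the vector fields $\tilde X$ and $X_l$ (with modifications $w^{\tilde X},w^{X_l}$) over $\mathcal P$, summing over $l$ and inserting the absolute value on $K^{\tilde X,w^{\tilde X}}$ exactly as in Proposition 13 (using that $K^{\tilde X,w^{\tilde X}}$ vanishes for $r^*\le 1$, is nonnegative for $r^*\ge\max\{100,100M\}$, and is dominated by $\sum_l K^{X_l,w^{X_l}}(\psi_l)$ on the intermediate band). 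This writes $\int_{\mathcal P}\bigl(|K^{\tilde X,w^{\tilde X}}(\psi)|+\sum_l K^{X_l,w^{X_l}}(\psi_l)\bigr)\,dVol$ in terms of (i) fluxes on the two time-slice faces $t=t_0$ and $t=t_1$, (ii) fluxes on the lateral boundary of $\mathcal P$, and (iii) the inhomogeneous error $\int_{\mathcal P}(r^{-1}\psi+\partial_{r^*}\psi)\Box_g\psi\,dVol$.

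For the time-slice faces I would bound $|J^{\tilde X,w^{\tilde X}}_\mu(\psi)n^\mu_t|+\sum_l|J^{X_l,w^{X_l}}_\mu(\psi_l)n^\mu_t|$ by $C\,J^T_\mu(\psi)n^\mu_t$ using Cauchy--Schwarz and the Hardy inequality (Lemma 6), precisely as the boundary terms were handled in Propositions 7 and 8. On these faces $|r^*|\lesssim t$, so $u,v\gtrsim t$, and Proposition 3 (part 2) gives $\int_{|r^*|\lesssim t}J^T_\mu(\psi)n^\mu_t\,dVol_t\le C t^{-2}\int J^Z_\mu(\psi)n^\mu_t\,dVol_t$. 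On the bottom face $t=t_0$ this produces the principal term $C t_0^{-2}\int J^Z_\mu(\psi)n^\mu_0\,dVol_{t_0}$ with \emph{no} loss of $\delta$, as required. On the top face $t=t_1$ I would \emph{not} invoke Theorem 10.2 directly (that costs a factor $t_1^\delta$ on the conformal energy), but instead relate $\int J^Z_\mu(\psi)n^\mu_{t_1}$ to $\int J^Z_\mu(\psi)n^\mu_{t_0}$ through the conformal-energy identity (Proposition 2 applied to $Z,w^Z$) over the \emph{single} dyadic slab $[t_0,t_1]$. The error generated there, coming from $K^{Z,w^Z}(\psi)$ and from pairing the inhomogeneity $\Box_g\psi=g_1\partial_{r^*}\phi+g_2\nabb\Omega\phi$ with $Z\psi$, is exactly what forces the $\partial_t$-commuted energies: after the $t_1^{-2}$ prefactor it contributes $C t_0^{-2+\delta}\sum_{k\le 2}\sum_{m\le 1}E_0(\partial_t^m\Omega^k\phi)$, matching the stated right-hand side.

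The inhomogeneous error (iii) is handled exactly as in Proposition 13, with the simplification that $\mathcal P$ already lives over a single dyadic interval, so no summation over the $n$ dyadic slabs is needed; splitting into $\{r^*\ge t/2\}$, $\{|r^*|\le t/2\}$ and $\{r^*\le -t/2\}$ and invoking Theorem 5 together with Corollary 9.4--9.5 as before bounds it by $C t_0^{-2+\delta}\sum_{k\le2}\sum_{m\le1}E_0(\partial_t^m\Omega^k\phi)$.

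The delicate point, and the step I expect to be the main obstacle, is the lateral boundary contribution (ii), since this is exactly where the localization ``leaks.'' For the first choice of $\mathcal P$ the sides $\{r^*=\pm t/2\}$ are \emph{timelike}; here I would observe that $\tilde X$ vanishes identically on the ingoing side $\{r^*=-t/2<0\}$, that on both sides $u,v\sim t$, and that the flux of $J^{X,w}$ through these faces either carries a favorable sign (and so may be discarded in the upper bound) or is dominated by $C t^{-2}$ times a conformal-energy flux through the same face. For the second choice of $\mathcal P$ the sides are the \emph{characteristic} surfaces $\{u=\mathrm{const}\}$ and $\{v=\mathrm{const}\}$, on which the $J^{X,w}$-flux is a genuine signed characteristic energy; again $u,v\sim t$ throughout, so the gain of $t^{-2}$ relative to the conformal energy is available. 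Establishing the correct sign, or the conformal-energy domination, of these lateral fluxes uniformly across both geometries is the technical heart of the proof; everything else is a localized repackaging of Proposition 13 together with the decay mechanism $J^T\lesssim t^{-2}J^Z$, valid precisely because $u,v\sim t$ on $\mathcal P$.
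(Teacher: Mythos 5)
Your plan diverges from the paper's at the decisive point, and the step you yourself flag as ``the technical heart'' --- the lateral boundary contribution (ii) --- is a genuine gap, not a technicality. The fluxes $J^{\tilde X,w^{\tilde X}}_\mu(\psi)n^\mu$ and $J^{X_l,w^{X_l}}_\mu(\psi_l)n^\mu$ through the timelike faces $\{r^*=\pm t/2\}$ or the null faces $\{u=\mathrm{const}\}$, $\{v=\mathrm{const}\}$ contain cross terms such as $f\,\partial_t\psi\,\partial_{r^*}\psi$ and $\bigl(f'+\tfrac{2(1-\mu)}{r}f\bigr)\psi\,\partial\psi$ and have no definite sign; moreover the paper establishes no conformal-energy flux estimate through any of these lateral hypersurfaces (Theorem 10.2 and Proposition 3 live on full $t$-slices only), so the claimed domination ``by $Ct^{-2}$ times a conformal-energy flux through the same face'' has nothing to rest on. A secondary problem is your treatment of the top face $t=t_1$: running the $Z$-identity over the slab $[t_0,t_1]$ produces the error $\int_{t_0}^{t_1}\int_{r_1^*}^{r_2^*}t\,(\psi^2+|\nabb\psi|^2)$ from $K^{Z,w^Z}(\psi)$, and bounding that with the decay you need is essentially the content of the very proposition you are proving (it is exactly how Proposition 14 feeds into Proposition 15 later), so this route is circular; the only non-circular input, Proposition 13, yields a bound of size $t_0\cdot\bigl(\int J^T(\psi)n_{t_0}+\cdots\bigr)$, which after the $t_1^{-2}$ prefactor gives only $t_0^{-1}\int J^T(\psi)n_{t_0}$, not the stated right-hand side.

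The paper avoids all lateral boundaries by a cutoff-and-domain-of-dependence device that your proposal is missing. One sets $\tilde\phi=\chi\bigl(\tfrac{r^*}{0.65t_0}\bigr)\phi$, $\partial_t\tilde\phi=\chi\bigl(\tfrac{r^*}{0.65t_0}\bigr)\partial_t\phi$ at $t=t_0$, solves $\Box_g\tilde\phi=0$ forward, and sets $\tilde\psi=S\tilde\phi$. By finite speed of propagation $\tilde\psi=\psi$ on $\mathcal P$, so the \emph{global} Proposition 13 applied to $\tilde\psi$ already bounds the left-hand side, with boundary terms only on complete $t$-slices. The gain of $t_0^{-2}$ comes not from localizing the divergence identity but from the support of the cutoff data: on $\{|r^*|\leq 0.715t_0\}$ one has $u,v\sim t_0$, whence $\int J^T_\mu(\tilde\psi)n^\mu_{t_0}\leq Ct_0^{-2}\bigl(\int J^Z_\mu(\psi)n^\mu_{t_0}+E_0(\phi)\bigr)$ (using the one-dimensional Poincar\'e-type estimate and Proposition 3), and the $t_1$-slice flux is then recovered from the $T$-conservation law for $\tilde\psi$ with an integration by parts in $t$, whose error terms are controlled by Propositions 3, 7, 8 and 13 and absorb into $t_0^{-2+\delta}\sum_{m\le1}\sum_{k\le2}E_0(\partial_t^m\Omega^k\phi)$. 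Your splitting of the inhomogeneous error (iii) is consistent with the paper, but without the cutoff construction the argument as proposed does not close.
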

\begin{proof}
Let $\chi=\left\{\begin{array}{clcr}1&|x|\le 1\\0&|x|\ge 1.1\end{array}\right.$.
On $t=t_0$, let $\tilde{\phi}=\chi\left(\frac{r^*}{0.65t_0}\right)\phi$, $\partial_t\tilde{\phi}=\chi\left(\frac{r^*}{0.65t_0}\right)\partial_t\phi$ and solve for $\Box_g\tilde{\phi}=0$ for $t \geq t_0$.\\
Following \cite{DRS}, we have 
$$\int_{-0.715t_0}^{0.715t_0} \frac{1}{\sqrt{1-\mu }}\phi^2 dVol_{t_0} \leq \int J^Z_{\mu}\left(\phi \right)n_{t_0}^{\mu}dVol_{t_0}.$$
This is true because of Proposition 3.2, 3.3 and an elementary one-dimensional estimate:
$$\int_{-a}^{a} |f\left(x\right)|^2 dx \leq C a^2 \left(\int_{-a}^a |\partial_x f\left(x\right)|^2 +\int_{-1}^1 |f\left(x\right)|^2 dx\right),$$
for $a \geq 1$.\footnote{One can prove this one-dimensional estimate by first considering $g=0$ on $[-\frac{1}{2}, \frac{1}{2}]$ and the trivial bound $\int_{-a}^a |g\left(x\right)| dx \leq C a \int_{-a}^a |\partial_xg\left(x\right)| dx.$ Then one sets $g\left(x\right)=f\left(x\right)^2$ and use Cauchy-Schwarz to get $\int_{-a}^a |f\left(x\right)|^2 dx \leq C a^2 \int_{-a}^a |\partial_xf\left(x\right)|^2 dx.$. Finally, one cuts off $f\left(x\right)$ to be identically zero in $[-\frac{1}{2}, \frac{1}{2}]$.}\\
Using this, we can estimate the current of $\tilde{\phi}$:
\begin{equation*}
\begin{split}
&\int J^T_{\mu}\left(\tilde{\phi} \right)n_{t_0}^{\mu}dVol_{t_0}\\
\leq &\int_{-0.715t_0}^{0.715t_0} J^T_{\mu}\left(\phi \right)n_{t_0}^{\mu}dVol_{t_0} + C t_0^{-2}\int_{-0.715t_0}^{0.715t_0} \frac{1}{\sqrt{1-\mu }}\phi^2 dVol_{t_0}\\
\leq & C t_0^{-2} \int J^Z_{\mu}\left(\phi \right)n_{t_0}^{\mu}dVol_{t_0}.
\end{split}
\end{equation*}
Similarly,
$$\int J^T_{\mu}\left(\Omega\tilde{\phi} \right)n_{t_0}^{\mu}dVol_{t_0} \leq C t_0^{-2} \int J^Z_{\mu}\left(\Omega\phi \right)n_{t_0}^{\mu}dVol_{t_0}.$$
Define $\tilde{\psi}=S\tilde{\phi}$.
By Proposition 13, 
\begin{equation*}
\begin{split}
&\int_{\mathcal{P}} |K^{\tilde{X},w^{\tilde{X}}}\left(\tilde{\psi}\right)|+\displaystyle\sum_l K^{X_l,w^{X_l}}\left(\tilde{\psi_l}\right) dVol\\
\leq &C\left(\int J^T_{\mu}\left(\tilde{\psi } \right)n_{t_1}^{\mu}dVol_{t_1}+\int J^T_{\mu}\left(\tilde{\psi} \right)n_{t_0}^{\mu}dVol_{t_0}\right)+C t_0^{-2+\delta}\displaystyle\sum_{k=0}^2 E_0\left(\Omega^k \tilde{\phi} \right).
\end{split}
\end{equation*}
The left hand side equals $\int_{\mathcal{P}} |K^{\tilde{X},w^{\tilde{X}}}\left(\psi\right)|+\displaystyle\sum_l K^{X_l,w^{X_l}}\left(\psi_l\right) dVol$ by finite speed of propagation.\\
$\int J^T_{\mu}\left(\tilde{\psi} \right)n_{t_0}^{\mu}dVol_{t_0}$ can be estimated in a similar way as $\int J^T_{\mu}\left(\tilde{\phi} \right)n_{t_0}^{\mu}dVol_{t_0}$. More specifically, we claim that 
$$\int J^T_{\mu}\left(\tilde{\psi} \right)n_{t_0}^{\mu}dVol_{t_0} \leq C t_0^{-2} \left(\int J^Z_{\mu}\left(\psi \right)n_{t_0}^{\mu}dVol_{t_0}+E_0\left(\phi\right)\right).$$
To see this, we first note that
$$\tilde{\psi}=\chi\left(\frac{r^*}{0.65t_0}\right)\psi+\frac{r^*}{0.65t_0}\chi'\left(\frac{r^*}{0.65t_0}\right)\phi.$$
Also note that on the support of $\tilde{\psi}$, $\frac{|r^*|}{t_0}\leq C$.
Therefore,
\begin{equation*}
\begin{split}
&\int J^T_{\mu}\left(\tilde{\psi} \right)n_{t_0}^{\mu}dVol_{t_0}\\
\leq &\int_{-0.715t_0}^{0.715t_0} J^T_{\mu}\left(\psi \right)n_{t_0}^{\mu}dVol_{t_0} + C t_0^{-2}\int_{-0.715t_0}^{0.715t_0} \frac{1}{\sqrt{1-\mu }}\psi^2 dVol_{t_0} +C\int_{-0.715t_0}^{0.715t_0} J^T_{\mu}\left(\phi \right)n_{t_0}^{\mu}dVol_{t_0}\\
&+ C t_0^{-2}\int_{-0.715t_0}^{0.715t_0} \frac{1}{\sqrt{1-\mu }}\phi^2 dVol_{t_0}\\
\leq &C t_0^{-2} \left(\int J^Z_{\mu}\left(\psi \right)n_{t_0}^{\mu}dVol_{t_0}+\int J^Z_{\mu}\left(\phi \right)n_{t_0}^{\mu}dVol_{t_0}\right)\\
\leq &C t_0^{-2} \left(\int J^Z_{\mu}\left(\psi \right)n_{t_0}^{\mu}dVol_{t_0}+E_0\left(\phi\right)\right).
\end{split}
\end{equation*}
We would now want to control $\int J^T_{\mu}\left(\tilde{\psi } \right)n_{t_1}^{\mu}dVol_{t_1}$. Using the conservation law for $T$ and an integration by parts in $t$,
\begin{equation*}
\begin{split}
&\int J^T_{\mu}\left(\tilde{\psi } \right)n_{t_1}^{\mu}dVol_{t_1}\\
=&\int J^T_{\mu}\left(\tilde{\psi } \right)n_{t_0}^{\mu}dVol_{t_0}-\int \partial_t\tilde{\psi}\Box_g\tilde{\psi} dVol\\
\leq &\int J^T_{\mu}\left(\tilde{\psi } \right)n_{t_0}^{\mu}dVol_{t_0}+|\int \tilde{\psi}\Box_g\left(\partial_t\tilde{\psi}\right)dVol|+|\int \tilde{\psi}\Box_g \tilde{\psi}\sqrt{1-\mu } dVol_{t_0}|\\
&+|\int \tilde{\psi}\Box_g \tilde{\psi}\sqrt{1-\mu } dVol_{t_1}|.
\end{split}
\end{equation*}
We first estimate the spacetime error term in this expression. Using Proposition 3, 7, and 8,
\begin{equation*}
\begin{split}
&|\int \tilde{\psi}\Box_g \left(\partial_t\tilde{\psi}\right)dVol|\\
\leq &\int |\tilde{\psi}\Box_g \left(S \left(\partial_t\tilde{\phi}\right)\right)|dVol+\int |\tilde{\psi}\Box_g \left(\partial_t\tilde{\phi}\right)|dVol\\
\leq &C\int \frac{\left(1+|r^*|\right)\left(\log r\right)_+}{r^3}|\tilde{\psi}|\left(|\partial_{r^*}\partial_t\tilde{\phi}|+|\nabb \left(\partial_t\Omega\tilde{\phi}\right)|\right)dVol\\
\leq &C\left(\int r^{\delta} \frac{r^{1-\frac{\delta}{4}}\tilde{\psi}^2}{\left(1+|r^*|\right)^{4}} dVol\right)^{\frac{1}{2}}\left(\int \frac{\left(1+|r^*|\right)^6\left(\left(\partial_t\partial_{r^*}\tilde{\phi}\right)^2+|\nabb \left(\Omega\partial_t\tilde{\phi}\right)|^2\right)}{r^{7+\frac{\delta}{4}}}dVol\right)^{\frac{1}{2}}\\
\leq &C t_0^{\frac{\delta}{2}}\left(\int |K^{\tilde{X},w^{\tilde{X}}}\left(\tilde{\psi}\right)|+\displaystyle\sum_l K^{X_l}\left(\tilde{\psi_l} \right) dVol\right)^{\frac{1}{2}}\\
&\times\left(\displaystyle\sum_{k=0}^2 \int |K^{\tilde{X},w^{\tilde{X}}}\left(\partial_t\Omega^k\tilde{\phi} \right)|+\displaystyle\sum_l K^{X_l,w^{X_l}}\left(\partial_t\Omega^k\tilde{\phi_l} \right) dVol\right)^{\frac{1}{2}}\\
\leq &C t_0^{\frac{\delta}{2}}\left(\int J^T_{\mu}\left(\tilde{\psi } \right)n_{t_1}^{\mu}dVol_{t_1}+\int J^T_{\mu}\left(\tilde{\psi } \right)n_{t_0}^{\mu}dVol_{t_0}+t_0^{-2+\delta}\sum_{k=0}^2 E_0\left(\Omega^k\tilde{\phi}\right)\right)^{\frac{1}{2}}\\
&\times\left(\sum_{k=0}^2\int J^T_{\mu}\left(\partial_t\Omega^k\tilde{\phi } \right)n_{t_0}^{\mu}dVol_{t_0}\right)^{\frac{1}{2}}\\
\leq &\frac{1}{4}\int J^T_{\mu}\left(\tilde{\psi } \right)n_{t_1}^{\mu}dVol_{t_1}+C\int J^T_{\mu}\left(\tilde{\psi } \right)n_{t_0}^{\mu}dVol_{t_0}+C t_0^{-2+\delta}\displaystyle\sum_{k=0}^2 E_0\left(\Omega^k \tilde{\phi} \right)\\
&+ Ct_0^{-2+\delta}\left(\sum_{k=0}^2\int J^{Z,w^Z}_{\mu}\left(\partial_t\Omega^k\tilde{\phi } \right)n_{t_0}^{\mu}dVol_{t_0}\right)\\
\leq &\frac{1}{4}\int J^T_{\mu}\left(\tilde{\psi } \right)n_{t_1}^{\mu}dVol_{t_1}+C\int J^T_{\mu}\left(\tilde{\psi } \right)n_{t_0}^{\mu}dVol_{t_0}+C t_0^{-2+\delta}\displaystyle\sum_{m=0}^1\sum_{k=0}^2 E_0\left(\partial_t^m\Omega^k \tilde{\phi} \right)\\
\end{split}
\end{equation*}
where at the third to last step we again used Proposition 13.\\
We estimate the boundary terms using Lemma 6 and Corollary 9.1
\begin{equation*}
\begin{split}
&|\int \tilde{\psi}\Box_g \tilde{\psi}\sqrt{1-\mu } dVol_{t_0}|\\
\leq& C \int \left(1+|r^*|\right)r^{-3}\left(\log r\right)_+ \sqrt{1-\mu }|\tilde{\psi}|\left(|\partial_{r^*}\tilde{\phi}|+|\nabb\Omega\tilde{\phi}|\right) dVol_{t_0}\\
\leq &C \left(\int \left(1+|r^*|\right)^{-2}\tilde{\psi}^2\left(1-\mu \right)^{-\frac{1}{2}} dVol_{t_0}\right)^{\frac{1}{2}}\left(\int \left(\left(\partial_{r^*}\tilde{\phi}\right)^2+|\nabb\Omega\tilde{\phi}|^2\right) \sqrt{1-\mu }dVol_{t_0}\right)^{\frac{1}{2}}\\
\leq & C \int J^T_{\mu}\left(\tilde{\psi}\right)n_{t_0}^{\mu}dVol_{t_0}+C \int \left(J^T_{\mu}\left(\tilde{\phi}\right)+J^T_{\mu}\left(\Omega\tilde{\phi}\right)\right)n_{t_0}^{\mu}dVol_{t_0}.
\end{split}
\end{equation*}
The term for $t=t_1$ is done analogously, but with a more careful choice of constant.
\begin{equation*}
\begin{split}
&|\int \tilde{\psi}\Box_g \tilde{\psi}\sqrt{1-\mu } dVol_{t_1}|\\
\leq& C \int \left( 1+|r^*|\right) r^{-3}\left( \log r \right)_+ \sqrt{1-\mu }|\tilde{\psi}|\left(|\partial_{r^*}\tilde{\phi}|+|\nabb\Omega\tilde{\phi}|\right) dVol_{t_1}\\
\leq &C \left(\int \left( 1+|r^*|\right)^{-2}\tilde{\psi}^2\left(1-\mu \right)^{-\frac{1}{2}} dVol_{t_1}\right)^{\frac{1}{2}}\left(\int \left(\left(\partial_{r^*}\tilde{\phi}\right)^2+|\nabb\Omega\tilde{\phi}|^2\right) \sqrt{1-\mu }dVol_{t_1}\right)^{\frac{1}{2}}\\
\leq & \frac{1}{4} \int J^T_{\mu}\left(\tilde{\psi}\right)n_{t_1}^{\mu}dVol_{t_1}+C \int \left(J^T_{\mu}\left( \tilde{\phi}\right)+J^T_{\mu}\left(\Omega\tilde{\phi}\right)\right)n_{t_1}^{\mu}dVol_{t_1}\\
=& \frac{1}{4} \int J^T_{\mu}\left(\tilde{\psi}\right)n_{t_1}^{\mu}dVol_{t_1}+C \int \left(J^T_{\mu}\left(\tilde{\phi}\right)+J^T_{\mu}\left(\Omega\tilde{\phi}\right)\right)n_{t_0}^{\mu}dVol_{t_0}.
\end{split}
\end{equation*}
Combining these estimates and subtracting $\frac{1}{2} \int J^T_{\mu}\left(\tilde{\psi}\right)n_{t_1}^{\mu}dVol_{t_1}$ on both sides, we get
\begin{equation*}
\begin{split}
&\int J^T_{\mu}\left(\tilde{\psi } \right)n_{t_1}^{\mu}dVol_{t_1}\\
\leq &Ct_0^{-2}\int J^Z_{\mu}\left(\psi \right)n_{t_0}^{\mu}dVol_{t_0}+ C t_0^{-2+\delta}\sum_{m=0}^1 \sum_{k=0}^{2}  E_0\left(\partial_t^m\Omega^k \tilde{\phi} \right).
\end{split}
\end{equation*}
It remains to control $\displaystyle\sum_{m=0}^1 \sum_{k=0}^{2} E_0\left(\Omega^k \partial_t^m \tilde{\phi} \right)$.
\begin{equation*}
\begin{split}
&\sum_{m=0}^1 \sum_{k=0}^{2} E_0\left(\partial_t^m \Omega^k \tilde{\phi} \right)\\
\leq &C \int_{-0.715t_0}^{0.715t_0}\left(\sum_{m=0}^1 \sum_{k=0}^5 J^N_\mu\left(\partial_t^m\Omega^k\tilde{\phi} \right)n_{t_0}^\mu+\sum_{m=0}^1 \sum_{k=0}^4J^Z_\mu\left(\partial_t^m\Omega^k\tilde{\phi} \right)n_{t_0}^\mu\right) dVol_{ \{ t=t_0\}}\\
\leq &C \int_{-0.715t_0}^{0.715t_0} \left(\sum_{m=0}^1\sum_{k=0}^5 J^N_\mu\left(\partial_t^m\Omega^k\phi \right)n_{t_0}^\mu+\sum_{m=0}^1 \sum_{k=0}^4 J^Z_\mu\left(\partial_t^m\Omega^k\phi \right)n_{t_0}^\mu\right) dVol_{ \{ t=t_0\}}\\ 
&+ Ct_0^{-2} \int_{-0.715t_0}^{0.715t_0} \left(\sum_{m=0}^1 \sum_{k=0}^5\left(\partial_t^m\Omega^k\phi\right)^2 +\sum_{m=0}^1 \sum_{k=0}^4 t_0^2 \left(\partial_t^m\Omega^k\phi \right)^2 \right) r^2 dAdr^*\\
\leq &C \sum_{m=0}^1 \sum_{k=0}^2 E_0\left(\partial_t^m\Omega^k \phi\right).
\end{split}
\end{equation*}
\end{proof}
After establishing the $X$ estimates, we turn to the $Z$ estimates for $\psi$.
\begin{proposition}
\begin{equation*}
\begin{split}
&\int J^{Z,w^Z}_\mu \left( \psi \right) n^\mu_{t_*} dVol_{t_*}\\
\leq &C \int J^{Z,w^Z}_\mu \left( \psi \right) n^\mu_{t_0} dVol_{t_0}+C\displaystyle\sum_{k=0}^1\int_{t_0}^{t_*}\int_{r_1^*}^{r_2^*}t \displaystyle\sum_l K^{X_l,w^{X_l}}\left( \Omega^k\psi_l \right) dVol\\
&+ C\left( \int_{t_0}^{t_*}\int_{-\frac{t}{2}}^{\frac{t}{2}} t^{2+2\delta }\left( |K^{\tilde{X},w^{\tilde{X}}}\left( \psi \right) |+\displaystyle\sum_l K^{X_l,w^{X_l}}\left( \psi_l \right)\right) dVol\right)^{\frac{1}{2}}\left( \sum_{m=0}^1 \sum_{k=0}^2 E_0\left(\partial_t^m \Omega^k \phi \right)\right)^{\frac{1}{2}}\\
& + Ct_*^\delta \sum_{k=0}^2 E_0\left( \Omega^k\phi \right).
\end{split}
\end{equation*}
\end{proposition}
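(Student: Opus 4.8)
The plan is to run the conformal-energy argument of \cite{DRS} for $\psi$ while paying for the inhomogeneity of its equation. I would apply the inhomogeneous conservation law of Proposition 2 to the multiplier $Z$ with weight $w^Z$, using $F:=\Box_g\psi=g_1(r^*)\partial_{r^*}\phi+g_2(r^*)\lapp\phi$ from Proposition 4, between $\Sigma_0=\{t=t_0\}$ and $\Sigma_1=\{t=t_*\}$:
\begin{equation*}
\int J^{Z,w^Z}_\mu(\psi)n^\mu_{t_*}dVol_{t_*}=\int J^{Z,w^Z}_\mu(\psi)n^\mu_{t_0}dVol_{t_0}-\int_B K^{Z,w^Z}(\psi)\,dVol-\int_B\left(FZ^\nu\partial_\nu\psi-\frac14 Fw^Z\psi\right)dVol.
\end{equation*}
The first term on the right is the data term in the statement, so it remains to estimate the bulk integral of $K^{Z,w^Z}(\psi)$ and the two error integrals.

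For the bulk term I would use the sign information recorded in Section 3.4: that $K^{Z,w^Z}\ge 0$ for $r^*\le r_1^*$ and for $r^*\ge r_2^*$ is a statement about the geometric weights alone, so it holds for $\psi$ exactly as for $\phi$, and on those two regions $-K^{Z,w^Z}(\psi)$ has a favourable sign and is discarded. On the compact slab $r_1^*\le r^*\le r_2^*$ the formula for $K^{Z,w^Z}$ gives the pointwise bound $|K^{Z,w^Z}(\psi)|\le Ct(\psi^2+|\nabb\psi|^2)$, and there the coercivity estimates of Proposition 3 (with Proposition 7 supplying the zeroth harmonic $\psi_0^2$) bound $\psi^2$ and $|\nabb\psi|^2$ by $\sum_l K^{X_l,w^{X_l}}(\psi_l)$ and $\sum_l K^{X_l,w^{X_l}}(\Omega\psi_l)$. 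This produces the term $\sum_{k=0}^1\int_{t_0}^{t_*}\int_{r_1^*}^{r_2^*}t\sum_l K^{X_l,w^{X_l}}(\Omega^k\psi_l)\,dVol$, the single $\Omega$ in the $k=1$ piece being forced by the degeneracy of the angular weight at $r=3M$.

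The error integrals are the heart of the matter. Rewriting $Z^\nu\partial_\nu\psi=\frac{t^2+(r^*)^2}{2}\partial_t\psi+tr^*\partial_{r^*}\psi$ isolates the dangerous piece: the $X$-type currents control $\partial_{r^*}\psi$ and $\psi$ but \emph{not} $\partial_t\psi$, so $\int_B F\frac{t^2+(r^*)^2}{2}\partial_t\psi$ cannot be Cauchy--Schwarzed against them directly. The key step is to integrate by parts in $t$; because $g_1,g_2,r,\mu$ are $t$-independent, the derivative falls onto $F$ and yields $\partial_tF=g_1\partial_{r^*}\partial_t\phi+g_2\lapp\partial_t\phi$ together with an undifferentiated factor $tF$, plus boundary terms on $\{t=t_0\}$ and $\{t=t_*\}$. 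This is exactly the mechanism that introduces one $\partial_t$ on $\phi$, hence the appearance of $m=1$ in $\sum_{m=0}^1\sum_{k=0}^2 E_0(\partial_t^m\Omega^k\phi)$. The remaining pieces $tr^*\partial_{r^*}\psi$ and $\frac14 w^Z\psi$ involve only $\partial_{r^*}\psi$ and $\psi$, which the $X$-currents do control, so they need no integration by parts. Recognizing that the $X$-currents give no control on $\partial_t\psi$, that integrating by parts in $t$ is the correct device, and that the ensuing boundary-in-$t$ terms are either absorbable into $\int J^{Z,w^Z}_\mu(\psi)n^\mu_{t_*}dVol_{t_*}$ (after Cauchy--Schwarz against the conformal weight $\frac{t^2+(r^*)^2}{r^2}\psi^2$ of $J^{Z,w^Z}$) or bounded by the data, is the crux of the argument.

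Each resulting integral I would split into $\{r^*\ge t/2\}$, $\{-t/2\le r^*\le t/2\}$ and $\{r^*\le -t/2\}$ and apply Cauchy--Schwarz with a weight placing $t^{1+\delta}$ on the $\psi$-factor and $t^{1-\delta}$ on the $\phi$-factor. On the middle region I would sum over dyadic slabs $t_i\le t\le t_{i+1}$: there the $\phi$-factor is controlled by Corollary 9.4--9.5 (the latter applied to $\Omega\phi$, which supplies the second angular derivative, and to $\partial_t\Omega\phi$ after the integration by parts), giving the decay $t_i^{-2}\sum_{m=0}^1\sum_{k=0}^2 E_0(\partial_t^m\Omega^k\phi)$, and against the $t^{1-\delta}$ weight the series $\sum_i t_i^{-2\delta}$ converges, so the $\phi$-contribution is exactly $\sum_{m=0}^1\sum_{k=0}^2 E_0(\partial_t^m\Omega^k\phi)$; the matching $\psi$-factor assembles into $\int_{-t/2}^{t/2}t^{2+2\delta}(|K^{\tilde X,w^{\tilde X}}(\psi)|+\sum_l K^{X_l,w^{X_l}}(\psi_l))\,dVol$, to be absorbed later by the bootstrap, and this is the mixed product term in the statement. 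On $\{r^*\ge t/2\}$ the negative powers of $r^*$ in $g_1,g_2$ (Proposition 4) are converted into powers of $t$ since $r^*\ge t/2$, and on $\{r^*\le -t/2\}$ the factor $(1-\mu)^{1/2}\le Ce^{cr^*/2}$ in $dVol$ supplies exponential decay in $t$ that dominates the mild $|r^*|$ growth of $g_1,g_2$ near the horizon; estimating these two regions slicewise and integrating in $t$ feeds the $Ct_*^\delta\sum_{k=0}^2 E_0(\Omega^k\phi)$ term.
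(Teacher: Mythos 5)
Your proposal is correct and follows essentially the same route as the paper: the inhomogeneous $Z$-identity, discarding the good-signed part of $K^{Z,w^Z}$ outside a compact $r^*$-interval and paying $t\sum_l K^{X_l,w^{X_l}}(\Omega^k\psi_l)$ there, the integration by parts in $t$ to trade the uncontrolled $\partial_t\psi$ against $\Box_g(\partial_t\psi)$ (the source of the $m=1$ energies and of the $t^{2+2\delta}$-weighted $K$-term), and the three-region/dyadic splitting with Corollary 9 in the middle, $r^*$-decay converted to $t$-decay outside, and the $(1-\mu)$ exponential factor near the horizon. The only cosmetic difference is that the paper splits into regions first and integrates by parts in $t$ only on the middle region (choosing the dyadic region boundaries precisely to avoid spurious boundary terms), and it packages the final absorption of $\sup_t\bigl(\int J^{Z,w^Z}(\psi)n^\mu_t\bigr)^{1/2}$ via Lemma 12; neither changes the substance of your argument.
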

\begin{proof}
By Proposition 2 applied to the vector field $Z$,
\begin{equation*}
\begin{split}
&\int J^{Z,w^Z}_{\mu}\left(\psi\right) n_{t_*}^{\mu} dVol_{\{ t=t_*\} }\\
=&\int J^{Z,w^Z}_{\mu}\left(\psi \right) n_{t_0}^{\mu} dVol_{\{ t=t_0\} } +\int K^{Z,w^Z}\left(\psi \right) dVol-\int \frac{tr^*\left(1-\mu \right)}{2r}\Box_g\psi dVol\\
&+\int \left(u^2\partial_u\psi+v^2\partial_v\psi \right)\Box_g\psi dVol.
\end{split}
\end{equation*}
As remarked before, there exists $r_1^*, r_2^*$ with $r_1^* < r_2^*$ such that $K^{Z,w^Z}\left(\psi \right)$ is non-positive for $r^* \leq r_1^*$ or $r^* \geq r_2^*$.
Therefore,
\begin{equation*}
\begin{split}
\int K^{Z,w^Z}\left(\psi \right) dVol \leq &\int_{r_1^*}^{r_2^*} K^{Z,w^Z}\left(\psi \right) dVol\\
\leq & C\int_{t_0}^{t_*}\int_{r_1^*}^{r_2^*} t\left(\psi^2+|\nabb\psi|^2\right) dVol\\
\leq & C\displaystyle\sum_{k=0}^1\int_{t_0}^{t_*}\int_{r_1^*}^{r_2^*}t\displaystyle\sum_l K^{X_l,w^{X_l}}\left(\Omega^k\psi_l \right) dVol.
\end{split}
\end{equation*}
For the first error term, we again estimate by looking at three separate regions.\\
By Proposition 3.3 and Corollary 9.1,
\begin{equation*}
\begin{split}
&|\int_{t_0}^{t_*}\int_{\frac{t}{2}}^{\infty} \frac{tr^*\left( 1-\mu \right)}{2r}\psi\Box_g\psi dVol|\\
\leq &C \int_{t_0}^{t_*} \left(\int_{\frac{t}{2}}^{\infty}\int_{\mathbb S^2} \frac{t^2}{r^2} \psi^2 r^2 \left( 1-\mu \right) dA dr^*\right)^{\frac{1}{2}}\left(\int_{\frac{t}{2}}^{\infty}\int_{\mathbb S^2} \frac{\left(\log r\right)_{+}^2}{r^2}\left(\left(\partial_{r^*}\phi \right)^2+|\nabb\Omega\phi|^2 \right) r^2 dAdr^* \right)^{\frac{1}{2}}dt\\
\leq &C \sup_{t_0 \leq t \leq t_*} \left(\int J^{Z,w^Z}_{\mu}\left(\psi \right) n_{t}^{\mu} dVol_{t}\right)^{\frac{1}{2}}\left(\sum_{k=0}^1 \int J^T_{\mu}\left(\Omega^k\phi \right) n_{t_0}^{\mu} dVol_{\{ t=t_0\} }\right)^{\frac{1}{2}}\int_{t_0}^{t_*} t^{-1+\frac{\delta}{2}} dt\\
\leq &C t_*^{\frac{\delta}{2}}\sup_{t_0 \leq t \leq t_*} \left(\int J^{Z,w^Z}_{\mu}\left(\psi \right) n_{t}^{\mu} dVol_{t}\right)^{\frac{1}{2}}\left(\sum_{k=0}^1 E_0\left(\Omega^k\phi\right)\right)^{\frac{1}{2}}.
\end{split}
\end{equation*}
By Proposition 3.3, Corollary 9.4 and Corollary 9.5,
\begin{equation*}
\begin{split}
&|\int_{t_0}^{t_*}\int_{-\frac{t}{2}}^{\frac{t}{2}} \frac{tr^*\left( 1-\mu \right)}{2r}\psi\Box_g\psi dVol|\\
\leq &C \int_{t_0}^{t_*} \left(\int_{-\frac{t}{2}}^{\frac{t}{2}}\int_{\mathbb S^2} \frac{t^2}{r^2} \psi^2 r^2 \left( 1-\mu\right) dA dr^*\right)^{\frac{1}{2}}\\
&\times\left( \int_{-\frac{t}{2}}^{\frac{t}{2}}\int_{\mathbb S^2} \frac{\left(1+|r^*|\right)^2\left(\log r \right)_{+}^2\left(1-\mu \right)^2}{r^4}\left( \left(\partial_{r^*}\phi \right)^2+|\nabb\Omega\phi|^2 \right) r^2 dAdr^*\right)^{\frac{1}{2}}dt\\
\leq &C \sup_{t_0 \leq t \leq t_*} \left(\int J^{Z,w^Z}_{\mu}\left(\psi \right) n_{t}^{\mu} dVol_{t}\right)^{\frac{1}{2}}\int_{t_0}^{t_*} \left(\int_{-\frac{t}{2}}^{\frac{t}{2}}\int_{\mathbb S^2} \frac{\left(\log r \right)_{+}^2}{r^2}\left(\left(\partial_{r^*}\phi \right)^2+|\nabb\Omega\phi|^2 \right) r^2\left(1-\mu \right) dAdr^*\right)^{\frac{1}{2}}dt\\
\leq &C \sup_{t_0 \leq t \leq t_*} \left(\int J^{Z,w^Z}_{\mu}\left(\psi \right) n_{t}^{\mu} dVol_{t}\right)^{\frac{1}{2}}\\
&\times\left(\sum_{i=0}^{n-1} t_i^{\frac{1}{2}}\left(\int_{t_i}^{t_{i+1}}\int_{-\frac{t}{2}}^{\frac{t}{2}} \frac{\left(\log r\right)_{+}^2}{r^2}\left(\left(\partial_{r^*}\phi \right)^2+|\nabb\Omega\phi|^2 \right) r^2\left(1-\mu \right) dAdr^*dt\right)^{\frac{1}{2}}\right) \\
\leq &C \sup_{t_0 \leq t \leq t_*} \left(\int J^{Z,w^Z}_{\mu}\left(\psi \right) n_{t}^{\mu} dVol_{t}\right)^{\frac{1}{2}}\left(\sum_{i=0}^{n-1}\sum_{k=0}^2 t_i^{-\frac{1}{2}}E_0\left(\Omega^k\phi \right) \right)^{\frac{1}{2}}\\
\leq &C \sup_{t_0 \leq t \leq t_*} \left(\int J^{Z,w^Z}_{\mu}\left(\psi \right) n_{t}^{\mu} dVol_{t}\right)^{\frac{1}{2}}\left(\sum_{k=0}^2 E_0\left(\Omega^k\phi \right) \right)^{\frac{1}{2}}.
\end{split}
\end{equation*}
By Proposition 3.3 and Corollary 9.1 and using the fact that $\left(1-\mu \right)\leq Ce^{cr^*}$,
\begin{equation*}
\begin{split}
&|\int_{t_0}^{t_*}\int_{-\infty}^{-\frac{t}{2}} \frac{tr^*\left( 1-\mu \right)}{2r}\psi\Box_g\psi dVol|\\
\leq &C \int_{t_0}^{t_*} \left( \int_{-\infty}^{-\frac{t}{2}}\int_{\mathbb S^2} \frac{t^2}{r^2} \psi^2 r^2 \left( 1-\mu \right) dA dr^*\right)^{\frac{1}{2}}\\
&\times\left( \int_{-\infty}^{-\frac{t}{2}}\int_{\mathbb S^2} \left( r^*\right)^4 \left( \left( \partial_{r^*}\phi \right)^2+\left( 1-\mu \right)|\nabb\Omega\phi|^2 \right) r^2 \left( 1-\mu \right)dAdr^*\right)^{\frac{1}{2}}dt\\
\leq &C \sup_{t_0 \leq t \leq t_*} \left(\int J^{Z,w^Z}_{\mu}\left(\psi \right) n_{t}^{\mu} dVol_{t}\right)^{\frac{1}{2}}\int_{t_0}^{t_*}e^{-ct}\left(\sum_{k=0}^1\int J^T_{\mu}\left(\Omega^k\phi \right) n^{\mu}_{t} dVol_{t}\right)^{\frac{1}{2}}dt\\
\leq &C \sup_{t_0 \leq t \leq t_*} \left(\int J^{Z,w^Z}_{\mu}\left(\psi \right) n_{t}^{\mu} dVol_{t}\right)^{\frac{1}{2}}\left(\sum_{k=0}^1E_0\left(\Omega^k\phi \right) \right)^{\frac{1}{2}}.
\end{split}
\end{equation*}
The estimation of the second error term is slightly more involved because there is a factor of $t^2$ in the integrand. In particular, even near spacelike infinity, one needs to use estimates for the spacetime integral for $\phi$.
We intend to estimate this term separately in three regions as above. However, for technical reasons, we will divide the regions slightly differently. Divide as usual the interval into $t_0 \leq t_1 \leq \dots \leq t_n=t_*$. We then set the three regions to be $\displaystyle\bigcup_{i=0}^{n-1} \{t_i \leq t \leq t_{i+1}, r^* > \frac{t_i}{2}\}$, $\displaystyle\bigcup_{i=0}^{n-1} \{t_i \leq t \leq t_{i+1}, -\frac{t_i}{2} \leq r^* \leq \frac{t_i}{2} \}$, $\displaystyle\bigcup_{i=0}^{n-1} \{t_i \leq t \leq t_{i+1}, r^* < -\frac{t_i}{2}\}$.\\
In the region $\displaystyle\bigcup_{i=0}^{n-1} \{t_i \leq t \leq t_{i+1}, r^* > \frac{t_i}{2}\}$, we estimate one power of $t$ by that in $J^{Z, w^Z}\left(\phi\right)$ and the other is canceled with the decay in $r$. To achieve this we use Proposition 3.3, Theorem 5.1, 5.2 and Proposition 8,
\begin{equation*}
\begin{split}
&|\sum_{i=0}^{n-1}\int_{t_i}^{t_{i+1}}\int_{\frac{t_i}{2}}^{\infty} \left(u^2\partial_u\psi+v^2\partial_v\psi \right)\Box_g\psi dVol|\\
\leq &C\sum_{i=0}^{n-1}\int_{t_i}^{t_{i+1}} \left(\int_{\frac{t_i}{2}}^{\infty}\int_{\mathbb S^2} \left( u^2\left(\partial_u\psi  \right)^2+v^2\left(\partial_v\psi \right)^2\right)r^2 dA dr^*\right)^{\frac{1}{2}}\\
&\times\left(\int_{\frac{t}{2}}^{\infty}\int_{\mathbb S^2} \frac{\left(\log r\right)_{+}^2}{r^2}\left(\left(\partial_{r^*}\phi \right)^2+|\nabb\Omega\phi|^2 \right) r^2 dAdr^*\right)^{\frac{1}{2}}dt\\
\leq &C \sup_{t_0 \leq t \leq t_*} \left(\int J^{Z,w^Z}_{\mu}\left(\psi \right) n_{t}^{\mu} dVol_{t}\right)^{\frac{1}{2}}\left(\displaystyle\sum_{i=0}^{n-1} \int_{t_i}^{t_{i+1}}\left(\int_{\frac{t_i}{2}}^{\infty} \int_{\mathbb S^2} \frac{\left( \log r\right)_{+}^2}{r^2}\left( \left( \partial_{r^*}\phi \right)^2+|\nabb\Omega\phi|^2 \right) r^2 dA dr^*\right)^{\frac{1}{2}} dt \right)\\
\leq &C \sup_{t_0 \leq t \leq t_*} \left( \int J^{Z,w^Z}_{\mu}\left( \psi \right) n_{t}^{\mu} dVol_{t}\right)^{\frac{1}{2}}\left(\displaystyle\sum_{i=0}^{n-1} t_i^{\frac{1}{2}}\int_{t_i}^{t_{i+1}}\int_{\frac{t_i}{2}}^{\infty} \int_{\mathbb S^2} \frac{1}{r^{2-\frac{\delta}{2}}} \left( \left( \partial_{r^*}\phi \right)^2+|\nabb\Omega\phi|^2 \right) r^2 dA dr^* dt\right)^{\frac{1}{2}}\\
\leq &C \sup_{t_0 \leq t \leq t_*} \left(\int J^{Z,w^Z}_{\mu}\left( \psi \right) n_{t}^{\mu} dVol_{t}\right)^{\frac{1}{2}}\left(\displaystyle\sum_{i=0}^{n-1} t_i^{\frac{1}{2}}t_i^{-\frac{1}{2}+\frac{\delta}{2}} \int_{t_i}^{t_{i+1}} \int_{\frac{t_i}{2}}^{\infty} \int_{\mathbb S^2} \frac{1}{r^{1+\frac{\delta}{2}}} \left(\left( \partial_{r^*}\phi \right)^2+|\nabb\Omega\phi|^2 \right) r^2 dA dr^* dt\right)^{\frac{1}{2}}\\
\leq &C \sup_{t_0 \leq t \leq t_*} \left(\int J^{Z,w^Z}_{\mu}\left( \psi \right) n_{t}^{\mu} dVol_{t}\right)^{\frac{1}{2}}\left(\displaystyle\sum_{i=0}^{n-1} t_i^{\frac{\delta}{2}} \left( \int J^T_\mu \left( \phi\right) n^\mu_{t_i} dVol_{t_i}+\int J^T_\mu \left( \phi\right) n^\mu_{t_{i+1}} dVol_{t_{i+1}}\right)^{\frac{1}{2}}\right)\\
\leq &C \sup_{t_0 \leq t \leq t_*} \left(\int J^{Z,w^Z}_{\mu}\left(\psi \right) n_{t}^{\mu} dVol_{t}\right)^{\frac{1}{2}}\left(\int J^T_\mu\left(\phi\right) n^\mu_{t_0} dVol_{t_0}\right)^{\frac{1}{2}}\left(\displaystyle\sum_{i=0}^{n-1} t_i^{\frac{\delta}{2}} \right)\\
\leq &C t_*^{\frac{\delta}{2}}\sup_{t_0 \leq t \leq t_*} \left(\int J^{Z,w^Z}_{\mu}\left(\psi \right) n_{t}^{\mu} dVol_{t}\right)^{\frac{1}{2}}\left(\int J^T_\mu\left(\phi\right) n^\mu_{t_0} dVol_{t_0}\right)^{\frac{1}{2}}.
\end{split}
\end{equation*}
For the region $\displaystyle\bigcup_{i=0}^{n-1} \{t_i \leq t \leq t_{i+1}, -\frac{t_i}{2} \leq r^* \leq \frac{t_i}{2} \}$, we first rewrite into $\left(t, r^*\right)$-coordinates and then perform an integration by parts in $t$. It is to avoid extra boundary terms during this integration by parts that we have divided our regions differently from before. The reason that we perform this integration by parts is that instead of a spacetime integral term with $\partial_t\psi$, we would prefer a term with $\psi$, which can then be controlled by the integral of $|K^{ \tilde{X}, w^{\tilde{X}}}| +\displaystyle\sum_l K^{X_l, w^{X_l}}$.
\begin{equation*}
\begin{split}
&|\sum_{i=0}^{n-1}\int_{t_i}^{t_{i+1}}\int_{-\frac{t_i}{2}}^{\frac{t_i}{2}} \left( u^2\partial_u\psi+ v^2\partial_v\psi \right)\Box_g\psi dVol|\\
\leq &C \sum_{i=0}^{n-1}\int_{t_i}^{t_{i+1}}\int_{-\frac{t_i}{2}}^{\frac{t_i}{2}}\int_{\mathbb S^2} |tr^*\partial_{r^*}\psi\Box_g\psi| r^2\left(1-\mu \right) dA dr^* dt \\
&+C|\sum_{i=0}^{n-1}\int_{t_i}^{t_{i+1}}\int_{-\frac{t_i}{2}}^{\frac{t_i}{2}}\int_{\mathbb S^2} \left( t^2+\left( r^* \right)^2 \right)\partial_t\psi \Box_g\psi r^2\left( 1-\mu \right) dA dr^* dt|\\
\leq &C \sum_{i=0}^{n-1}\int_{t_i}^{t_{i+1}}\int_{-\frac{t_i}{2}}^{\frac{t_i}{2}}\int_{\mathbb S^2} |tr^*\partial_{r^*}\psi\Box_g\psi| r^2\left( 1-\mu \right) dA dr^* dt\\
&+C\sum_{i=0}^{n-1}\int_{t_i}^{t_{i+1}}\int_{-\frac{t_i}{2}}^{\frac{t_i}{2}}\int_{\mathbb S^2} t^2|\psi \Box_g\left( \partial_t\psi\right)| r^2\left( 1-\mu \right) dA dr^* dt+C\sum_{i=0}^{n-1}\int_{t_i}^{t_{i+1}}\int_{-\frac{t_i}{2}}^{\frac{t_i}{2}}\int_{\mathbb S^2} t|\psi \Box_g\psi| r^2\left( 1-\mu \right) dA dr^* dt\\
&+C\sum_{i=0}^{n-1}\int_{-\frac{t_i}{2}}^{\frac{t_i}{2}} t^2|\psi \Box_g\psi| \sqrt{1-\mu } dVol_{t_i}+C\sum_{i=0}^{n-1}\int_{-\frac{t_{i}}{2}}^{\frac{t_{i}}{2}} t^2|\psi \Box_g\psi| \sqrt{1-\mu } dVol_{t_{i+1}}\\
\leq &C \int_{t_0}^{t_*}\int_{-\frac{t}{2}}^{\frac{t}{2}}\int_{\mathbb S^2} |tr^*\partial_{r^*}\psi\Box_g\psi| r^2\left( 1-\mu \right) dA dr^* dt+C\int_{t_0}^{t_*}\int_{-\frac{t}{2}}^{\frac{t}{2}}\int_{\mathbb S^2} t^2|\psi \Box_g\left( \partial_t\psi \right)| r^2\left( 1-\mu \right) dA dr^* dt\\
&+C\int_{t_0}^{t_*}\int_{-\frac{t}{2}}^{\frac{t}{2}}\int_{\mathbb S^2} t|\psi \Box_g\psi| r^2\left( 1-\mu \right) dA dr^* dt+C\sum_{i=0}^{n}\int_{-\frac{t_i}{2}}^{\frac{t_i}{2}} t_i^2|\psi \Box_g\psi| \sqrt{1-\mu } dVol_{t_i}.\\
\end{split}
\end{equation*}
We now group this into three parts: firstly, the spacetime term that grows like $t^2$; secondly, the spacetime terms that grow like $t$; and finally, the boundary terms.\\
By Proposition 7, 8, Proposition 3.4, 3.5, Theorem 5.1 and 5.2,
\begin{equation*}
\begin{split}
&\int_{t_0}^{t_*}\int_{-\frac{t}{2}}^{\frac{t}{2}}\int_{\mathbb S^2} t^2|\psi \Box_g\left( \partial_t\psi\right)| r^2\left( 1-\mu \right) dA dr^* dt\\
\leq &C \int_{t_0}^{t_*}\int_{-\frac{t}{2}}^{\frac{t}{2}} \frac{t^2 \left( \log r\right)_+ \left( 1+|r^*|\right)|\psi|\left( |\partial_{r^*}\phi_t|+|\nabb\Omega\phi_t| \right)}{r^3}\\
\leq &C \left( \int_{t_0}^{t_*}\int_{-\frac{t}{2}}^{\frac{t}{2}} t^{2+\delta}r^{\delta}\frac{r^{1-\frac{\delta}{4}}\psi^2}{\left( 1+|r^*|\right)^4} dVol\right)^{\frac{1}{2}}\\
&\times \left( \sum_{i=0}^{n-1} t_i^{1-\frac{\delta}{2}} \left( \int_{t_i}^{t_{i+1}}\int_{-\frac{t}{2}}^{\frac{t}{2}} \frac{\left( 1+|r^*|\right)^6\left( \left( \partial_{r^*}\phi_t \right)^2+|\nabb\Omega\phi_t|^2\right) }{r^{7+\frac{\delta}{4}}} dVol\right)^{\frac{1}{2}}\right)\\
\leq &C \left( \int_{t_0}^{t_*}\int_{-\frac{t}{2}}^{\frac{t}{2}}t^{2+\delta}r^\delta \left(|K^{\tilde{X},w^{\tilde{X}}}\left( \psi \right)|+\sum_l K^{X_l,w^{X_l}}\left( \psi_l \right) \right) dVol\right)^{\frac{1}{2}}\left(\sum_{i=0}^{n-1} t_i^{-\frac{\delta}{2}}\right)\left(\sum_{m=0}^1 \sum_{k=0}^2  E_0\left( \partial_t^m \Omega^k \phi \right) \right)^{\frac{1}{2}}\\
\leq &C \left(\int_{t_0}^{t_*}\int_{-\frac{t}{2}}^{\frac{t}{2}}t^{2+2\delta}\left( |K^{\tilde{X},w^{\tilde{X}}}\left( \psi \right)|+\sum_l K^{X_l,w^{X_l}}\left( \psi_l \right)\right) dVol\right)^{\frac{1}{2}}\left( \sum_{m=0}^1 \sum_{k=0}^2 E_0\left(\partial_t^m \Omega^k \phi \right) \right)^{\frac{1}{2}}.
\end{split}
\end{equation*}
By Proposition 7, 8, Proposition 3.4, 3.5, Corollary 9.4 and 9.5,
\begin{equation*}
\begin{split}
&\int_{t_0}^{t_*}\int_{-\frac{t}{2}}^{\frac{t}{2}}\int_{\mathbb S^2} |tr^*\partial_{r^*}\psi\Box_g\psi| r^2\left( 1-\mu \right) dA dr^* dt
+\int_{t_0}^{t_*}\int_{-\frac{t}{2}}^{\frac{t}{2}}\int t|\psi \Box_g\psi| r^2\left( 1-\mu \right) dA dr^* dt\\
\leq &C\left(\int_{t_0}^{t_*}\int_{-\frac{t}{2}}^{\frac{t}{2}} \frac{t\left(\log r\right)_+\left( 1+|r^*|\right)\left( |r^*\partial_{r^*}\psi|+|\psi|\right)\left(|\partial_{r^*}\phi|+|\nabb\Omega\phi|\right)}{r^3}dVol\right)\\
\leq &C \left( \int_{t_0}^{t_*}\int_{-\frac{t}{2}}^{\frac{t}{2}}t^{\delta}r^{\delta}\left( \frac{r^{1-\frac{\delta}{4}}\psi^2}{\left( 1+|r^*|\right)^4}+\frac{\left( \partial_{r^*}\psi\right)^2}{r^{1+\frac{\delta}{4}}}\right) dVol\right)^{\frac{1}{2}}\\
&\times \left(\sum_{i=0}^{n-1} t_i^{1-\frac{\delta}{2}} \left(\int_{t_i}^{t_{i+1}}\int_{-\frac{t}{2}}^{\frac{t}{2}}\frac{\left( 1+|r^*|\right)^6\left( \left(\partial_{r^*}\phi \right)^2+|\nabb\Omega\phi|^2\right) }{r^{7+\frac{\delta}{4}}} dVol\right)^{\frac{1}{2}}\right)\\
\leq &C \left(\int_{t_0}^{t_*}\int_{-\frac{t}{2}}^{\frac{t}{2}}t^{\delta}r^{\delta}\left( |K^{\tilde{X},w^{\tilde{X}}}\left( \psi \right)|+\sum_l K^{X_l,w^{X_l}}\left( \psi_l \right)\right) dVol\right)^{\frac{1}{2}}\left( \sum_{k=0}^2E_0\left(\Omega^k \phi \right)\right)^{\frac{1}{2}}\left( \sum_{i=0}^{n-1} t_i^{-\frac{\delta}{2}}\right)\\
\leq &C \left(\int_{t_0}^{t_*}\int_{-\frac{t}{2}}^{\frac{t}{2}}t^{2\delta}\left( |K^{\tilde{X},w^{\tilde{X}}}\left( \psi \right)|+\displaystyle\sum_l K^{X_l,w^{X_l}}\left( \psi_l \right)\right) dVol\right)^{\frac{1}{2}}\left( \sum_{k=0}^2E_0\left(\Omega^k \phi \right)\right)^{\frac{1}{2}}.\\
\end{split}
\end{equation*}
By Proposition 3.1, 3.3 and Theorem 5.4,
\begin{equation*}
\begin{split}
&\sum_{i=0}^{n}\int_{-\frac{t_i}{2}}^{\frac{t_i}{2}} t_i^2|\psi \Box_g\psi| \sqrt{1-\mu } dVol_{t_i}\\
\leq &C \left(\sum_{i=0}^{n}t_i \left( \int J^{Z,w^Z}_\mu\left( \psi\right) n^\mu_{t_i} dVol_{t_i}\right)^{\frac{1}{2}}\left(\sum_{k=0}^1 \int_{-\frac{t_i}{2}}^{\frac{t_i}{2}} J^T_\mu\left( \Omega^k\phi\right) n^\mu_{t_i} dVol_{t_i}\right)^{\frac{1}{2}}\right)\\
\leq &C\sup_{t_0 \leq t \leq t_*} \left( \int J^{Z,w^Z}_{\mu}\left( \psi \right) n_{t}^{\mu} dVol_{t}\right)^{\frac{1}{2}}\left( \sum_{k=0}^1 E_0\left(\Omega^k\phi \right) \right)^{\frac{1}{2}}\left(\sum_{i=0}^{n} 1\right)\\
\leq &Ct_*^{\frac{\delta}{2}}\sup_{t_0 \leq t \leq t_*} \left(\int J^{Z,w^Z}_{\mu}\left(\psi \right) n_{t}^{\mu} dVol_{t}\right)^{\frac{1}{2}}\left(\sum_{k=0}^1 E_0\left(\Omega^k\phi \right) \right)^{\frac{1}{2}}.
\end{split}
\end{equation*}
These together give:
\begin{equation*}
\begin{split}
&|\int_{t_0}^{t_*}\int_{-\frac{t}{2}}^{\frac{t}{2}} \left(u^2\partial_u\psi+ v^2\partial_v\psi \right)\Box\psi dVol|\\
\leq &C \left(\int_{t_0}^{t_*}t^{2+2\delta}\left( |K^{\tilde{X},w^{\tilde{X}}}\left( \psi \right)|+\sum_l K^{X_l,w^{X_l}}\left(\psi_l \right)\right) dVol\right)^{\frac{1}{2}}\sum_{m=0}^1 \sum_{k=0}^2 E_0\left(\partial_t^m \Omega^k \phi \right)^{\frac{1}{2}}\\
&+Ct_*^{\frac{\delta}{2}}\sup_{t_0 \leq t \leq t_*} \left(\int J^{Z,w^Z}_{\mu}\left(\psi \right) n_{t}^{\mu} dVol_{t}\right)^{\frac{1}{2}}\left(\sum_{k=0}^1 E_0\left(\Omega^k\phi \right) \right)^{\frac{1}{2}}.
\end{split}
\end{equation*}
We finally look at the third region, $\displaystyle\bigcup_{i=0}^{n-1} \{t_i \leq t \leq t_{i+1}, r^* < -\frac{t_i}{2}\}$, for the second error term. By Proposition 3.1, 3.3 and Theorem 5.1.
\begin{equation*}
\begin{split}
&|\int_{t_0}^{t_*}\int_{-\infty}^{-\frac{t}{2}} \left(u^2\partial_u\psi+ v^2\partial_v\psi \right)\Box\psi dVol|\\
\leq &C\int_{t_0}^{t_*} \left(\int_{-\infty}^{-\frac{t}{2}}\left(u^2\left( \partial_u\psi \right)^2 + v^2\left(\partial_v\psi \right)^2\right) \left(1-\mu \right)^{-\frac{1}{2}}dVol_{t}\right)\\
&\times\left( \int_{-\infty}^{-\frac{t}{2}}\left( r^*\right)^2 \left( 1-\mu \right)^\frac{3}{2}\left(\left( \partial_{r^*}\phi \right)^2+|\nabb\Omega\phi|^2 \right)dVol_{t}\right)dt\\
\leq &C\int_{t_0}^{t_*} e^{-ct}\left(\int_{-\infty}^{-\frac{t}{2}} J^Z_\mu\left( \psi\right) n^{\mu}_t dVol_{t}\right)\left(\sum_{k=0}^1 \int_{-\infty}^{-\frac{t}{2}}J^T_\mu\left( \Omega^k\phi\right) n^{\mu}_t dVol_{t}\right)dt\\
\leq &C\sup_{t_0 \leq t \leq t_*} \left(\int J^{Z,w^Z}_{\mu}\left(\psi \right) n_{t}^{\mu} dVol_{t}\right)^{\frac{1}{2}}\left( E_0\left(\phi \right) \right)^{\frac{1}{2}}.
\end{split}
\end{equation*}
Therefore,
\begin{equation*}
\begin{split}
&\int J^{Z,w^Z}_{\mu}\left( \psi\right) n_{t_*}^{\mu} dVol_{\{ t=t_*\} }\\
\leq &C\int J^{Z,w^Z}_\mu\left( \psi\right) n^\mu_{t_0} dVol_{t_0}+\displaystyle\sum_{k=0}^1\int_{t_0}^{t_*}\int_{r_1^*}^{r_2^*}t\displaystyle\sum_l K^{X_l,w^{X_l}}\left( \Omega^k\psi_l \right) dVol\\
&+ C\left(\int_{t_0}^{t_*}\int_{-\frac{t}{2}}^{\frac{t}{2}} t^{2+2\delta}\left( |K^{\tilde{X},w^{\tilde{X}}}\left( \psi \right)|+\displaystyle\sum_l K^{X_l,w^{X_l}}\left(\psi_l \right)\right) dVol\right)^{\frac{1}{2}}\left(\sum_{m=0}^1 \sum_{k=0}^2 E_0\left(\partial_t^m \Omega^k \phi \right)\right)^{\frac{1}{2}}\\
& + C\sup_{t_0 \leq t \leq t_*} t_*^{\frac{\delta}{2}}\left(\int J^{Z,w^Z}_{\mu}\left(\psi \right) n_{t}^{\mu} dVol_{t}\right)^{\frac{1}{2}}\left(\sum_{k=0}^2 E_0\left( \Omega^k\phi \right) \right)^{\frac{1}{2}}.
\end{split}
\end{equation*}
The proof concludes with Lemma 12, taking
\begin{equation*}
\begin{split}
h_1\left(t_*\right)=&\displaystyle\sum_{k=0}^1\int_{t_0}^{t_*}\int_{r_1^*}^{r_2^*}t\displaystyle\sum_l K^{X_l,w^{X_l}}\left(\Omega^k\psi_l \right) dVol\\
&+ \left(\int_{t_0}^{t_*}\int_{-\frac{t}{2}}^{\frac{t}{2}} t^{2+2\delta}\left(|K^{\tilde{X},w^{\tilde{X}}}\left(\psi \right)|+\displaystyle\sum_l K^{X_l,w^{X_l}}\left(\psi_l \right)\right) dVol\right)^{\frac{1}{2}}\left(\sum_{m=0}^1 \sum_{k=0}^2 E_0\left(\partial_t^m \Omega^k \phi \right)\right)^{\frac{1}{2}},\\
h_2\left(t_*\right)=&t_*^{\delta}\left(\sum_{k=0}^2 E_0\left(\Omega^k\phi \right) \right).
\end{split}
\end{equation*}
We notice that $h_1\left(t\right)$ and $h_2\left(t\right)$ are increasing.
\end{proof}
We now combine Propositions 11, 13, 14 and 15 to prove Theorem 10.2. This will then imply the other parts of Theorem 10.
\begin{proposition}
$$\int J^{Z,w^Z}_\mu\left(\psi\right) n^{\mu}_{t_*} dVol_{t_*} \leq CE_1\left(\phi\right) t_*^{\delta}$$
\end{proposition}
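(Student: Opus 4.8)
This is Theorem 10.2, and the plan is to assemble Propositions 11, 13, 14, 15 into a Gronwall-type iteration for the conformal energy $G(t):=\int J^{Z,w^Z}_\mu(\psi)\,n^\mu_t\,dVol_t$, and likewise for its angular derivatives $G_k(t):=\int J^{Z,w^Z}_\mu(\Omega^k\psi)\,n^\mu_t\,dVol_t$ (recall $\Omega^k\psi=S\Omega^k\phi$ since $\Omega$ commutes with $S$). Proposition 15 is the master identity: it bounds $G(t_*)$ by the initial conformal energy $G(t_0)$, the localized ``Term A'' $\sum_{k\le1}\int_{t_0}^{t_*}\int_{r_1^*}^{r_2^*}t\sum_l K^{X_l,w^{X_l}}(\Omega^k\psi_l)\,dVol$, the Cauchy--Schwarz ``Term B'' whose spacetime factor is $\int_{t_0}^{t_*}\int_{-t/2}^{t/2}t^{2+2\delta}(|K^{\tilde X,w^{\tilde X}}(\psi)|+\sum_l K^{X_l,w^{X_l}}(\psi_l))\,dVol$, and the harmless $Ct_*^\delta\sum_{k\le2}E_0(\Omega^k\phi)$. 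The first term is $\le E_0(S\phi)\le E_1(\phi)$ by the very definition of the energy classes, and the last is already $\le CE_1(\phi)t_*^\delta$.

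First I would estimate the two spacetime error terms. Partition $[t_0,t_*]$ into the dyadic slices $t_0\le t_1\le\dots\le t_n=t_*$ of Section 4 ($t_{i+1}\le(1.1)t_i$) and apply the localized $X$-estimate of Proposition 14 to $\Omega^k\psi$ on each slab; since the fixed strip $[r_1^*,r_2^*]$ lies inside the admissible region once $t_i$ is large, this bounds each slab's $K$-integral by $C(t_i^{-2}G_k(t_i)+t_i^{-2+\delta}\mathcal E)$, where $\mathcal E=\sum_{m\le1,\,j\le2}E_0(\partial_t^m\Omega^{j}(\Omega^k\phi))\le CE_1(\phi)$ thanks to the derivative budget built into $E_1$. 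The decisive structural point is that the extra power of $t$ in Term A (and the $t^2$ in Term B) is outweighed by the $t_i^{-2}$ gain of Proposition 14, which itself descends from the local energy decay $t_1^{-2}$ of Corollary 9.4 and 9.5. Thus Term A collapses to $C\sum_i t_i^{-1}(G_k(t_i)+t_i^\delta\mathcal E)$ and (the square of) Term B to $C\sum_i(t_i^{2\delta}G_0(t_i)+t_i^{-1+3\delta}\mathcal E)$; in every case the $\mathcal E$-weighted sums are convergent geometric series, since $\sum_i t_i^{-1+c\delta}\le C$ for $\delta$ small, hence bounded by $CE_1$, while the growing pieces never exceed $t_*^{C\delta}$.

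It then remains to close the resulting coupled inequalities for the $G_k$. Feeding the above into Proposition 15 (applied to $\Omega^k\psi$ for each $k$) gives, schematically, $G_k(t_*)\le CE_1 t_*^{C\delta}+C\sum_i t_i^{-1}\bigl(G_k(t_i)+G_{k+1}(t_i)\bigr)+(\text{Term B self-coupling})$; the self-coupling is absorbed by Young's inequality into a small multiple of $\sup_{t_0\le t\le t_*}G_k(t)$, and the remaining feedback is a discrete Gronwall whose coupling kernel $\sum_i t_i^{-1}$ is summable. Running this iteration simultaneously over the finitely many angular orders for which $E_1$ supplies data, so that the top-order leakage is absorbed into $E_1$, produces $G_0(t_*)\le CE_1(\phi)t_*^{C\delta}$. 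Since $\delta>0$ is arbitrary, replacing $\delta$ by $\delta/C$ yields the stated bound.

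The main obstacle is precisely this closure: the conformal energy is simultaneously the quantity being estimated and, through Proposition 14, the object controlling the spacetime error terms, so the iteration closes only if the $t^{-2}$ decay genuinely dominates the extra $t$-weights carried by the $Z$-currents and if the angular-derivative hierarchy terminates within the fixed derivative budget of $E_1$. Verifying that the dyadic sums converge and that the Gronwall kernel is summable rather than merely bounded is the delicate point; once this is granted, all remaining manipulations are routine bookkeeping with the currents already computed in Section 3.
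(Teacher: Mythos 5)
You have assembled the right ingredients (Propositions 11, 13, 14, 15, the dyadic decomposition, the $t_i^{-2}$ gain from Proposition 14, and the Young/absorption treatment of the Cauchy--Schwarz term), but the closure of your iteration has a genuine gap. Your ``Term A'' for $G_0$ requires the dyadic slab integrals of $\sum_l K^{X_l,w^{X_l}}(\Omega^k\psi_l)$ for $k\le 1$, and Proposition 14 bounds the slab over $[t_i,t_{i+1}]$ by $C\bigl(t_i^{-2}\int J^{Z}_\mu(\Omega^k\psi)n^\mu_{t_i}\,dVol_{t_i}+t_i^{-2+\delta}\mathcal{E}\bigr)$ --- that is, by the conformal energy of $\Omega^k\psi$ \emph{at the later time} $t_i$, not at $t_0$. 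So the Gronwall recursion for $G_0$ feeds on $G_1(t_i)$, the one for $G_1$ feeds on $G_2(t_i)$, and so on: the angular hierarchy never terminates. Your remark that ``the top-order leakage is absorbed into $E_1$'' does not repair this, because $E_1$ only supplies conformal energies of $\Omega^kS\phi$ (for $k\le 2$) \emph{at the initial time}; it says nothing about $G_{k}(t_i)$ for later $t_i$, which is exactly what the top rung of your recursion needs.

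The paper closes this with a two-pass structure that your proposal skips. First it derives a crude growth bound $G_k(t_*)\le C\bigl(G_k(t_0)+t_*^{1+\delta}(\dots)\bigr)$ for $k=0,1$ by estimating Term A with the \emph{global} $X$-estimate of Proposition 13 rather than the localized Proposition 14: Proposition 13 bounds $\int_{t_0}^{t_*}\sum_l K^{X_l,w^{X_l}}(\Omega^k\psi_l)\,dVol$ purely in terms of $\int J^T_\mu(\Omega^k\psi)n^\mu_{t_0}$ and $E_0$'s (via Proposition 11, which controls $J^T(\Omega^k\psi)$ at all times by initial data), so the hierarchy terminates at the level of $J^T$ and no conformal energy at later times is needed. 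Only then is Proposition 14 invoked, with $t_i^{-2}G_k(t_i)\le Ct_i^{-2}G_k(t_0)+Ct_i^{-1+\delta}(\dots)$ substituted from the growth bound, which removes the $G$-dependence of Term A entirely; the remaining self-coupling through Term B is then handled by the bootstrap (your Young's-inequality absorption is essentially equivalent there). To make your argument work you must either insert this preliminary $t_*^{1+\delta}$ growth-bound pass, or make precise a descending induction in which the top angular order is estimated by Proposition 13 (accepting linear growth there) and the gain of one power of $t$ per rung is tracked explicitly; as written, the iteration does not close. (A minor further slip: the per-slab contribution of Term B's $\mathcal{E}$-part is $t_i^{3\delta}\mathcal{E}$, not $t_i^{-1+3\delta}\mathcal{E}$; its dyadic sum is $\sim t_*^{3\delta}\mathcal{E}$, which is still acceptable but is not a convergent series.)
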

\begin{proof}
We first show that $\int J^{Z,w^Z}_\mu\left(\psi\right) n^{\mu}_{t_*} dVol_{t_*}$ grows only like $t_*^{1+\delta}$.
Using Propositions 13 and 11,
\begin{equation*}
\begin{split}
&\int_{t_0}^{t_*} |K^{\tilde{X},w^{\tilde{X}}}\left(\psi\right)| + \sum_l K^{X_l,w^X}\left(\psi_l\right)dVol\\
\leq &C\left(\int J^T_{\mu}\left(\psi \right)n_{t_*}^{\mu}dVol_{t_*}+\int J^T_{\mu}\left(\psi \right)n_{t_0}^{\mu}dVol_{t_0}\right)+C t_0^{-2+\delta}\displaystyle\sum_{k=0}^2 E_0\left(\Omega^k \phi \right)\\
\leq &C\int J^T_{\mu}\left(\psi \right)n_{t_0}^{\mu}dVol_{t_0}+C \displaystyle\sum_{k=0}^2 E_0\left(\Omega^k \phi \right).
\end{split}
\end{equation*} 
Similarly, 
$$\int_{t_0}^{t_*} |K^{\tilde{X},w^{\tilde{X}}}\left(\Omega\psi\right)| + \sum_l K^{X_l,w^X}\left(\Omega\psi_l\right)dVol \leq C\int J^T_{\mu}\left(\Omega\psi \right)n_{t_0}^{\mu}dVol_{t_0}+C \displaystyle\sum_{k=0}^3 E_0\left(\Omega^k \phi \right).$$
Apply Proposition 15 to get 
\begin{equation*}
\begin{split}
&\int J^{Z,w^Z}_\mu\left( \psi\right) n^\mu_{t_*} dVol_{t_*}\\
\leq &C\int J^{Z,w^Z}_\mu\left( \psi\right) n^\mu_{t_0} dVol_{t_0}+Ct_*\displaystyle\sum_{k=0}^1\int_{t_0}^{t_*}\int_{r_1^*}^{r_2^*}\displaystyle\sum_l K^{X_l,w^{X_l}}\left( \Omega^k\psi_l \right) dVol\\
&+ Ct_*^{1+\delta}\left(\int_{t_0}^{t_*}\int_{-\frac{t}{2}}^{\frac{t}{2}} |K^{\tilde{X},w^{\tilde{X}}}\left(\psi \right)|+\displaystyle\sum_l K^{X_l,w^{X_l}}\left(\psi_l \right) dVol\right)^{\frac{1}{2}}\left(\sum_{m=0}^1 \sum_{k=0}^2 E_0\left(\partial_t^m \Omega^k \phi \right)\right)^{\frac{1}{2}}\\
&+ Ct_*^\delta \sum_{k=0}^2 E_0\left(\Omega^k\phi \right)\\
\leq &C\int J^{Z,w^Z}_\mu\left(\psi\right) n^\mu_{t_0} dVol_{t_0}+Ct_*\sum_{k=0}^1 \int J^T_{\mu}\left(\Omega^k\psi \right)n_{t_0}^{\mu}dVol_{t_0}+ Ct_*\sum_{k=0}^3 E_0\left(\Omega^k \phi \right)\\
&+ Ct_*^{1+\delta} \left(\int J^T_{\mu}\left(\psi \right)n_{t_0}^{\mu}dVol_{t_0}\right)^{\frac{1}{2}}\left(\sum_{m=0}^1 \sum_{k=0}^2 E_0\left(\partial_t^m \Omega^k \phi \right) \right)^{\frac{1}{2}}+ Ct_*^{1+\delta}\sum_{m=0}^1 \sum_{k=0}^2 E_0\left(\partial_t^m \Omega^k \phi \right)\\
\leq &C\left(\int J^{Z,w^Z}_\mu\left( \psi\right) n^\mu_{t_0} dVol_{t_0}+t_*^{1+\delta} \left(\sum_{k=0}^1\int J^T_{\mu}\left(\Omega^k\psi \right)n_{t_0}^{\mu}dVol_{t_0}+\sum_{m=0}^1 \sum_{k=0}^{3-m} E_0\left(\partial_t^m \Omega^k \phi \right)\right)\right).
\end{split}
\end{equation*}
We now have some control over $\int J^{Z,w^Z}_\mu\left(\psi\right) n^\mu_{t_*} dVol_{t_*}$ and we will use Proposition 14 to estimate the spacetime integral terms by integrating dyadically.\\
By Proposition 14 and 11,
\begin{equation*}
\begin{split}
&\int_{t_i}^{t_{i+1}}\int_{r_1^*}^{r_2^*}\displaystyle\sum_l K^{X_l,w^{X_l}}\left(\psi_l \right) dVol\\
\leq &C\left(t_i^{-2}\int J^Z_\mu\left(\psi\right) n^\mu_{t_i} dVol_{t_i}+t_i^{-2+\delta}\left(\sum_{m=0}^1 \sum_{k=0}^2 E_0\left(\partial_t^m \Omega^k \phi \right)\right)\right)\\
\leq &C\left(t_i^{-2}\int J^Z_\mu\left(\psi\right) n^\mu_{t_0} dVol_{t_0}+t_i^{-1+\delta}\left(\sum_{k=0}^1\int J^T_{\mu}\left(\Omega^k\psi \right)n_{t_0}^{\mu}dVol_{t_0}+\sum_{m=0}^1 \sum_{k=0}^{3-m} E_0\left(\partial_t^m \Omega^k \phi \right)\right)\right),
\end{split}
\end{equation*}
where here we have not kept track of the constant factor in front of $\delta$, but just note that it can be chosen to be arbitrarily small.\\
We can apply the same argument to $\int_{t_i}^{t_{i+1}}\int_{r_1^*}^{r_2^*}\displaystyle\sum_l K^{X_l,w^{X_l}}\left(\Omega\psi_l \right) dVol$ to get
\begin{equation*}
\begin{split}
&\int_{t_i}^{t_{i+1}}\int_{r_1^*}^{r_2^*}\displaystyle\sum_l K^{X_l,w^{X_l}}\left(\Omega\psi_l \right) dVol\\
\leq &C\left(t_i^{-2}\int J^Z_\mu\left(\Omega\psi\right) n^\mu_{t_0} dVol_{t_0}+t_i^{-1+\delta}\left(\sum_{k=0}^2\int J^T_{\mu}\left(\Omega^k\psi \right)n_{t_0}^{\mu}dVol_{t_0}+\sum_{m=0}^1 \sum_{k=0}^{4-m} E_0\left(\partial_t^m \Omega^k \phi \right)\right)\right).
\end{split}
\end{equation*}
This in turn provides more control on $\int J^{Z,w^Z}_\mu\left(\psi\right) n^\mu_{t_*} dVol_{t_*}$ by Proposition 15:
\begin{equation*}
\begin{split}
&\int J^{Z,w^Z}_\mu\left(\psi\right) n^\mu_{t_*} dVol_{t_*}\\
\leq &C\int J^{Z,w^Z}_\mu\left(\psi\right) n^\mu_{t_0} dVol_{t_0}+C\displaystyle\sum_{k=0}^1\sum_{i=0}^{n-1}t_i\int_{t_i}^{t_{i+1}}\int_{r_1^*}^{r_2^*}\displaystyle\sum_l K^{X_l,w^{X_l}}\left(\Omega^k\psi_l \right) dVol\\
&+ C\sum_{i=0}^{n-1}t_i^{1+\delta}\left( \int_{t_i}^{t_{i+1}}\int_{-\frac{t}{2}}^{\frac{t}{2}} |K^{\tilde{X},w^{\tilde{X}}}\left(\psi \right)|+\displaystyle\sum_l K^{X_l,w^{X_l}}\left(\psi_l \right) dVol\right)^{\frac{1}{2}}\left(\sum_{m=0}^1 \sum_{k=0}^2 E_0\left(\partial_t^m \Omega^k \phi \right)\right)^{\frac{1}{2}}\\
&+ Ct_*^\delta \sum_{k=0}^2 E_0\left(\Omega^k\phi \right)\\
\leq &C\sum_{k=0}^1\int J^{Z,w^Z}_\mu\left(\Omega^k\psi\right) n^\mu_{t_0} dVol_{t_0}+Ct_*^{\delta}\left(E_0\left(\psi\right)+\sum_{m=0}^1 \sum_{k=0}^{4-m} E_0\left(\partial_t^m \Omega^k \phi \right)\right)\\
& +C\sum_{i=0}^{n-1}t_i^{1+\delta}\left(\int_{t_i}^{t_{i+1}}\int_{-\frac{t}{2}}^{\frac{t}{2}} |K^{\tilde{X},w^{\tilde{X}}}\left(\psi \right)|+\displaystyle\sum_l K^{X_l,w^{X_l}}\left(\psi_l \right) dVol\right)^{\frac{1}{2}}\left(\sum_{m=0}^1 \sum_{k=0}^2 E_0\left(\partial_t^m \Omega^k \phi \right)\right)^{\frac{1}{2}}\\
\leq &C E_1\left(\phi\right)t_*^{\delta} +C\sum_{i=0}^{n-1}t_i^{1+\delta}\left( \int_{t_i}^{t_{i+1}}\int_{-\frac{t}{2}}^{\frac{t}{2}} |K^{\tilde{X},w^{\tilde{X}}}\left(\psi \right)|+\displaystyle\sum_l K^{X_l,w^{X_l}}\left(\psi_l \right) dVol\right)^{\frac{1}{2}}\left( \sum_{m=0}^1 \sum_{k=0}^2 E_0\left(\partial_t^m \Omega^k \phi \right)\right)^{\frac{1}{2}}.\\
\end{split}
\end{equation*}
Here, we recall that we have defined $E_1\left(\phi\right)=E_0\left(\psi\right)+\displaystyle\sum_{m=0}^1 \sum_{k=0}^{4-m} E_0\left(\partial_t^m \Omega^k \phi \right)$ in Section 1.3.\\
Clearly, we can replace $\delta$ by $\epsilon$ with a different constant $C$ which depends only on $\epsilon$:
\begin{equation}\label{Iorig}
\begin{split}
&\int J^{Z,w^Z}_\mu\left(\psi\right) n^\mu_{t_*} dVol_{t_*}\\
\leq &C E_1\left(\phi\right)t_*^{\epsilon}\\
&+C\sum_{i=0}^{n-1}t_i^{1+\epsilon}\left( \int_{t_i}^{t_{i+1}}\int_{-\frac{t}{2}}^{\frac{t}{2}} |K^{\tilde{X},w^{\tilde{X}}}\left(\psi \right)|+\displaystyle\sum_l K^{X_l,w^{X_l}}\left(\psi_l \right) dVol\right)^{\frac{1}{2}}\left( \sum_{m=0}^1 \sum_{k=0}^2 E_0\left(\partial_t^m \Omega^k \phi \right)\right)^{\frac{1}{2}}.\\
\end{split}
\end{equation}
Notice that at this point, the only term that exhibits more growth than expected is $$\sum_{i=0}^{n-1}t_i^{1+\epsilon}\left(\int_{t_i}^{t_{i+1}}\int_{-\frac{t}{2}}^{\frac{t}{2}} |K^{\tilde{X},w^{\tilde{X}}}\left(\psi \right)|+\displaystyle\sum_l K^{X_l,w^{X_l}}\left(\psi_l \right) dVol\right)^{\frac{1}{2}}\left(\sum_{m=0}^1 \sum_{k=0}^2 E_0\left(\partial_t^m \Omega^k \phi \right)\right)^{\frac{1}{2}}.$$ We will close the argument with a bootstrap.\\
For notational purposes, we define
\begin{equation*}
\begin{split}
I_{t_*}&=\int J^{Z,w^Z}_\mu\left(\psi\right) n^\mu_{t_*} dVol_{t_*},\\
II_{t_i}&=\int_{t_i}^{t_{i+1}}\int_{-\frac{t}{2}}^{\frac{t}{2}} |K^{\tilde{X},w^{\tilde{X}}}\left(\psi \right)|+\displaystyle\sum_l K^{X_l,w^{X_l}}\left(\psi_l \right) dVol,\\
\end{split}
\end{equation*}
(\ref{Iorig}) is equivalent to 
\begin{equation}\label{I}
I_{t_*} \leq C \left(t_*^\epsilon E_1\left(\phi\right)+\sum_{i=0}^{n-1}t_i^{1+\epsilon}\left(II_{t_i}\right)^{\frac{1}{2}}E_1^{\frac{1}{2}}\left(\phi\right)\right).
\end{equation}
On the other hand, Proposition 14 gives
\begin{equation}\label{II}
II_{t_i} \leq C \left(t_i^{-2} I_{t_i} + t_i^{-2+\delta}E_1\left(\phi\right)\right).
\end{equation}
Assume $I_t \leq A t^\delta E_1\left(\phi\right)$, where $A \geq 4C$. We want to show that $I_t \leq \frac{A}{2} t^\delta E_1\left(\phi\right)$ for all $t \geq \left(\frac{400 C^{\frac{1}{2}}\left(A^{\frac{1}{2}}+1\right)}{A}\right)^{\frac{4}{\delta}}$.
>From the assumption and (\ref{II}) we have 
$$II_{t_i} \leq C \left(A t_i^{-2+\delta} E_1\left(\phi\right) + t_i^{-2+\delta}E_1\left(\phi\right)\right).$$
Hence, by picking $\epsilon=\frac{\delta}{4}$ in (\ref{I}),
\begin{equation*}
\begin{split}
I_{t} \leq &C t^\frac{\delta}{4} E_1\left(\phi\right) + 100 C^{\frac{1}{2}}\left(A^{\frac{1}{2}}+1\right) t^{\frac{3\delta}{4}} E_1\left(\phi\right)\\
\leq &C t^\delta E_1\left(\phi\right) + 100 C^{\frac{1}{2}}\left(A^{\frac{1}{2}}+1\right) t^{\frac{3\delta}{4}} E_1\left(\phi\right)\\
\leq &\frac{A}{2} t^\delta E_1\left(\phi\right),
\end{split}
\end{equation*}
since $A \geq 4 C$ and $t\geq \left(\frac{400 C^{\frac{1}{2}}\left(A^{\frac{1}{2}}+1\right)}{A}\right)^{\frac{4}{\delta}}$.
\end{proof}
\begin{remark}
We would like to note that the number of derivatives used in the above argument is highly wasteful (we used a total of 8 derivatives!). Blue-Soffer constructed a vector field to control trapping with only $\epsilon$ derivatives \cite{BSo}. Therefore, we can, in principle, repeat the above argument noting the unnecessary loss of derivatives. The details, however, have not been pursued. It is known that with this vector field, Theorem 5 holds with $E_0\left(\phi\right)$ only having $1+\epsilon$ derivatives. Moreover, in Proposition 11-15, instead of having two $\Omega$ derivatives on $\phi$, we only need $1+\epsilon$ of them. We then go to the proof of Proposition 16 and note that it can be reproved assuming only that $\phi$ is in $H^{2+\epsilon}$ initially with suitable decay. 
\end{remark}
Now Theorem 10 follows directly from Propositions 16 and 14.
\section{Estimates near the Event Horizon}
In this section, we will use the vector field $Y$ to prove that any decay estimates that can be proved on a suitable compact set holds also along the horizon. We will also show that these estimates control enough derivatives to give pointwise decay estimate.
\begin{proposition}
Suppose $\int_{r_0^*}^{2\left(\left(1.2\right)r_0\right)^*-r_0^*} J^T_\mu\left(\phi\right) n^\mu_{t} dVol_{t} \leq B t^{-\alpha}$ for all $t$, for some $\alpha \geq 0$. Then 
$$\int_{\{r \leq r_0\}} J^Y_\mu\left(\phi\right) n^\mu_{\frac{1}{2}\left(t_*+r_0^*\right)} dVol_{\{v=\frac{1}{2}\left(t_*+r_0^*\right)\}} \leq C \left(B+\int J^N_\mu\left(\phi\right)n^\mu_{t_0}dVol_{t_0}\right) t_*^{-\alpha}.$$
\end{proposition}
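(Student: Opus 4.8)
The plan is to propagate the given decay of the $T$-flux from the transition region $r_0\le r\le (1.2)r_0$ to the horizon by running the energy identity of Proposition~2 for the red-shift multiplier $Y$ on the near-horizon region, foliated by the ingoing null cones $\underline{C}_v=\{v=\text{const}\}$. Write $v_*=\frac12(t_*+r_0^*)$ and, for $v_0\le v\le v_*$, set $\mathcal{B}_{[v_0,v]}=\{v_0\le v'\le v\}\cap\{r\le (1.2)r_0\}$ and $F(v)=\int_{\underline{C}_v\cap\{r\le (1.2)r_0\}} J^Y_\mu(\phi)\,n^\mu_{\{v=\text{const}\}}\,dVol_v$. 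Since $Y$ is supported in $r<(1.2)r_0$, there is no flux through the timelike face $\{r=(1.2)r_0\}$, so the only boundaries of $\mathcal{B}_{[v_0,v]}$ carrying $J^Y$-flux are the two cones $\underline{C}_{v_0},\underline{C}_v$ and the horizon segment $\mathcal{H}^+\cap\{v_0\le v'\le v\}$. Because the target flux $\int_{\{r\le r_0\}}J^Y_\mu(\phi)\,n^\mu_{\{v=v_*\}}$ is dominated by $F(v_*)$ (the integrand is pointwise nonnegative and we only enlarge the domain from $r\le r_0$ to $r\le (1.2)r_0$), it suffices to estimate $F(v_*)$.

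Applying Proposition~2 on $\mathcal{B}_{[v,v+dv]}$ and differentiating in $v$ gives $\frac{dF}{dv}=-(\text{horizon flux density})-\int_{\underline{C}_v}K^Y$. Here I would use the three defining properties of $Y'$ and $Y$: the horizon flux is nonnegative by the red-shift construction, so it only helps and is discarded; on $r\le r_0$ one has $K^Y=K^{Y'}\ge 0$ and, crucially, the coercivity $C\,K^{Y'}\ge \frac{1}{\sqrt{1-\mu}}J^{Y'}_\mu n^\mu_{\{v=\text{const}\}}$; and on the transition region $r_0\le r\le (1.2)r_0$ one has $|K^Y|\le C\frac{1}{\sqrt{1-\mu}}J^T_\mu n^\mu_t$ together with $J^Y_\mu n^\mu_{\{v\}}\le C\,J^T_\mu n^\mu_{\{v\}}$. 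Combining the coercivity on $r\le r_0$ with the observation that the flux over $r\le (1.2)r_0$ differs from that over $r\le r_0$ only by a transition term, this produces a differential inequality $\frac{dF}{dv}\le -c\,F(v)+G(v)$, where $G(v)$ collects all transition-region contributions (both the $K^Y$ error and the $r_0\le r\le (1.2)r_0$ part of the flux), all controlled by the $T$-current over the compact range $r_0\le r\le (1.2)r_0$.

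The heart of the argument is then Gronwall: $F(v_*)\le e^{-c(v_*-v_0)}F(v_0)+\int_{v_0}^{v_*} e^{-c(v_*-v)}G(v)\,dv$. On the compact $r$-range the slice $\underline{C}_v$ meets $t=2v-r^*\in[2v-(1.2r_0)^*,\,2v-r_0^*]$, so $J^T n_v$ there is comparable to $J^T n_t$ on $t$-slices with $t\approx 2v$; the hypothesis $\int_{r_0^*}^{2((1.2)r_0)^*-r_0^*}J^T_\mu(\phi)n^\mu_t\,dVol_t\le B t^{-\alpha}$, whose $r^*$-range is exactly the one swept out by these tilted slices across the transition region, then gives $G(v)\le C B v^{-\alpha}$. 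Since the kernel $e^{-c(v_*-v)}$ concentrates the integral near $v=v_*$, where $v^{-\alpha}\approx v_*^{-\alpha}$, one obtains $\int_{v_0}^{v_*}e^{-c(v_*-v)}G(v)\,dv\le C B v_*^{-\alpha}$; and $e^{-c(v_*-v_0)}F(v_0)\le C\big(\int J^N_\mu n^\mu_{t_0}\big)v_*^{-\alpha}$ since the exponential beats the polynomial and $F(v_0)$ is controlled by the initial $N$-energy plus a harmless bounded multiple of $B$ (via a fixed-length-in-$v$ energy estimate). Recalling $v_*=\frac12(t_*+r_0^*)\sim t_*$, this yields $F(v_*)\le C\big(B+\int J^N_\mu(\phi)n^\mu_{t_0}\,dVol_{t_0}\big)t_*^{-\alpha}$, which is the claim.

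The main obstacle is the extraction of the sharp rate $t_*^{-\alpha}$: simply discarding the good spacetime term would only produce the integrated error $\int_{t_0}^{t_*}t^{-\alpha}\,dt$, which fails to decay like $t_*^{-\alpha}$ when $\alpha\le 1$. It is precisely the red-shift coercivity $K^{Y'}\gtrsim J^{Y'}_\mu n^\mu_{\{v\}}$ — giving genuine exponential contraction in the ingoing-null direction — that localizes the error to the latest times and thereby transfers the pointwise-in-$t$ decay of the transition-region flux all the way to the horizon. Secondary care is needed in comparing the $v$-slice flux on the compact $r$-range with the $t$-slice flux of the hypothesis, and in confirming the nonnegativity of the $Y$-flux through $\mathcal{H}^+$.
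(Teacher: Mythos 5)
Your proposal is correct in its essentials, but it closes the estimate by a genuinely different mechanism than the paper's, even though both rest on the same three properties of $Y$ (in particular the red-shift coercivity $C K^{Y}(\phi)\ge \frac{1}{\sqrt{1-\mu}}J^{Y}_\mu(\phi) n^\mu_{\{v=const.\}}$ on $r\le r_0$, and the control of all transition-region terms by the $T$-flux on $r_0\le r\le (1.2)r_0$). You convert the energy identity into a differential inequality $F'(v)\le -cF(v)+G(v)$ and integrate against the exponential kernel, so that the rate falls out from $\int_{v_0}^{v_*} e^{-c(v_*-v)}v^{-\alpha}\,dv\le Cv_*^{-\alpha}$. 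The paper instead keeps the integral form $f(t_*)+\int_{t_1}^{t_*}f(\tau)\,d\tau\le C\left(f(t_1)+B\max\{t_*-t_1,1\}t_1^{-\alpha}\right)$, applies it on the short interval $t_1=t_*-8C^2$, uses the mean-value/pigeonhole step to find a slice $\tilde t$ with $f(\tilde t)\le \frac{A+8C^2B}{4C}t_*^{-\alpha}$, and closes with a bootstrap $f\le At^{-\alpha}\Rightarrow f\le\frac{A}{2}t^{-\alpha}$. Your route buys a bootstrap-free a priori estimate (no continuity argument to justify the bootstrap assumption, which the paper leaves implicit), at the mild cost of making sense of $dF/dv$ and of working on a $v$-slab cut off at $r=(1.2)r_0$ rather than at $t=t_1$ --- legitimate since $Y$ vanishes on that timelike face, as you note. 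You correctly identify that naively discarding the good bulk term only yields $\int_{t_0}^{t_*}t^{-\alpha}dt$ and fails for $\alpha\le 1$; the two points you flag as needing care are precisely where the paper spends its effort: the $v$-slice versus $t$-slice comparison of the transition-region $T$-flux is done via $T$-conservation on the sliver between the two slices (this is exactly why the hypothesis is posed on the range $r_0^*\le r^*\le 2((1.2)r_0)^*-r_0^*$), and the bound on $F(v_0)$ by the initial non-degenerate energy is the same implicit ingredient the paper invokes when it requires $A\ge f(t_0)$.
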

\begin{remark}
The reader should think of $B$ as some energy quantity of the initial data. For example, as we will show later, the hypothesis of this Proposition holds for $B=C\displaystyle\sum_{m+k\leq1}E_1\left(\partial_t^m\Omega^k\phi\right)$.
\end{remark}
\begin{proof}
Apply Proposition 1 for $Y$, on the region $\mathcal{R}=\{\frac{1}{2}\left(t_1+r_0^*\right) \leq v \leq \frac{1}{2}\left(t_*+r_0^*\right), t \geq t_1\}$ as in the figure, we get
\begin{equation*}
\begin{split}
&\int_{\{t \geq t_1\}} J^Y_\mu\left(\phi\right) n^\mu_{\frac{1}{2}\left(t_*+r_0^*\right)} dVol_{\{v=\frac{1}{2}\left(t_*+r_0^*\right)\}}\\
&+ \int_{\{\frac{1}{2}\left(t_1+r_0^*\right) \leq v \leq \frac{1}{2}\left(t_*+r_0^*\right)\}} J^Y_\mu\left(\phi\right) n^\mu_{\infty} dVol_{\{u=\infty\}}+ \int_\mathcal{R} K^Y\left(\phi\right) dVol\\
 = &\int_{\{t \geq t_1\}} J^Y_\mu\left(\phi\right) n^\mu_{\frac{1}{2}\left(t_1+r_0^*\right)} dVol_{\{v=\frac{1}{2}\left(t_1+r_0^*\right)\}} + \int_{\{\frac{1}{2}\left(t_1+r_0^*\right) \leq v \leq \frac{1}{2}\left(t_*+r_0^*\right)\}} J^Y_\mu\left(\phi\right) n^\mu_{t_1} dVol_{t_1}.
\end{split}
\end{equation*}
\begin{figure}[htbp]
\begin{center}

\input{prop16.pstex_t}
 
\caption{The region $\mathcal{R}$}
\end{center}
\end{figure}\\

We split up the integrals into $r\leq r_0$ and $r>r_0$ parts.\\ Notice that the domain of integration of $\int_{\{t \geq t_1\}} J^Y_\mu\left(\phi\right) n^\mu_{\frac{1}{2}\left(t_1+r_0^*\right)} dVol_{\{v=\frac{1}{2}\left(t_1+r_0^*\right)\}}$ lies inside $\{r\leq r_0\}$. Moreover, we note that $\int_{\{\frac{1}{2}\left(t_1+r_0^*\right) \leq v \leq \frac{1}{2}\left(t_*+r_0^*\right)\}} J^Y_\mu\left(\phi\right) n^\mu_{\infty} dVol_{\{u=\infty\}}\geq 0$. Hence
\begin{equation*}
\begin{split}
&\int_{\{r \leq r_0\}} J^Y_\mu\left(\phi\right) n^\mu_{\frac{1}{2}\left(t_*+r_0^*\right)} dVol_{\{v=\frac{1}{2}\left(t_*+r_0^*\right)\}} + \int_{\mathcal{R}\cap \{r \leq r_0\}} K^Y\left(\phi\right) dVol\\
\leq &\int_{\{r \leq r_0\}} J^Y_\mu\left(\phi\right) n^\mu_{\frac{1}{2}\left(t_1+r_0^*\right)} dVol_{\{v=\frac{1}{2}\left(t_1+r_0^*\right)\}} + \int_{\{\frac{1}{2}\left(t_1+r_0^*\right) \leq v \leq \frac{1}{2}\left(t_*+r_0^*\right)\}} J^Y_\mu\left(\phi\right) n^\mu_{t_1} dVol_{t_1}\\
&+\int_{\{r \geq r_0\}\cap\{t \geq t_1\}} J^Y_\mu\left(\phi\right) n^\mu_{\frac{1}{2}\left(t_*+r_0^*\right)} dVol_{\{v=\frac{1}{2}\left(t_*+r_0^*\right)\}}+\int_{\mathcal{R}\cap \{r \geq r_0\}} |K^Y\left(\phi\right)| dVol.
\end{split}
\end{equation*}
We estimate three terms on the right hand side. Notice that $Y$ is constructed to be supported in $\{r \leq \left(1.2\right)r_0\}$.
\begin{equation*}
\begin{split}
&\int_{\{\frac{1}{2}\left(t_1+r_0^*\right) \leq v \leq \frac{1}{2}\left(t_*+r_0^*\right)\}} J^Y_\mu\left(\phi\right) n^\mu_{t_1} dVol_{t_1}\\
\leq &C \int_{r_0^*}^{\left(\left(1.2\right)r_0\right)^*} J^T_\mu\left(\phi\right) n^\mu_{t_1} dVol_{t_1}\\
\leq &CBt_1^{-\alpha},
\end{split}
\end{equation*}
For the second and third term, we first use the compact support of $Y$ and then apply the conservation law associated to the Killing vector $T$. 
\begin{equation*}
\begin{split}
&\int_{\{r \geq r_0\}\cap\{u \geq \frac{1}{2}\left( t_1-r_0^*\right)\}} J^Y_\mu \left( \phi\right) n^\mu_{\frac{1}{2}\left( t_*+r_0^*\right)} dVol_{\{ v=\frac{1}{2}\left( t_*+r_0^*\right) \}}\\
\leq &C \int_{ \{ r_0 \leq r \leq \left(1.2\right)r_0 \}\cap\{u \geq \frac{1}{2}\left(t_1-r_0^*\right)\}} J^T_\mu\left(\phi\right) n^\mu_{\frac{1}{2}\left(t_*+r_0^*\right)} dVol_{\{v=\frac{1}{2}\left(t_*+r_0^*\right)\}}\\
\leq &C \int_{\{r_0^* \leq r^* \leq 2\left(\left(1.2\right)r_0\right)^*-r_0^* \}} J^T_\mu\left( \phi\right) n^\mu_{t_1} dVol_{t_*}\\
\leq &CBt_*^{-\alpha}\\
\leq &CBt_1^{-\alpha},
\end{split}
\end{equation*}
\begin{equation*}
\begin{split}
&\int_{\mathcal{R}\cap \{r \geq r_0\}} |K^Y\left( \phi\right)| dVol\\
\leq &C \int_{t_1}^{t_*} \int_{r_0^*}^{\left(\left(1.2\right)r_0\right)*} |K^Y\left(\phi\right)| dVol\\
\leq &C \int_{t_1}^{t_*} \int_{r_0^*}^{\left(\left(1.2\right)r_0\right)*} J^T_{\mu}\left( \phi\right)n^{\mu}_{t} dVol_{t} dt\\
\leq &CB \int_{t_1}^{t_*} t^{-\alpha} dt\\
\leq &CB \left( t_*-t_1\right)t_1^{-\alpha},
\end{split}
\end{equation*}
since $\alpha\geq0$.\\
Write $f\left(t\right)=\int_{\{r\leq r_0\}} J^Y_{\mu}\left(\phi\right) n^{\mu }_{\frac{1}{2}\left(t+r_0^*\right)} dVol_{\{v=\frac{1}{2}\left(t+r_0^*\right)\}}$.
Then we have
$$f\left(t_*\right)+\int_{t_1}^{t_*} f\left(\tau\right) d\tau \leq C\left(f\left(t_1\right)+B \max\{t_*-t_1, 1\} t_1^{-\alpha}\right).$$
We take $C$ to be fixed from this point on. We clearly can assume the $C>1$.\\
>From this, we will prove the Proposition by a bootstrap argument. Assume $f\left(t\right)\leq At^{-\alpha}$ for some large $A$ that is to be determined. We want to show that $f\left(t\right) \leq \frac{A}{2}t^{-\alpha}$.\\
Let $t_1=t_*-8C^2$. Since we are only concerned with $t_*$ large, we assume without loss of generality that $t_* > 8\left(1-2^{-\frac{1}{\alpha}}\right)^{-1}C^2$ so that $t_*<2^{\frac{1}{\alpha}}t_1$. Then
\begin{equation*}
\begin{split}
f\left(t_*\right)+\int_{t_*-8C^2}^{t_*} f\left(\tau\right) d\tau \leq &C\left(At_1^{-\alpha}+8C^2Bt_1^{-\alpha}\right)\\
\leq &2C\left(A+8C^2B\right)t_*^{-\alpha}.
\end{split}
\end{equation*}
There exists $\tilde{t}$ with $t_*-8C^2 \leq \tilde{t} \leq t_*$ such that 
\begin{equation*}
\begin{split}
f\left(\tilde{t}\right) \leq &\frac{1}{8C^2}\int_{t_*-8C^2}^{t_*} f\left(\tau\right) d\tau\\
\leq &\frac{\left(A+8C^2B\right)}{4C}t_*^{-\alpha}.
\end{split}
\end{equation*}
Now we let $t_1=\tilde{t}$. Notice that $t_*<2^{\frac{1}{\alpha}}\tilde{t}$. Then
\begin{equation*}
\begin{split}
f\left(t_*\right)+\int_{\tilde{t}}^{t_*} f\left(\tau\right) d\tau \leq &C\left(f\left(\tilde{t}\right)+8C^2B\tilde{t}^{-\alpha}\right)\\
\leq &\frac{A}{4}t_*^{-\alpha}+2C^2Bt_*^{-\alpha}+16C^3Bt_*^{-\alpha}\\
\leq &\frac{A}{2}t_*^{-\alpha},
\end{split}
\end{equation*}
if $A\geq 72C^3B$.\\
Of course to have $f\left(t\right)\leq At^{-\alpha}$ for all $t$, we also need it to hold initially, i.e., $A\geq f\left(t_0\right)$. Therefore, we have
$$\int_{\{r \leq r_0\}} J^Y_\mu\left(\phi\right) n^\mu_{\frac{1}{2}\left(t_*+r_0^*\right)} dVol_{\{v=\frac{1}{2}\left(t_*+r_0^*\right)\}} \leq C \left(B+\int J^N_\mu\left(\phi\right)n^\mu_{t_0}dVol_{t_0}\right) t_*^{-\alpha},$$
where $C$ is a universal constant different from the one above.
\end{proof}
Using Proposition 17, we claim that a similar estimate holds on $t$-slices.
\begin{proposition}
Suppose $\int_{r_0^*}^{2\left(\left(1.2\right)r_0\right)^*-r_0^*} J^T_\mu\left(\phi\right) n^\mu_{t} dVol_{t} \leq B t^{-\alpha}$ for all $t$, for some $\alpha >1$. Then 
$$\int_{\{v_* \leq v \leq v_*+1\}} J^Y_\mu\left(\phi\right) n^\mu_{\tau} dVol_{\tau} \leq C \left(B+\int J^N_\mu\left(\phi\right)n^\mu_{t_0}dVol_{t_0}\right) v_*^{-\alpha},$$
for $v_* \geq 1$.
\end{proposition}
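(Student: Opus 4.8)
The plan is to transfer the null-slice decay already furnished by Proposition 17 to a constant-$t$ slice by one further application of the divergence identity of Proposition 1 for $Y$, carried out on a region lying to the \emph{past} of $\{t=\tau\}$, so that the event horizon never enters the region and the red-shift bulk term $K^{Y'}$ keeps its favorable sign. I first dispose of a trivial case: since $Y$ is supported in $\{r\le(1.2)r_0\}$, the integrand $J^Y_\mu(\phi)n^\mu_\tau$ vanishes on $\{t=\tau,\,v_*\le v\le v_*+1\}$ unless $2v_*-\tau\le((1.2)r_0)^*$, i.e. unless $\tau\gtrsim 2v_*$. For smaller $\tau$ the left-hand side is $0$ and there is nothing to prove, so I may assume $\tau\ge 2v_*-((1.2)r_0)^*$, in which case the whole band sits at or inside $r=r_0$ on the slice $\{t=\tau\}$.

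Next I would apply the $Y$-conservation law on the region $\mathcal{B}=\{t\le\tau,\ v_*\le v\le v_*+1\}$, all $Y$-currents being supported in $r\le(1.2)r_0$. Its future boundary is the band $\{t=\tau,\ v_*\le v\le v_*+1\}$ together with the null segment $\{v=v_*+1,\ t\le\tau\}$, and its past boundary is the null segment $\{v=v_*,\ t\le\tau\}$. Proposition 1 then gives, after isolating the target flux,
\[
\int_{\{t=\tau,\,v_*\le v\le v_*+1\}} J^Y_\mu(\phi)\,n^\mu_\tau\,dVol_\tau = \int_{\{v=v_*,\,t\le\tau\}} J^Y_\mu(\phi)\,n^\mu_{v}\,dVol_{v} - \int_{\{v=v_*+1,\,t\le\tau\}} J^Y_\mu(\phi)\,n^\mu_{v}\,dVol_{v} - \int_{\mathcal{B}} K^Y(\phi)\,dVol .
\]
Splitting the bulk integral into $\{r\le r_0\}$, where $K^Y=K^{Y'}\ge 0$ so that its contribution $-\int_{\{r\le r_0\}}K^{Y'}$ is $\le 0$ and may be discarded, and the transition annulus $\{r_0\le r\le(1.2)r_0\}$, where $|K^Y|\le C\,J^T_\mu(\phi)n^\mu_{\{t=const\}}$, leaves an upper bound in terms of the two null-slice fluxes and one annular bulk error.

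I would then estimate these surviving terms. Each null-slice flux, restricted to $\{r\le r_0\}$, is nonnegative and (by positivity of the integrand) bounded by the full flux on that null slice, to which Proposition 17 applies with $t_*=2v_*-r_0^*\sim v_*$, producing $C(B+\int J^N_\mu(\phi)n^\mu_{t_0}dVol_{t_0})\,v_*^{-\alpha}$; on the annulus $\{r_0\le r\le(1.2)r_0\}$ one has $|J^Y_\mu n^\mu_{\{v\}}|\le C\,J^T_\mu n^\mu_{\{v\}}$, and the Killing conservation law for $T$ converts this null-slice $J^T$-flux over a fixed compact $r^*$-interval into the hypothesized $t$-slice flux at $t\sim 2v_*$, again $\le CB v_*^{-\alpha}$. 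The annular bulk error is the point where the geometry must be used: on $\{r_0\le r\le(1.2)r_0\}$ the coordinate $r^*$ ranges only over the compact interval $[r_0^*,((1.2)r_0)^*]$, so across the band $v\in[v_*,v_*+1]$ the relation $t=2v-r^*$ confines $t$ to an interval of length $O(1)$ about $t\approx 2v_*$; integrating the hypothesis $\int_{r_0^*}^{2((1.2)r_0)^*-r_0^*}J^T_\mu(\phi)n^\mu_t dVol_t\le B t^{-\alpha}$ over this bounded $t$-range costs no power of $v_*$ and yields $\le CB v_*^{-\alpha}$. Assembling the three estimates gives the claim.

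The main obstacle is organizational rather than computational: one must choose the region so that (i) the horizon is excluded, which forces the \emph{past} region and preserves the sign of $K^{Y'}$, and (ii) every error term is localized to a bounded range of $t$ near $t\approx 2v_*$, so that the null-slice estimate of Proposition 17 and the $t$-slice hypothesis can both be invoked at the single rate $v_*^{-\alpha}$ without loss. The delicate piece is the transition annulus $r_0\le r\le(1.2)r_0$, where neither the pure red-shift structure nor Proposition 17 is directly available; there the comparisons $|K^Y|,\ |J^Y_\mu n^\mu|\le C\,J^T_\mu n^\mu$ together with the $T$-conservation law are what allow one to pass between null slices and $t$-slices. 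I note that the strengthened hypothesis $\alpha>1$ (versus the $\alpha\ge 0$ needed in Proposition 17) is more than sufficient here; if instead one set up the region to the future of $\{t=\tau\}$, running it to the horizon, then $\alpha>1$ would become genuinely necessary to guarantee convergence of the resulting horizon flux and red-shift bulk integrals.
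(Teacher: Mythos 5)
Your proposal is correct and follows essentially the same route as the paper: the paper also applies the $Y$-divergence identity on the past region $\{v_*\le v\le v_*+1,\ 2v_*-r_0^*\le t\le\tau\}$, discards the nonnegative $v=v_*+1$ flux and $K^{Y'}$ bulk term in $\{r\le r_0\}$, bounds the incoming $v=v_*$ flux by Proposition 17, and controls the annulus $\{r_0\le r\le (1.2)r_0\}$ contributions via $J^T$ and the hypothesis exactly as you describe. The only (immaterial) difference is that the paper cuts the region explicitly at the slice $t=2v_*-r_0^*$, producing one extra past-boundary term estimated directly by the hypothesis, whereas you let the support of $Y$ perform the localization.
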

\begin{proof}
We prove this using the conservation law for $Y$ on the region $\mathcal{R}=\{v_* \leq v \leq v_*+1, 2v_*-r_0^* \leq t \leq \tau\}$.
\begin{equation*}
\begin{split}
&\int_{\{t \geq 2v_*-r_0^*\}} J^Y_\mu\left(\phi\right) n^\mu_{v_*+1} dVol_{\{v=v_*+1\}}\\
&+ \int_{\{v_* \leq v \leq v_*+1\}} J^Y_\mu\left(\phi\right) n^\mu_{\tau} dVol_{\tau}+ \int_\mathcal{R} K^Y\left(\phi\right) dVol\\
 = &\int_{\{t \geq 2v_*-r_0^*\}} J^Y_\mu\left(\phi\right) n^\mu_{v_*} dVol_{v_*} + \int_{\{v_* \leq v \leq v_*+1\}} J^Y_\mu\left(\phi\right) n^\mu_{2v_*-r_0^*} dVol_{\{t=2v_*-r_0^*\}}.
\end{split}
\end{equation*}
We split up the integrals into $r\leq r_0$ and $r>r_0$ parts.\\ Notice that the domain of integration of $\int_{\{t \geq 2v_*-r_0^*\}} J^Y_\mu\left(\phi\right) n^\mu_{v_*} dVol_{v_*}$ lies inside $\{r\leq r_0\}$. Notice also that 
$$\int_{\{t \geq 2v_*-r_0^*\}\cap\{r\leq r_0\}} J^Y_\mu\left(\phi\right) n^\mu_{v_*+1} dVol_{\{v=v_*+1\}}+\int_{\mathcal{R}\cap\{r\leq r_0\}} K^Y\left(\phi\right) dVol\geq 0. $$ Hence
\begin{equation*}
\begin{split}
&\int_{\{v_* \leq v \leq v_*+1\}} J^Y_\mu\left(\phi\right) n^\mu_{\tau} dVol_{\tau}\\
\leq &\int_{\{t \geq 2v_*-r_0^*\}} J^Y_\mu\left(\phi\right) n^\mu_{v_*} dVol_{v_*} + \int_{\{v_* \leq v \leq v_*+1\}} J^Y_\mu\left(\phi\right) n^\mu_{2v_*-r_0^*} dVol_{\{t=2v_*-r_0^*\}}\\
&+\int_{\{t \geq 2v_*-r_0^*\}\cap\{r\geq r_0\}} J^Y_\mu\left(\phi\right) n^\mu_{v_*+1} dVol_{\{v=v_*+1\}}+\int_{\mathcal{R}\cap \{r \geq r_0\}} |K^Y\left(\phi\right)| dVol.
\end{split}
\end{equation*}
We show that each term has the correct bound. The first term is bounded using Proposition 17,
\begin{equation*}
\begin{split}
&\int_{\{t \geq 2v_*-r_0^*\}} J^Y_\mu\left(\phi\right) n^\mu_{v_*} dVol_{v_*}\\
= &\int_{\{r \leq r_0\}} J^Y_\mu\left(\phi\right) n^\mu_{v_*} dVol_{v_*}\\
\leq &C \left(B+\int J^N_\mu\left(\phi\right)n^\mu_{t_0}dVol_{t_0}\right)\left(2v_*-r_0^*\right)^{-\alpha}\\
\leq &C \left(B+\int J^N_\mu\left(\phi\right)n^\mu_{t_0}dVol_{t_0}\right)v_*^{-\alpha}.
\end{split}
\end{equation*}
The second term is controlled by assumption
\begin{equation*}
\begin{split}
&\int_{\{v_* \leq v \leq v_*+1\}} J^Y_\mu\left(\phi\right) n^\mu_{2v_*-r_0^*} dVol_{\{t=2v_*-r_0^*\}}\\
\leq &C\int_{\{v_* \leq v \leq v_*+1\}} J^T_\mu\left(\phi\right) n^\mu_{2v_*-r_0^*} dVol_{\{t=2v_*-r_0^*\}}\\
\leq &CB\left(2v_*-r_0^*\right)^{-\alpha}\\
\leq &CBv_*^{-\alpha}.
\end{split}
\end{equation*}
The last two terms are bounded by noting that $Y$ is supported in $r \leq \left(1.2\right)r_0$. The details are identical to the proof of Proposition 17.
Therefore,
$$\int_{\{v_* \leq v \leq v_*+1\}} J^Y_\mu\left(\phi\right) n^\mu_{\tau} dVol_{\tau}C \left(B+\int J^N_\mu\left(\phi\right)n^\mu_{t_0}dVol_{t_0}\right)v_*^{-\alpha}.$$
\end{proof}
This, and Sobolev embedding, is sufficient to show pointwise decay of the derivatives of $\phi$ along the horizon.
We show further that if on a compact set, we have both energy decay and $L^2$ decay, then we have pointwise decay along the event horizon. More precisely, we have
\begin{proposition}
There exist $\tilde{r}$ very close to $2M$ such that if 
$$\sum_{m=0}^1\sum_{k=0}^{3-m} \int_{\tilde{r}^*}^{2\left(\left(1.2\right)r_0\right)^*-r_0^*} \left(J^T_\mu\left(\partial_t^m\Omega^k\phi\right) n^\mu_{t}+\phi^2\right) dVol_{t} \leq B t^{-\alpha}$$ 
for all $t$, for some $\alpha \geq 0$, then
$$|\phi\left(v_*,r\right)|^2 \leq C \left(B+\sum_{k=0}^2\int J^N_\mu\left(\Omega^k\phi\right)n^\mu_{t_0}dVol_{t_0}\right)v_*^{-\alpha},$$
$$|\partial_{r^*}\phi\left(v_*,r\right)|^2 \leq C \left(B+\sum_{m=0}^{1}\sum_{k=0}^{3-m} \int J^N_\mu\left(\partial_t^m\Omega^k\phi\right)n^\mu_{t_0}dVol_{t_0}\right)v_*^{-\alpha},$$
for $v_* \geq 1$, $r \leq \tilde{r}$.
\end{proposition}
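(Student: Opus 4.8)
The plan is to deduce the pointwise bounds from weighted $L^2$ quantities by combining a Sobolev embedding on the spheres with a fundamental-theorem-of-calculus argument in the incoming null direction $\partial_u$, which transports control from the compact region $\{r=r_0\}$ (where the hypothesis applies) to the event horizon. Since $\Omega$ is Killing, $\Box_g\Omega^k\phi=0$, so Sobolev embedding on $\mathbb{S}^2$ gives, pointwise in $(v,u)$,
$$\sup_{\omega}|\phi(v,u,\omega)|^2\leq C\sum_{k=0}^2\int_{\mathbb{S}^2}|\Omega^k\phi(v,u,\omega)|^2\,dA.$$
Fixing $v=v_*$ and writing $F_k(u)=\left(\int_{\mathbb{S}^2}|\Omega^k\phi(v_*,u,\omega)|^2\,dA\right)^{1/2}$, Cauchy--Schwarz on the sphere gives $|F_k'(u)|\leq\left(\int_{\mathbb{S}^2}|\partial_u\Omega^k\phi|^2\,dA\right)^{1/2}$, so it suffices to bound $F_k$ at the boundary $r=r_0$ (i.e. at $u_1$ with $r^*(u_1,v_*)=r_0^*$) and to control the transported error $\int_{u_1}^{\infty}|F_k'(u)|\,du$.

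First I would apply Proposition 17 to each $\Omega^k\phi$, $k\leq2$; its hypothesis is supplied by the $J^T(\Omega^k\phi)$ terms of the present hypothesis. Since on $\{v=\text{const}\}$ the current $J^Y_\mu(\Omega^k\phi)n^\mu$ controls $\frac{1}{1-\mu}(\partial_u\Omega^k\phi)^2$ non-degenerately, and $dVol_v=r^2\sqrt{1-\mu}\,dA\,du$, Proposition 17 gives
$$\int_{r\leq r_0}\int_{\mathbb{S}^2}\frac{1}{1-\mu}(\partial_u\Omega^k\phi)^2\,dA\,du\leq C\left(B+\sum_{k=0}^2\int J^N_\mu(\Omega^k\phi)n^\mu_{t_0}\,dVol_{t_0}\right)v_*^{-\alpha}.$$
Then a weighted Cauchy--Schwarz in $u$ closes the error term: using $du=-(1-\mu)^{-1}dr$ one computes $\int_{u_1}^{\infty}(1-\mu)\,du=\int_{2M}^{r_0}dr=r_0-2M<\infty$, which is the quantitative form of the red-shift, so
$$\int_{u_1}^{\infty}|F_k'(u)|\,du\leq\left(\int_{u_1}^{\infty}(1-\mu)\,du\right)^{1/2}\left(\int_{u_1}^{\infty}(1-\mu)^{-1}\int_{\mathbb{S}^2}|\partial_u\Omega^k\phi|^2\,dA\,du\right)^{1/2}\leq Cv_*^{-\alpha/2}.$$
For the boundary term $F_k(u_1)$, which lives at $r=r_0$ on the slice $t=2v_*-r_0^*$, a one-dimensional Sobolev argument in $r^*$ bounds $\int_{\mathbb{S}^2}|\Omega^k\phi(r_0^*)|^2\,dA$ by $\int\int_{\mathbb{S}^2}(|\Omega^k\phi|^2+|\partial_{r^*}\Omega^k\phi|^2)\,dA\,dr^*$ over the compact range, which is controlled by the $\phi^2$ and $J^T(\Omega^k\phi)$ terms of the hypothesis and hence is $\leq CBv_*^{-\alpha}$. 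Squaring and summing over $k\leq2$ yields the first estimate.

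For $\partial_{r^*}\phi$ I would run the same scheme after one further commutation. Bounding $\partial_t\Omega^k\phi$ pointwise is identical to the above (Proposition 17 applies since $\partial_t$ is Killing), and the remaining derivative is recovered algebraically: writing $\partial_u=\partial_t-\partial_{r^*}$ and substituting $\partial_{r^*}^2\phi=\partial_t^2\phi-(1-\mu)\lapp\phi-\frac{2(1-\mu)}{r}\partial_{r^*}\phi$ from $\Box_g\phi=0$, every second derivative entering the $u$-integration of $\partial_u\phi$ is expressed through $\partial_t^2\phi$, $\lapp\phi$ and $\partial_{r^*}\phi$. This is exactly why the hypothesis and conclusion for $\partial_{r^*}\phi$ carry the larger budget $\sum_{m\leq1}\sum_{k\leq3-m}$: the Sobolev step costs two $\Omega$'s, the angular Laplacian a third, and the time derivatives supply $\partial_t^2\phi$ and $\partial_t\partial_{r^*}\phi$.

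The main obstacle is the zeroth-order control: $J^Y$ only bounds derivatives, so $\phi$ itself near the horizon must be reconstructed by integration. The delicate point is arranging the weighted Cauchy--Schwarz so that the $u$-integral toward $u=\infty$ converges; this works only because the red-shift makes $1-\mu$ decay exponentially in $u$, giving the finite factor $\int(1-\mu)\,du=r_0-2M$, while simultaneously the non-degenerate $v$-slice flux supplies precisely the matching weight $(1-\mu)^{-1}$. Marrying this horizon estimate to the compact-region hypothesis through the boundary term at $r_0$ (and, for $k=0$, genuinely using the $\phi^2$ term) is the step that requires the most care.
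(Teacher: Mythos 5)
Your argument for the first estimate is correct but takes a genuinely different route from the paper's. The paper performs the Sobolev embedding in all three spatial variables on the slab $\{v_*\le v\le v_*+1\}$ of a $t$-slice, controls the derivative terms via Proposition 18 (the $t$-slice analogue of Proposition 17), and handles the zeroth-order term $\int\phi^2$ by integrating $\phi\,\partial_u\phi$ in $u$; after Cauchy--Schwarz this leaves a spacetime term $\int\phi^2(1-\mu)$ that must be absorbed by a bootstrap exploiting $(1-\mu)\le Ce^{cr^*}$, and this is where the smallness condition on $\tilde r$ enters. You instead do Sobolev only on the spheres and run a fundamental-theorem-of-calculus argument in $u$ for $F_k(u)=\|\Omega^k\phi(v_*,u,\cdot)\|_{L^2(\mathbb S^2)}$, so the error is linear in $\partial_u\Omega^k\phi$ and a single weighted Cauchy--Schwarz with $\int_{u_1}^\infty(1-\mu)\,du=r_0-2M$ closes it against the non-degenerate $v$-slice flux from Proposition 17: no bootstrap, no Proposition 18, and no condition on $\tilde r$ beyond $\tilde r<r_0$. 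That is a clean simplification, and your accounting of where the $\phi^2$ term of the hypothesis is genuinely needed (the $k=0$ boundary term at $r=r_0$) is right.

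The $\partial_{r^*}\phi$ estimate, however, has a gap as written. You claim that after substituting the wave equation, every second derivative entering the $u$-integration is "expressed through $\partial_t^2\phi$, $\lapp\phi$ and $\partial_{r^*}\phi$." If a raw $\partial_t^2\phi$ (or $\partial_t\partial_{r^*}\phi$) survives in $\partial_u^2\phi$, your weighted Cauchy--Schwarz requires $\int_{u_1}^\infty(1-\mu)^{-1}\|\partial_t^2\Omega^k\phi\|_{L^2(\mathbb S^2)}^2\,du$, and nothing available controls this: the $v$-slice flux of $J^Y(\partial_t\phi)$ bounds $(1-\mu)^{-1}(\partial_u\partial_t\phi)^2$ and $(1-\mu)|\nabb\partial_t\phi|^2$ but contains no $(\partial_v\partial_t\phi)^2$, and a pointwise bound on $\partial_t^2\phi$ is useless here since $\int_{u_1}^\infty(1-\mu)^{-1}\,du=\infty$. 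The step is saved only by the exact recombination $\partial_u^2\phi=2\,\partial_u\left(\partial_t\phi\right)-\left(1-\mu\right)\left(\lapp\phi+\frac{2}{r}\partial_{r^*}\phi\right)$, which follows from writing $\Box_g\phi=0$ in null coordinates as $\partial_u\partial_v\phi=\left(1-\mu\right)\left(\lapp\phi+\frac{2}{r}\partial_{r^*}\phi\right)$: the dangerous $\partial_t^2\phi$ and $\partial_t\partial_{r^*}\phi$ must combine into $\partial_u\left(\partial_t\phi\right)$, which is exactly what the red-shift flux of the commuted quantity controls, and every remaining term carries a factor of $(1-\mu)$ compensating the $(1-\mu)^{-1}$ weight. (The leftover $(1-\mu)(\partial_t\Omega^k\phi)^2$ still needs $\sup_u\|\partial_t\Omega^k\phi\|_{L^2(\mathbb S^2)}^2\le Cv_*^{-\alpha}$ for $k\le2$, which your FTC scheme applied to $\partial_t\Omega^k\phi$ does supply within the stated derivative budget.) The paper sidesteps this entirely by working with $t$-slice fluxes of $J^Y$, which are non-degenerate in all derivatives; on $v$-slices the cancellation must be made explicit, and without it the argument fails.
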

\begin{proof}
We first take $\tilde{r}$ to be small enough to apply $Y$, i.e., $\tilde{r} <r_0$. The exact condition on $\tilde{r}$ will be determined later.\\
For decay of $\phi\left(v_*,r\right)$, we want to show that on any time-slice, say $t=\tau$, 
$$\sum_{k=0}^2 \int_{\{v_* \leq v \leq v_*+1\}} \left(|\nabb^k\phi|^2+|\nabb^k\partial_{r^*}\phi|^2\right)dA dr^*_{\tau} \leq C v_*^{-\alpha}.$$
For decay of $\partial_{r^*}\phi\left(v_*,r\right)$, we want to show that on any time-slice, say $t=\tau$, 
$$\sum_{k=0}^2 \int_{\{v_* \leq v \leq v_*+1\}} \left(|\nabb^k\partial_{r^*}\phi|^2+|\nabb^k\partial_{r^*}^2\phi|^2\right)dA dr^*_{\tau} \leq C v_*^{-\alpha}.$$
Proposition 18 gives
$$\int_{\{v_* \leq v \leq v_*+1\}} \left(\left(\partial_{r^*}\phi\right)^2+|\nabb\phi|^2\right)dA dr^*_{\tau} \leq C \left(B+\int J^N_\mu\left(\phi\right)n^\mu_{t_0}dVol_{t_0}\right) v_*^{-\alpha}.$$
After commuting with an appropriate number of $\Omega$, Proposition 18 gives
$$\int_{\{v_* \leq v \leq v_*+1\}} \left(|\nabb\partial_{r^*}\phi|^2+|\nabb\nabb\partial_{r^*}\phi|^2\right)dA dr^*_{\tau} \leq C \left(B+\sum_{k=1}^2 \int J^N_\mu\left(\Omega^k\phi\right)n^\mu_{t_0}dVol_{t_0}\right) v_*^{-\alpha}.$$
After commuting with $\partial_t$ and using the equation, Proposition 18 gives
\begin{equation*}
 \begin{split}
&\int_{\{v_* \leq v \leq v_*+1\}} \left(\left(\partial_{r^*}^2\phi\right)^2+|\nabb\partial_{r^*}^2\phi|^2+|\nabb^2\partial_{r^*}^2\phi|^2\right)dA dr^*_{\tau}\\
\leq &C \left( B+\sum_{m=0}^1\sum_{k=0}^{3-m} \int J^N_\mu\left(\partial_t^m\Omega^k\phi\right)n^\mu_{t_0}dVol_{t_0}\right) v_*^{-\alpha}. 
 \end{split}
\end{equation*}
Therefore, it remains to show 
$$\int_{\{v_* \leq v \leq v_*+1\}} \phi^2 dA dr^*_{\tau} \leq C v_*^{-\alpha}.$$
We rewrite
$$\int_{\{v_* \leq v \leq v_*+1\}} \phi^2 dA dr^*_{\tau}=\int_{2v_*-\tau}^{2v_*-\tau+2} \phi^2 dA dr^*_{\tau}.$$
To achieve decay, we integrate in the $u$-direction and use the estimates we have on the compact set.
\begin{equation*}
\begin{split}
&\int_{2v_*-\tau}^{2v_*-\tau+2} \phi^2 dA dr^*_{\tau}\\
\leq &\int_{\tilde{r}^*}^{\tilde{r}^*+2} \phi^2 dA dr^*_{\{t=2v_*-\tilde{r}^*\}}+ \int_{2v_*-\tilde{r}^*}^{\tau} \int_{2v_*-t}^{2v_*-t+2} \phi\left(\partial_u\phi\right)dA dr^*dt\\
\leq &\int_{\tilde{r}^*}^{\tilde{r}^*+2} \phi^2 dA dr^*_{\{t=2v_*-\tilde{r}^*\}}+ \int_{2v_*-\tilde{r}^*}^{\tau} \int_{2v_*-t}^{2v_*-t+2} \frac{\left(\partial_u\phi\right)^2}{1-\mu }dA dr^*dt+ \int_{2v_*-\tilde{r}^*}^{\tau} \int_{2v_*-t}^{2v_*-t+2} \phi^2 \left(1-\mu \right)dA dr^*dt\\
\leq &B \left(2v_*-\tilde{r}^*\right)^{-\alpha} + \int_{2v_*-\tilde{r}}^{\tau} \int_{2v_*-t}^{2v_*-t+2} K^Y\left(\phi\right) dVol + \int_{2v_*-\tilde{r}^*}^{\tau} \int_{2v_*-t}^{2v_*-t+2} \phi^2 \left(1-\mu \right)dA dr^*dt.
\end{split}
\end{equation*}
Using the conservation law for $Y$, and controlling all the terms on the region $\{r \geq \tilde{r}\}$ with the assumption, we have
\begin{equation*}
\begin{split}
&\int_{2v_*-\tilde{r}^*}^{\tau} \int_{2v_*-t}^{2v_*-t+2} K^Y\left(\phi\right) dVol\\
\leq & \int_{\{r \leq \tilde{r}\}} J^N_\mu\left(\phi\right) n^\mu_{v_*} dVol_{v_*}+ C B v_*^{-\alpha}\\
\leq & C\left(B+\int J^N_\mu\left(\phi\right)n^\mu_{t_0}dVol_{t_0}\right) \left(v_*-\tilde{r}^*\right)^{-\alpha}+ C B v_*^{-\alpha},
\end{split}
\end{equation*}
where in the last step we have used Proposition 17.\\
Therefore, 
$$\int_{2v_*-\tau}^{2v_*-\tau+2} \phi^2 dA dr^*_{\tau} \leq C \left(B+\int J^N_\mu\left(\phi\right)n^\mu_{t_0}dVol_{t_0}\right) v_*^{-\alpha}+\int_{2v_*-\tilde{r}^*}^{\tau} \int_{2v_*-t}^{2v_*-t+2} \phi^2 \left(1-\mu \right)dA dr^*dt.$$
The decay from the last term comes from the exponentially decaying (towards $r^*=-\infty$) factor $\left(1-\mu \right)$. To use this decay, we use a bootstrap argument. Assume the decay $\int_{v_*-t}^{v_*-t+1} \phi^2 dA dr^*_{\{t=t\}} \leq Av_*^{-\alpha},$ independent of $t$ (Note that we can do this initially (in $v$) independent of $t$ because after we fix $v$, the region of integration is a bounded set of the manifold. The apparent infiniteness is just an artifact of the choice of coordinates). We want to show that $\int_{v_*-t}^{v_*-t+1} \phi^2 dA dr^*_{\{t=t\}} \leq \frac{A}{2}v_*^{-\alpha}$.\\
>From the above, and using that $\left(1-\mu \right) \leq Ce^{cr^*}$ we have 
\begin{equation*}
\begin{split}
&\int_{2v_*-\tau}^{2v_*-\tau+2} \phi^2 dA dr^*_{\tau}\\ 
\leq &C \left(B+\int J^N_\mu\left(\phi\right)n^\mu_{t_0}dVol_{t_0}\right) v_*^{-\alpha}+\int_{v_*-\tilde{r}^*}^{\infty} ACe^{c\left(2v_*-t+2\right)}v_*^{-\alpha} dt\\
\leq &C \left(B+\int J^N_\mu\left(\phi\right)n^\mu_{t_0}dVol_{t_0}\right) v_*^{-\alpha}+c^{-1}ACe^{c\left(\tilde{r}^*+2\right)}v_*^{-\alpha}\\
\leq & \frac{A}{2}v_*^{-\alpha},
\end{split}
\end{equation*}
if we choose $A \geq 4 C \left(B+\int J^N_\mu\left(\phi\right)n^\mu_{t_0}dVol_{t_0}\right)$ and $\tilde{r}^* \leq -2 + \frac{1}{c}\log{\frac{c}{4C}}$.
\end{proof}
\section{Proof of the Main Theorem}
\subsection{Improved Decay for $\phi$}
To prove Main Theorem 1, we proceed in two steps. First, we show that for every $t_*$, there exist $t_1<t_*$, $t \sim t_*$ such that a weighted $L^2$-norm of $\phi$ on the slice $\{t=t_1\}$ has the desired decay of $t_*^{-3+\delta}$. We then use the estimates for $\psi$ to upgrade this to decay estimates for a weighted $L^2$-norm of $\phi$ on the slice $\{t=t_*\}$.\\
We first set up some notation. Fix $r_1^*, r_2^*$. These are the $r_1^*$ and $r_2^*$ in the statement of Main Theorem 1. In other words, we would like to prove a decay estimate on the fixed compact region $r_1^*\leq r \leq r_2^*$. Let $t_* \geq 2\left(|r_1^*|+|r_2^*|\right)$ be the time slice on which we want to show the decay estimate. Let $\tilde{t}= \left(1.1\right)^{-1}t_*$ and $\mathcal{P}=\{\tilde{t} \leq t \leq t_*, r_1^*-t_*+t\leq r^* \leq r_2^*+t_*-t\}$.
\begin{proposition}
There exist a $t_1$ with $\tilde{t} \leq t_1 \leq t_*$ such that 
$$\int_{\mathcal{P}\cap \{t=t_1\}} \left(\phi^2+\left(\partial_{r^*}\phi\right)^2\right) r^{-2}\left(1-\mu\right)^2 dA dr^* \leq C t_*^{-3}E_0\left(\phi\right). $$
\begin{figure}[htbp]
\begin{center}
 
\input{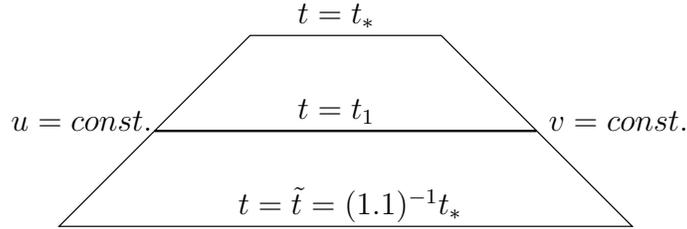}
 
\caption{The region $\mathcal{P}$}
\end{center}
\end{figure}
\end{proposition}
\begin{proof}
By Theorem 5, 
\begin{equation*}
\begin{split}
& \int_\mathcal{P} \left( \phi^2+\left( \partial_{r^*}\phi\right)^2\right) r^{-2}\left( 1-\mu\right)^2 dA dr^* dt\\
\leq &C \int_\mathcal{P} |K^{\tilde{X},w^{\tilde{X}}}\left( \phi \right)|+\displaystyle\sum_l K^{X_l,w^{X_l}}\left( \phi_l \right) dVol\\
\leq &C\tilde{t}^{-2} E_0\left( \phi\right)\\
\leq &Ct_*^{-2} E_0\left( \phi\right).
\end{split}
\end{equation*}
Now take $t_1$ such that 
$$\int_{\mathcal{P}\cap \{t=t_1\}} \left(\phi^2+\left(\partial_{r^*}\phi\right)^2\right) r^{-2}\left(1-\mu\right)^2 dA dr^*=\inf_{\tilde{t} \leq t \leq t_*} \int_{\mathcal{P}\cap \{t=t\}} \left(\phi^2+\left(\partial_{r^*}\phi\right)^2\right) r^{-2}\left(1-\mu\right)^2 dA dr^* ,$$
which exists since we are taking the infimum over a compact interval, and note that 
\begin{equation*}
\begin{split}
\inf_{\tilde{t} \leq t \leq t_*} \int_{\mathcal{P}\cap \{t=t\}} \left(\phi^2+\left(\partial_{r^*}\phi\right)^2\right) r^{-2}\left(1-\mu\right)^2 dA dr^* \leq &\left(t_*-\tilde{t}\right)^{-1}\int_\mathcal{P} \left(\phi^2+\left(\partial_{r^*}\phi\right)^2\right) r^{-2}\left(1-\mu\right)^2 dA dr^* dt\\
\leq &C t_*^{-1}\int_\mathcal{P} \left(\phi^2+\left(\partial_{r^*}\phi\right)^2\right) r^{-2}\left(1-\mu\right)^2 dA dr^* dt\\
\leq &C t_*^{-3}E_0\left(\phi\right).
\end{split}
\end{equation*}
\end{proof}
To upgrade this to an estimate for a generic $t$, we make two observations about $S$. Firstly, $S$ is timelike away from the event horizon. Secondly, $S$ has a weight $\sim t$. We can therefore integrate from the "good slice" $t=t_1$ to the slice $t=t_*$ and get the same decay estimate. This is done using integration by parts in the following Proposition. We prove a more general form but the reader should keep in mind that we will use $f=\phi^2+\left(\partial_{r^*}\phi\right)^2$, $g=r^{-2}\left(1-\mu\right)^2$.
\begin{proposition}
Let $f=f\left(r^*,t,\omega\in\mathbb S^2\right)$, $g=g\left(r^*\right)$, $\mathcal{P}=\{t_1 \leq t \leq t_*, r_1^*-t_*+t\leq r^* \leq r_2^*+t_*-t\}$. Then
\begin{equation*}
\begin{split}
&t_*\int_{\mathcal{P}\cap \{ t=t_*\}} f g dA dr^* +\int_{\mathcal{P}\cap \{v=\frac{1}{2}\left(t_*+r_2^*\right) \}} v f g dA dr^*+\int_{\mathcal{P}\cap \{u=\frac{1}{2}\left(t_*-r_1^* \right)\}} u f g dA dr^*\\
= &t_1\int_{\mathcal{P}\cap \{ t=t_1\}} f g dA dr^* + 2\int_\mathcal{P} f g dA dt dr^* +\int_\mathcal{P} r^*fg'dAdtdr^*+\int_\mathcal{P} \left(Sf\right)gdA dtdr^*.
\end{split}
\end{equation*}
\end{proposition}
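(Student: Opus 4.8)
The identity is a two–dimensional integration by parts in the variables $(t,r^*)$, with the sphere $\mathbb{S}^2$ entering only as a passive parameter: since $S=t\partial_t+r^*\partial_{r^*}$ carries no angular derivative, the integration $\int_{\mathbb{S}^2}dA$ commutes with every operation below and may be suppressed until the end. The plan is to recognise the bulk integrand on the right-hand side as an exact divergence. Because $g=g(r^*)$ and $S r^*=r^*$ we have $Sg=r^*g'$, and hence
\begin{equation*}
2fg+(Sf)g+r^*fg'\;=\;2fg+S(fg)\;=\;\partial_t\!\big(t\,fg\big)+\partial_{r^*}\!\big(r^*\,fg\big).
\end{equation*}
Thus the bulk term is exactly $\mathrm{div}_{(t,r^*)}W$ for the planar vector field $W=(t\,fg,\;r^*\,fg)$, whose components are the coordinate components of $(fg)\,S$.

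The second step is to apply the planar divergence theorem (Green's theorem) to $W$ over the trapezoid $\mathcal{P}=\{t_1\le t\le t_*,\ r_1^*-t_*+t\le r^*\le r_2^*+t_*-t\}$ in the $(t,r^*)$ plane, and then integrate over $\mathbb{S}^2$. The boundary $\partial\mathcal{P}$ has four pieces: the two spacelike faces $\{t=t_1\}$ and $\{t=t_*\}$, and the two null faces, which are the lines $t-r^*=t_*-r_1^*$ (that is, $u=\tfrac12(t_*-r_1^*)$) and $t+r^*=t_*+r_2^*$ (that is, $v=\tfrac12(t_*+r_2^*)$). On each face one reads off the flux of $W$. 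On a constant-$t$ face the outward normal is $\pm\partial_t$, so the weight is the $\partial_t$-component of $S$, namely $t$; this yields $+t_*\int_{\{t=t_*\}}fg$ and $-t_1\int_{\{t=t_1\}}fg$, the latter being moved to the right-hand side. On the null faces it is cleanest to pass to $(u,v)$, where $S=u\partial_u+v\partial_v$, so that the flux through $\{v=\mathrm{const}\}$ is weighted by the $\partial_v$-component $v$ and that through $\{u=\mathrm{const}\}$ by $u$. Collecting the four contributions reproduces the stated identity.

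I expect the only delicate point to be the bookkeeping on the two null faces. There one must fix the (counter-clockwise) orientation of $\partial\mathcal{P}$, parametrise each null segment, and — since $t\mp r^*$ is constant along it — correctly relate the line element $dr^*$ written in the statement to the flux one-form $W^t\,dr^*-W^{r^*}\,dt$; this is exactly the step that pins down the normalisation of the $u$- and $v$-weights on those faces (and their signs via the orientation). By contrast, the two spacelike faces and the identification of the bulk term are immediate from the divergence computation above, and the whole argument is insensitive to the choice of $f$ and $g$, so it applies verbatim to the later substitution $f=\phi^2+(\partial_{r^*}\phi)^2$, $g=r^{-2}(1-\mu)^2$.
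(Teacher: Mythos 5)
Your proposal is correct and is essentially the paper's own argument: the paper changes to $(u,v)$, performs the two one-dimensional integrations by parts $\int v(\partial_v f)g$ and $\int u(\partial_u f)g$, and adds them, which is precisely your divergence theorem for the planar field $fg\,S$ over the trapezoid $\mathcal{P}$. The one step you deferred — the normalisation on the null faces — is worth carrying out: since $|dr^*|=2|du|$ on $\{t=\mathrm{const}\}$ but $|dr^*|=|du|$ on $\{v=\mathrm{const}\}$ (and symmetrically for $u$), the flux through the null faces comes out as $2\int v fg\,dA\,dr^*$ and $2\int u fg\,dA\,dr^*$ rather than with coefficient $1$ as stated; this factor of $2$ is present in the paper's version as well and is harmless, because in the application those boundary terms are non-negative and are simply discarded.
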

\begin{proof}
We change to the variables $u,v$ and integrate by parts,
\begin{equation*}
\begin{split}
&\int_\mathcal{P} v\left(\partial_v f\right)g dA dt dr^*\\
=&\int_{\frac{1}{2}\left(t_*-r_2^*\right)}^{\frac{1}{2}\left(t_*-r_1^*\right)}\int_{t_1-u}^{t_*-u} v\left(\partial_v f\right)g dA dv du + \int_{\frac{1}{2}\left(2t_1-t_*-r_2^*\right)}^{\frac{1}{2}\left(t_*-r_2^*\right)}\int_{t_1-u}^{\frac{1}{2}\left(t_*+r_2^*\right)} v\left(\partial_v f\right)g dAdv du\\
=&-\int_\mathcal{P} fg dAdt dr^*-\int_\mathcal{P} v f \partial_{r^*}g dAdt dr^* + \int_{\mathcal{P}\cap \{ t=t_*\}} v f g dAdr^*+\int_{\mathcal{P}\cap \{v=t_*+r_2^* \}} v f g dAdr^*\\
&-\int_{\mathcal{P}\cap \{ t=t_1\}} v f g dAdr^*,
\end{split}
\end{equation*}
\begin{equation*}
\begin{split}
&\int_\mathcal{P} u\left(\partial_u f\right)g dAdt dr^*\\
=&\int_{\frac{1}{2}\left(2t_1-t_*+r_1^*\right)}^{\frac{1}{2}\left(t_*+r_1^*\right)}\int_{t_1-v}^{\frac{1}{2}\left(t_*-r_1^*\right)} u\left(\partial_u f\right)g dAdu dv + \int_{\frac{1}{2}\left(t_*+r_1^*\right)}^{\frac{1}{2}\left(t_*+r_2^*\right)}\int_{t_1-v}^{t_*-v} u\left(\partial_u f\right)g dAdu dv\\
=&-\int_\mathcal{P} fg dAdt dr^*+\int_\mathcal{P} u f \partial_{r^*}g dAdt dr^* + \int_{\mathcal{P}\cap \{ t=t_*\}} u f g dAdr^*+\int_{\mathcal{P}\cap \{u=t_*-r_1^* \}} u f g dAdr^*\\
&-\int_{\mathcal{P}\cap \{ t=t_1\}} u f g dAdr^*.\\
\end{split}
\end{equation*}
The proposition is proved by adding these two equations.
\end{proof}
To prove the main theorem, we use the above identity using $f=\phi^2+\left(\partial_{r^*}\phi\right)^2$, $g=r^{-2}\left(1-\mu\right)^2$.\\
We notice that since $f, g \geq 0$ by definition and $u, v \geq 0$ in $\mathcal{P}=\{t_1 \leq t \leq t_*, r_1^*-t_*+t\leq r^* \leq r_2^*+t_*-t\}$.\\
Therefore, 
$$\int_{\mathcal{P}\cap \{v=\frac{1}{2}\left(t_*+r_2^*\right) \}} v f g dAdr^*+\int_{\mathcal{P}\cap \{u=\frac{1}{2}\left(t_*-r_1^*\right) \}} u f g dAdr^* \geq 0.$$
Thus Proposition 21 would imply
\begin{equation*}
\begin{split}
&t_* \int_{\mathcal{P}\cap \{ t=t_*\}} \left( \phi^2+\left( \partial_{r^*}\phi \right)^2\right) r^{-2}\left( 1-\mu\right)^2 dAdr^*\\
\leq &t_1\int_{\mathcal{P}\cap \{ t=t_1\}} \left( \phi^2+\left( \partial_{r^*}\phi \right)^2\right) r^{-2}\left( 1-\mu\right)^2 dAdr^* + 2\int_\mathcal{P} \left( \phi^2+\left( \partial_{r^*}\phi \right)^2\right) r^{-2}\left( 1-\mu\right)^2 dA dr^* dt\\
&+2\int_\mathcal{P} |r^*\left( \phi^2+\left( \partial_{r^*}\phi \right)^2\right) \left( r^{-2}\left(1-\mu \right)^2\right)'|dtdr^* +\int_P \left(|\psi\phi|+|\partial_{r^*}\psi\partial_{r^*}\phi|\right) r^{-2}\left(1-\mu\right)^2dAdtdr^*\\
\leq &C t_*^{-2} E_0\left(\phi\right) + \int_\mathcal{P} \left(\psi^2+\left(\partial_{r^*}\psi\right)^2 \right) r^{-2}\left( 1-\mu\right)^2dAdtdr^* + \int_\mathcal{P} \left(\phi^2+\left(\partial_{r^*}\phi\right)^2\right) r^{-2}\left(1-\mu\right)^2dAdtdr^*\\
\leq &C t_*^{-2+\delta} E_1\left(\phi\right),
\end{split}
\end{equation*}
where we have used Proposition 20 at the second to last step and Theorems 5 and 10 at the last step.\\
Therefore,
$$\int_{r_1^*}^{r_2^*} \left(\phi\left(t_*\right)^2+\left(\partial_{r^*}\phi\left(t_*\right)\right)^2\right) dAdr^* \leq C t_*^{-3+\delta} E_1\left(\phi\right).$$
Since $\partial_t, \Omega$ are Killing, it follows immediately that 
$$\sum_{l=0}^1 \int_{r_1^*}^{r_2^*} \left(\left(\partial_t^m\partial_{r^*}^l\phi\left(t_*\right)\right)^2+|\nabb^k\partial_{r^*}^l\phi\left(t_*\right)|^2\right) dAdr^* \leq C t_*^{-3+\delta} E_1\left(\partial_t^m\Omega^k\phi\right),$$ for any $k,m$.\\
Using the equation $\Box_g \phi=0$, we get 
$$\int_{r_1^*}^{r_2^*} \left(\nabla^l \phi\left(t_*\right)\right)^2 dVol \leq C t_*^{-3+\delta} \sum_{k+m \leq l} E_1\left(\partial_t^m\Omega^k\phi\right).$$
The pointwise decay part of Main Theorem 1 follows form the standard Sobolev Embedding Theorem and Proposition 19.
\subsection{Improved Decay for $\partial_t\phi$}
To estimate the time derivatives of $\phi$, we follow an idea of Klainerman-Sideris \cite{KS}. The key observation is that the first derivatives of $\partial_t\phi$ are controlled with a weight of $\frac{1}{t-r^*}$ by a linear combination of first derivatives of $\phi$ and $\psi$. This extra weight would give extra decay to $\partial_t\phi$.\\
\begin{proposition}
Suppose $t+r^* \geq \max\{\frac{t}{2},\frac{|r^*|}{2}\}$. (This is true for example when $r^*$ is bounded below and $t$ is sufficiently large.)
\begin{enumerate}
\item $|\left(t-r^*\right)\partial_t^2\phi| \leq C\left(|\partial_t\psi|+|\partial_{r^*}\psi|+|\partial_t\phi|+|\partial_{r^*}\phi|+\left(1-\mu \right)|r^*||\lapp\phi|\right),$
\item $|\left(t-r^*\right)\partial_{r^*}\partial_t\phi| \leq C\left(|\partial_t\psi|+|\partial_{r^*}\psi|+|\partial_t\phi|+|\partial_{r^*}\phi|+\left(1-\mu \right)|r^*||\lapp\phi|\right),$
\item $|t\left(1-\mu\right)\nabb\partial_t\phi|\leq C(\left(1-\mu \right)|\nabb\psi|+|\partial_{r^*}\Omega\phi|.$
\end{enumerate}
\end{proposition}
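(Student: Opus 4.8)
The plan is to obtain each inequality from an exact algebraic identity expressing a second derivative of $\phi$ in terms of first derivatives of $\psi=S\phi$ and of $\phi$, in the spirit of Klainerman--Sideris, using the wave equation to eliminate the unwanted second-order term. First I would record $\Box_g\phi=0$ in the form
$$\partial_t^2\phi=\partial_{r^*}^2\phi+\frac{2(1-\mu)}{r}\partial_{r^*}\phi+(1-\mu)\lapp\phi,$$
which follows by expanding the expression for $\Box_g$ and using $\partial_{r^*}r=1-\mu$, and then compute
$$\partial_t\psi=\partial_t\phi+t\partial_t^2\phi+r^*\partial_{r^*}\partial_t\phi,\qquad \partial_{r^*}\psi=t\partial_{r^*}\partial_t\phi+\partial_{r^*}\phi+r^*\partial_{r^*}^2\phi.$$

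For part 1 I would form $t\partial_t\psi-r^*\partial_{r^*}\psi$; the mixed second derivatives cancel, leaving $t\partial_t\phi-r^*\partial_{r^*}\phi+t^2\partial_t^2\phi-(r^*)^2\partial_{r^*}^2\phi$. Substituting the wave equation for $\partial_{r^*}^2\phi$ converts the last two terms into $(t-r^*)(t+r^*)\partial_t^2\phi$ together with lower-order pieces carrying the weights $(r^*)^2\frac{1-\mu}{r}$ and $(r^*)^2(1-\mu)$. The hypothesis $t+r^*\ge\max\{t/2,|r^*|/2\}$ guarantees $t+r^*>0$, so dividing by it is legitimate, and it gives $\frac{t}{t+r^*}\le 2$ and $\frac{|r^*|}{t+r^*}\le 2$; after using these bounds one is left with the stated right-hand side, the $\lapp$ term keeping its weight $(1-\mu)|r^*|$. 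Part 2 is structurally identical: I would instead take $r^*\partial_t\psi-t\partial_{r^*}\psi$, use $\partial_t^2\phi-\partial_{r^*}^2\phi=\frac{2(1-\mu)}{r}\partial_{r^*}\phi+(1-\mu)\lapp\phi$ to produce $-(t-r^*)(t+r^*)\partial_{r^*}\partial_t\phi$, and divide by $t+r^*$ once more.

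For part 3 the key point is that $S$ commutes with the coordinate angular derivatives $\partial_A$, since its coefficients $t,r^*$ are angle-independent; hence $\partial_A\psi=S\partial_A\phi=t\partial_t\partial_A\phi+r^*\partial_{r^*}\partial_A\phi$ with no error terms and no need for the hypothesis. Solving for $t\partial_t\partial_A\phi$, measuring each side in the round metric on the sphere of radius $r$ (under which $|\partial_\bullet\partial_t\phi|=|\nabb\partial_t\phi|$ and $|\partial_\bullet\partial_{r^*}\phi|=|\nabb\partial_{r^*}\phi|$), and multiplying by $(1-\mu)$ yields
$$t(1-\mu)|\nabb\partial_t\phi|\le (1-\mu)|\nabb\psi|+|r^*|(1-\mu)|\nabb\partial_{r^*}\phi|.$$
I would then rewrite $|\nabb\partial_{r^*}\phi|=\frac1r|\partial_{r^*}\Omega\phi|$ using $\Omega\sim r\nabb$ and the commutation of $\Omega$ with $\partial_{r^*}$, turning the last term into $\frac{|r^*|(1-\mu)}{r}|\partial_{r^*}\Omega\phi|$.

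The single quantitative input underlying all three parts — and the only step needing more than algebra — is the uniform boundedness of $\frac{|r^*|(1-\mu)}{r}$ on the whole range $r\ge 2M$: this is what collapses the naively $|r^*|$-weighted $\partial_{r^*}\phi$ and $\partial_{r^*}\Omega\phi$ contributions into the unweighted terms on the right-hand sides. For large $r$ one has $r^*\sim r$ and $1-\mu\le 1$, so the ratio is $O(1)$; near the horizon $|r^*|$ grows only linearly while $1-\mu\le Ce^{cr^*}$ decays exponentially (the bound already invoked in Section 5), so the product tends to $0$. I expect verifying this bound, and bookkeeping which weights survive the division by $t+r^*$, to be the main (though modest) obstacle.
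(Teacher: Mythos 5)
Your proposal is correct and follows essentially the same route as the paper: cancel the mixed second derivative by taking the combination $t\partial_t\psi-r^*\partial_{r^*}\psi$, trade $\partial_{r^*}^2\phi$ for $\partial_t^2\phi$ via the wave equation, divide by $t+r^*$ using the hypothesis, and absorb the surviving weights via the boundedness of $\frac{|r^*|\left(1-\mu\right)}{r}$; part 3 via $\Omega\psi=t\partial_t\Omega\phi+r^*\partial_{r^*}\Omega\phi$ is likewise the paper's argument. The only (immaterial) deviation is that for part 2 you take the combination $r^*\partial_t\psi-t\partial_{r^*}\psi$ directly, whereas the paper deduces part 2 from part 1.
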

\begin{proof}
Define $\Delta_g\phi=\left(1-\mu \right)^{-1}\partial_{r^*}\phi+\frac{2}{r}\partial_{r^*}\phi+\lapp\phi$. Then $\Box_g\phi=0$ reads $\left(1-\mu \right)\partial_t^2\phi=\Delta_g\phi$.\\
Recall that 
$$\psi=t\partial_t\phi+r^*\partial_{r^*}\phi.$$
Therefore,
$$\partial_t\psi-\partial_t\phi=t\partial_t^2\phi+r^*\partial_{r^*}\partial_t\phi,$$
$$\partial_{r^*}\psi-\partial_{r^*}\phi=r^*\partial_{r^*}^2\phi+t\partial_{r^*}\partial_t\phi.$$
Hence,
\begin{equation*}
\begin{split}
&t\left(\partial_t\psi-\partial_t\phi\right)-r^*\left(\partial_{r^*}\psi-\partial_{r^*}\phi \right)\\
=&t^2\partial_t^2\phi- \left(r^* \right)^2\partial_{r^*}^2\phi\\
=&\left( t^2-\left( r^*\right)^2\right)\partial_t^2\phi+ \left( r^*\right)^2\left( \left( 1-\mu \right)\Delta_g\phi-\partial_{r^*}^2\phi \right)\\
=&\left(t^2-\left( r^*\right)^2\right)\partial_t^2\phi+ \left( r^*\right)^2\left( \frac{2\left( 1-\mu \right)}{r}\partial_{r^*}\phi+ \left( 1-\mu \right)\lapp\phi \right).
\end{split}
\end{equation*}
Therefore, by re-arranging and dividing by $\left(t+r^*\right)$,
\begin{equation*}
\begin{split}
&|\left(t-r^*\right)\partial_t^2\phi|\\
=&|\frac{1}{t+r^*}(t\left(\partial_t\psi-\partial_t\phi \right)-r^* \left(\partial_{r^*}\psi-\partial_{r^*}\phi \right)-\left(r^*\right)^2\left(\frac{2\left(1-\mu \right)}{r}\partial_{r^*}\phi+ \left(1-\mu \right)\lapp\phi )\right)|\\
\leq &C\left(|\partial_t\psi|+|\partial_{r^*}\psi|+|\partial_t\phi|+|\partial_{r^*}\phi|+\left(1-\mu \right)|r^*||\lapp\phi|\right).
\end{split}
\end{equation*}
We have thus proved 1.
On the other hand, using again the above equality, we also have 
\begin{equation*}
\begin{split}
&\left( t-r^*\right)\partial_{r^*}\partial_t\phi\\
=&-\partial_t\psi+\partial_{r^*}\psi+\partial_t\phi-\partial_{r^*}\phi+t\partial_t^2\phi-r^*\partial_{r^*}^2\phi\\
=&-\partial_t\psi+\partial_{r^*}\psi+\partial_t\phi-\partial_{r^*}\phi+\left( t-r^* \right)\partial_t^2\phi+r^* \left(\left( 1-\mu \right)\Delta_g\phi-\partial_{r^*}^2\phi \right)\\
=&-\partial_t\psi+\partial_{r^*}\psi+\partial_t\phi-\partial_{r^*}\phi+\left( t-r^* \right)\partial_t^2\phi+\frac{2r^*\left( 1-\mu \right)}{r}\partial_{r^*}\phi+ \left( 1-\mu \right)|r^*|\lapp\phi.\\
\end{split}
\end{equation*}
This, together with 1, implies 2.\\
The proof of 3 is more direct. Using the definition of $S$, and that $\Omega$ is independent of $t$ and $r^*$,
$$\Omega\psi=r^*\partial_{r^*}\Omega\phi+t\partial_t\Omega\phi.$$
Thus, by noting that $\Omega$ and $r\nabb$ differ only by constant,
\begin{equation*}
\begin{split}
&|t\left( 1-\mu\right)\nabb\partial_t\phi|\\
\leq &|\left( 1-\mu \right)\nabb\psi|+| \left( 1-\mu \right)\frac{r^*}{r}\partial_{r^*}\Omega\phi|\\
\leq &C\left(\left( 1-\mu \right)|\nabb\psi|+|\partial_{r^*}\Omega\phi| \right).
\end{split}
\end{equation*}
\end{proof}
\begin{corollary}
$$\int_{\tilde{r}^*}^{ct_*} J^T_\mu \left(\partial_t\phi\right)n^\mu_{t_*} dVol_{t_*}\leq Ct_*^{-4+\delta}\left(\sum_{m=0}^1 \sum_{k=0}^1 E_0\left(\partial_t^m\Omega^k\phi\right)+E_1\left(\phi\right)\right), $$
for all $c <1$ and $\tilde{r}$. In particular, $\tilde{r}$ can be chosen as that given by Proposition 19.
\end{corollary}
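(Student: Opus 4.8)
The plan is to run the Klainerman--Sideris mechanism of Proposition 23 on the single slice $t=t_*$. The decisive observation is that on $\tilde r^*\le r^*\le ct_*$ the weight $t_*-r^*$ is bounded below by $(1-c)t_*$, so inverting Proposition 23 trades one power of $t_*^{-1}$ for each second derivative of $\phi$ that enters $J^T_\mu(\partial_t\phi)n^\mu_{t_*}$. First I would verify the hypothesis of Proposition 23 on this region for $t_*$ large: since $r^*\ge\tilde r^*$ is bounded below and $r^*\le ct_*$, one has $t_*+r^*\ge\max\{t_*/2,|r^*|/2\}$ once $t_*\ge 2|\tilde r^*|$. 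I would also note that, $\tilde r$ being fixed, $1-\mu$ is bounded above and below by positive constants throughout the region, so every factor $(1-\mu)^{\pm1}$ is harmless and only affects the constant $C=C(\tilde r,c,\delta)$.

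Expanding $J^T_\mu(\partial_t\phi)n^\mu_{t_*}$ by Proposition 3.1 into $(\partial_t^2\phi)^2+(\partial_{r^*}\partial_t\phi)^2+(1-\mu)|\nabb\partial_t\phi|^2$ (times $(1-\mu)^{-1/2}$) and inserting the three inequalities of Proposition 23, each divided by $(t_*-r^*)\ge(1-c)t_*$, yields a prefactor $t_*^{-2}$ in front of integrals of $J^T_\mu(\psi)n^\mu_{t_*}$, $J^T_\mu(\phi)n^\mu_{t_*}$, $J^T_\mu(\Omega\phi)n^\mu_{t_*}$, and the angular term $(1-\mu)^2(r^*)^2|\lapp\phi|^2$. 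Each of these remaining integrals carries a \emph{further} factor $t_*^{-2}$ because $u=\tfrac12(t_*-r^*)$ and $v=\tfrac12(t_*+r^*)$ are both $\gtrsim t_*$ on the region: for the $\psi$ piece I would use the conformal control of Proposition 3.2 applied to $\psi$ together with the newly established bound $\int J^{Z,w^Z}_\mu(\psi)n^\mu_{t_*}dVol_{t_*}\le CE_1(\phi)t_*^{\delta}$ of Proposition 16, dividing by $u^2+v^2\gtrsim t_*^2$ to get $\int_{\tilde r^*}^{ct_*}J^T_\mu(\psi)n^\mu_{t_*}\,dVol_{t_*}\le Ct_*^{-2+\delta}E_1(\phi)$; for the $\phi$ and $\Omega\phi$ pieces the same division together with Corollary 9.2 gives $\int_{\tilde r^*}^{ct_*}J^T_\mu(\Omega^k\phi)n^\mu_{t_*}\,dVol_{t_*}\le Ct_*^{-2}E_0(\Omega^k\phi)$. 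Combined with the outer prefactor $t_*^{-2}$, these already produce the claimed rate $t_*^{-4+\delta}$.

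The one piece needing extra care is the angular contribution $(1-\mu)|r^*||\lapp\phi|$ arising from Proposition 23.1--23.2. Writing $\lapp=\tfrac1{r^2}\sum_i\Omega_i^2$ and integrating against $dVol_{t_*}=r^2\sqrt{1-\mu}\,dA\,dr^*$, this term reduces to
\[
\int(1-\mu)^2\frac{(r^*)^2}{r^2}\,|\Omega^2\phi|^2\,dA\,dr^*,
\]
and since $\tfrac{1-\mu}{r^2}\le\tfrac1{(2M)^2}$ and $(r^*)^2\le(r^*)^2+t_*^2$ it is bounded by $C\int(1-\mu)\big((r^*)^2+t_*^2\big)|\Omega^2\phi|^2\,dA\,dr^*$, which is exactly the quantity controlled by Corollary 9.3 applied to $\Omega^2\phi$ (legitimate because $\Omega$ is Killing), giving $\le CE_0(\Omega^2\phi)t_*^{-2}$. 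With the overall $t_*^{-2}$ this contributes $Ct_*^{-4}E_0(\Omega^2\phi)$, and $E_0(\Omega^2\phi)\le E_1(\phi)$. Collecting all pieces and using $E_0(\phi),E_0(\Omega\phi)\le\sum_{m,k\le1}E_0(\partial_t^m\Omega^k\phi)$ gives the stated bound.

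The genuinely delicate point is bookkeeping the two \emph{independent} gains of $t_*^{-2}$: one from the Klainerman--Sideris weight $(t_*-r^*)^{-1}$ and one from the conformal weights $u,v\gtrsim t_*$, which must combine to $t_*^{-4}$ without being double-counted, while the conformal energy of $\psi$ is allowed to grow like $t_*^{\delta}$ (hence the $\delta$ loss). The subtlest technical step is the second-order angular term, which forces the argument up to $\Omega^2\phi$ and can only be absorbed through the weighted-$L^2$ conformal estimate of Corollary 9.3 rather than an energy flux—this is precisely why the many angular derivatives hidden in $E_1(\phi)$ are needed on the right-hand side.
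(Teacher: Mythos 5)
Your proposal is correct and is essentially the paper's own argument: Proposition 22 applied on the slice $\{t=t_*\}$ supplies the prefactor $t_*^{-2}$, and the remaining integrals of $J^T_\mu(\psi)n^\mu_{t_*}$, $J^T_\mu(\phi)n^\mu_{t_*}$, $J^T_\mu(\Omega\phi)n^\mu_{t_*}$ over $\{\tilde r^*\le r^*\le ct_*\}$ are bounded by $t_*^{-2+\delta}E_1(\phi)$ and $t_*^{-2}E_0(\Omega^k\phi)$ exactly as you describe — the paper simply cites Theorems 5 and 10 for this second gain, whose content is the division by $u^2+v^2\gtrsim t_*^2$ that you carry out explicitly via Proposition 3.2, Theorem 5.3 and Proposition 16 (and which is in fact the cleaner way to cover the whole range $c<1$ rather than only $c\le\tfrac12$). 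The one point of divergence is the term $(1-\mu)|r^*||\lapp\phi|$: the paper absorbs it as a first-order quantity $|\nabb\Omega\phi|$ into the flux $J^T_\mu(\Omega\phi)n^\mu$ (whence the energies $E_0(\partial_t^m\Omega^k\phi)$, $k,m\le1$, on the right-hand side), whereas you keep it as the zeroth-order quantity $\Omega^2\phi$ and invoke the weighted $L^2$ decay of Corollary 9.3 applied to $\Omega^2\phi$, paying $E_0(\Omega^2\phi)\le E_1(\phi)$; both variants land inside the stated right-hand side, so your proof is complete as written.
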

\begin{proof}
We can consider $t_*$ large enough so that firstly, the assumption of of Proposition 22 holds and secondly, on the domain of integration, $\left(t_*-r^*\right) \sim t_*$.
\begin{equation*}
\begin{split}
&\int_{\tilde{r}^*}^{ct_*} J^T_\mu \left( \partial_t\phi\right)n^\mu_{t_*} dVol_{t_*}\\
=& \int_{\tilde{r}^*}^{ct_*} \left(\left(\partial_{r^*}\partial_t\phi \right)^2 +\left(\partial_t^2\phi\right)^2+\left(1-\mu \right)|\nabb\partial_t\phi|^2 \right) dVol_{t_*}\\
\leq & Ct_*^{-2}\int_{\tilde{r}^*}^{ct_*} \left(\left( \partial_t\psi \right)^2+ \left( \partial_{r^*}\psi \right)^2 +\left( 1-\mu \right)|\nabb\psi|^2 +\left( \partial_t\phi \right)^2+\left( \partial_{r^*}\phi \right)^2 +\left( 1-\mu \right)|\nabb\Omega\partial_t\phi|^2 +\left( \partial_{r^*}\Omega\phi \right)^2\right) dVol_{t_*}\\
\leq &Ct_*^{-2}\int_{\tilde{r}^*}^{ct_*} J^T\left( \psi \right)+J^T\left( \phi \right)+J^T\left( \Omega\phi \right)+J^T\left( \partial_t\Omega\phi\right) dVol_{t_*}.
\end{split}
\end{equation*}
The corollary follows from Theorem 5 and 10.
\end{proof}
\begin{corollary}
$$|\partial_t\phi\left(v^*\right)|^2\leq Cv_*^{-4+\delta}\left(\sum_{m=0}^2 \sum_{k=0}^{4-m} E_0\left(\partial_t^m \Omega^k\phi\right)+  \sum_{m=0}^{2} \sum_{k=0}^{2-m} E_1\left(\partial_t^m\Omega^k\phi\right)\right),$$ if $r^*\leq  \frac{t_*}{2}$.
\end{corollary}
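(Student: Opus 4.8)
The plan is to combine the Klainerman--Sideris type energy decay for $\partial_t\phi$ (Corollary 23) with Sobolev embedding in the ``bulk'' region away from the horizon, and to propagate the resulting decay onto the horizon by applying Proposition 19 with $\phi$ replaced by $\partial_t\phi$. Throughout I fix the compact interval $[\tilde r^*, 2((1.2)r_0)^*-r_0^*]$ from Proposition 19, and note that on $\{r^*\leq\frac{t_*}{2}\}$ the advanced time obeys $v_*=\frac12(t_*+r^*)\sim t_*$ once $r^*\geq\tilde r^*$, so that a $t_*^{-4+\delta}$ bound on a constant-$t$ slice is exactly the asserted $v_*^{-4+\delta}$ bound.

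First I would supply the two inputs that Proposition 19 requires for $\partial_t\phi$ with $\alpha=4-\delta$. The energy inputs $\int J^T_\mu(\partial_t^m\Omega^k\partial_t\phi)n^\mu_t\,dVol_t\leq Ct^{-4+\delta}(\cdots)$ on the compact interval follow from Corollary 23 commuted with the Killing fields $\partial_t^m\Omega^k$; here $[\Omega,S]=0$ and $[\partial_t,S]=-\partial_t$, so $S(\partial_t^m\Omega^k\phi)$ is a controlled combination of $\partial_t\psi$ and lower order terms, which is precisely what generates the energy classes on the right-hand side of the Corollary. The genuinely new ingredient is the zeroth-order bound $\int_{\mathrm{cpt}}(\partial_t\phi)^2\,dVol_t\leq Ct^{-4+\delta}(\cdots)$: applying Main Theorem 1 directly to $\partial_t\phi$ yields only $t^{-3+\delta}$, so one extra power must be recovered. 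I would use the scaling identity $t\,\partial_t\phi=\psi-r^*\partial_{r^*}\phi$ coming from $\psi=t\partial_t\phi+r^*\partial_{r^*}\phi$, which gives on the compact $r^*$-set
$$\int_{\mathrm{cpt}}(\partial_t\phi)^2\,dVol_t\leq \frac{C}{t^2}\left(\int_{\mathrm{cpt}}\psi^2\,dVol_t+\int_{\mathrm{cpt}}(\partial_{r^*}\phi)^2\,dVol_t\right).$$
The first term is handled by the conformal energy of $\psi$: Proposition 16 together with the coercivity of Proposition 3.3 applied to $\psi$ give $\int_{\mathrm{cpt}}\frac{(r^*)^2+t^2}{r^2}\psi^2\leq Ct^\delta E_1(\phi)$, hence $\int_{\mathrm{cpt}}\psi^2\leq Ct^{-2+\delta}E_1(\phi)$. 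The second term is controlled analogously by the bounded conformal energy of $\phi$ (Corollary 9.2), giving $\int_{\mathrm{cpt}}(\partial_{r^*}\phi)^2\leq Ct^{-2}E_0(\phi)$. Combining, $\int_{\mathrm{cpt}}(\partial_t\phi)^2\leq Ct^{-4+\delta}(\cdots)$.

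With both hypotheses in hand, Proposition 19 applied to $\partial_t\phi$ (with $\alpha=4-\delta$) yields $|\partial_t\phi(v_*,r)|^2\leq Cv_*^{-4+\delta}(\cdots)$ for $r\leq\tilde r$, i.e.\ along and near the event horizon. For the complementary bulk $\tilde r^*\leq r^*\leq\frac{t_*}{2}$ I would run the Klainerman--Sideris scheme directly: cut off $\partial_t\phi$ by a function $\zeta$ equal to $1$ on $\{\tilde r^*\leq r^*\leq\frac{t_*}{2}\}$ and supported in $\{\tilde r^*-1\leq r^*\leq\frac{3}{4}t_*\}$ (recall $\phi$ is taken compactly supported at spatial infinity), and apply the Sobolev inequality $\|\zeta\partial_t\phi\|_{L^\infty}\leq C\|\zeta\partial_t\phi\|_{\dot H^1}^{1/2}\|\zeta\partial_t\phi\|_{\dot H^2}^{1/2}$ quoted in the introduction. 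The $\dot H^1$ and $\dot H^2$ norms of $\partial_t\phi$ itself are bounded by $t_*^{-4+\delta}$ using Corollary 23 and its commutations with $\partial_t,\Omega$, converting the second $r^*$-derivatives into $\partial_t^2$, angular and first-order terms via the relation $(1-\mu)\partial_t^2\phi=\Delta_g\phi$ from the proof of Proposition 22. The commutator terms in which a derivative falls on $\zeta$ are supported either in the near slab $\{\tilde r^*-1\leq r^*\leq\tilde r^*\}$, where the compact $L^2$ bounds of the previous paragraph apply, or in the far slab $\{r^*\sim\frac{t_*}{2}\}$, where $|\nabla\zeta|\lesssim t_*^{-1}$ multiplies $L^2$ quantities of $\partial_t\phi$ that are merely $O(t_*^{-2+\delta})$ by the conformal energies, and the two extra factors of $t_*^{-1}$ render them negligible. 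This gives $|\partial_t\phi|^2\leq Ct_*^{-4+\delta}(\cdots)\sim v_*^{-4+\delta}(\cdots)$ throughout the bulk, which together with the horizon estimate covers $\{r^*\leq\frac{t_*}{2}\}$ and completes the proof; the precise energy classes are read off by tracking the commutations and match those in the statement.

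The main obstacle is the zeroth-order $L^2$ gain: one must recognize that the naive estimate for $\partial_t\phi$ loses a power of $t$ and that the missing power is recovered \emph{only} through the scaling identity $t\partial_t\phi=\psi-r^*\partial_{r^*}\phi$, which trades a $\partial_t$ for the improved weights carried by the conformal energies of $\psi$ and $\phi$. A secondary, purely technical point is the far part of the bulk $r^*\sim\frac{t_*}{2}$, where $r$ is large and the $L^2$ mass of $\partial_t\phi$ does not decay; here one relies on the compact support of $\phi$ in $r^*$ and on the homogeneous form of the Sobolev inequality, so that this undecaying mass enters only against cutoff derivatives that are $O(t_*^{-1})$.
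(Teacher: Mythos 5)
Your proposal is correct and follows the same two--region architecture as the paper: in the bulk $\tilde r^*\leq r^*\leq \frac{t_*}{2}$ you cut off, apply the Gagliardo--Nirenberg type inequality $\|u\|_{L^\infty}\leq C\|u\|_{\dot H^1}^{1/2}\|u\|_{\dot H^2}^{1/2}$, feed in Corollary 23 (commuted with $\partial_t,\Omega$ and using the equation to convert $\partial_{r^*}^2\partial_t\phi$), and absorb the cutoff errors with the conformal-energy decay of Theorem 5.4; near the horizon you invoke Proposition 19 applied to $\partial_t\phi$ with $\alpha=4-\delta$. The one place where you genuinely deviate is the zeroth-order input to Proposition 19, namely $\int_{\mathrm{cpt}}(\partial_t\phi)^2\,dVol_t\leq Ct^{-4+\delta}(\cdots)$: the paper obtains this for free from the $L^\infty$ bound it has just proved on the compact set (pointwise bound squared times finite volume), whereas you derive it independently from the scaling identity $t\partial_t\phi=\psi-r^*\partial_{r^*}\phi$ together with the coercivity of the conformal current (Proposition 3.3) applied to $\psi$ via Theorem 10.2 and to $\phi$ via Corollary 9.2. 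Your route is valid and has the virtue of decoupling the horizon input from the Sobolev step, at the cost of being slightly redundant given that the $L^\infty$ bulk bound is needed anyway; the resulting energy classes agree with those in the statement in either case.
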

\begin{proof}
We prove a Sobolev-type inequality. We first work on $\mathbb R^3$. We claim that for $u\in C^\infty_c\left(\mathbb R^3\right)$, $$||u||_{L^\infty\left(\mathbb R^3\right)} \leq C ||u||^{\frac{1}{2}}_{\dot{H}^1\left(\mathbb R^3\right)}||u||^{\frac{1}{2}}_{\dot{H}^2\left(\mathbb R^3\right)}.$$
We give a simple proof using Littlewood-Paley theory. Let $N \in 2^{\mathbb Z}$ be a dyadic number, $\chi\left(\xi\right)$ be a radial cutoff function which is supported in $\{|\xi| < 2\}$ and is identically 1 in $\{|\xi| < 1\}$. Define the Littlewood-Paley operators $P_N$ by $\widehat{P_N u} =\left(\chi\left(\frac{\xi}{N}\right)-\chi\left(\frac{2\xi}{N}\right)\right)\hat{u}$.\\
Since the inequality claimed is invariant under scaling $u\left(x\right) \to \lambda u\left(x\right)$ and $u\left(x\right) \to u\left(\lambda x\right)$, we can assume that $||u||_{\dot{H}^1\left(\mathbb R^3\right)}=||u||_{\dot{H}^2\left(\mathbb R^3\right)}=1$.
Then, by Bernstein inequality,
$$||P_N u||_{L^2\left(\mathbb R^3\right)}\leq \min\{CN^{-1}, CN^{-2}\}.$$
Therefore, by Bernstein inequality again,
$$||u||_{L^\infty\left(\mathbb R^3\right)} \leq C \sum_N N^{\frac{3}{2}} ||P_N u||_{L^2\left(\mathbb R^3\right)}\leq C \left(\sum_{N\geq N_0} N^{-\frac{1}{2}}+\sum_{N < N_0} N^{\frac{1}{2}}\right)\leq C.$$
We note that a variant of this is true. We have for $u\in C^\infty_c\left(\mathbb R^3\right)$, 
$$||u||_{L^\infty\left(\mathbb R^3\setminus B_r\left(0\right)\right)} \leq C ||u||^{\frac{1}{2}}_{\dot{H}^1\left(\mathbb R^3\setminus B_r\left(0\right)\right)}||u||^{\frac{1}{2}}_{\dot{H}^2\left(\mathbb R^3\setminus B_r\left(0\right)\right)}.$$
This is true because one can extend $u$ into $B_r\left(0\right)$ without increasing the $\dot{H}^1$ or $\dot{H}^2$ norm.\\
We now apply this to a cutoff version of $\partial_t\phi$.\\
Let $\chi=\left\{\begin{array}{clcr}1&|x|\le 1\\0&|x|\ge 1.1\end{array}\right.$.
On $t=t_*$, let $\tilde{\phi}=\chi\left(\frac{r^*}{0.5t_*}\right)\phi$ for $r \geq \tilde{r}$, where $\tilde{r}$ is as in Proposition 19.\\
The $\dot{H}^1$ norm is controlled with Corollary 23 and Theorem 5.4.
\begin{equation*}
\begin{split}
||\partial_t\tilde{\phi}||_{\dot{H}^1\left( \mathbb R^3 \setminus B_{\tilde{r}}\left(0\right)\right)} \leq &C\int_{\tilde{r}^*}^{\frac{0.55t_*}{2}} J^T_\mu\left(\partial_t\phi\right) n^\mu_{t_*} dVol_{t_*}+Ct_*^{-2}\int_{\tilde{r}^*}^{\frac{0.55t_*}{2}}\left(\partial_t\phi\right)^2 r^2 \left(1-\mu \right)dAdr^*\\
\leq &Ct_*^{-4+\delta}\left( \sum_{m=0}^1\sum_{k=0}^1 E_0\left( \partial_t^m\Omega^k\phi\right)+E_1\left( \phi \right)\right).
\end{split}
\end{equation*}
The $\dot{H}^2$ norm can be controlled similarly once we note that using the equation, we have for $r \geq \tilde{r}$, 
\begin{equation*}
\begin{split}
|\partial_{r^*}^2\partial_t\phi| \leq &
C\left(|\partial_t^3\phi|+\frac{1}{r}|\partial_{r^*}\partial_t\phi|+|\nabb^2\partial_t\phi|\right)\\
\leq &C\left(|\partial_t^3\phi|+|\partial_{r^*}\partial_t\phi|+|\nabb\Omega\partial_t\phi|\right).
\end{split}
\end{equation*}
Therefore, for $t=t_*$,
\begin{equation*}
\begin{split}
&||\partial_t\tilde{\phi}||_{\dot{H}^2\left(\mathbb R^3 \setminus B_{\tilde{r}}\left(0\right)\right)}\\
\leq &C\int_{\tilde{r}^*}^{\frac{0.55t_*}{2}} \left( J^T_\mu \left( \partial_t^2\phi \right)+J^T_\mu \left( \partial_t\Omega\phi \right)+J^T_\mu \left( \partial_t\phi \right)\right) n^\mu_{t_*} dVol_{t_*}\\
&+Ct_*^{-2}\int_{\tilde{r}^*}^{\frac{0.55t_*}{2}}\left( \left( \partial_t^2\phi\right)^2+\left( \partial_t\Omega\phi\right)^2+\left( \partial_t\phi \right)^2\right) r^2 \left(1-\mu \right)dAdr^*\\
\leq &Ct_*^{-4+\delta}\left(\sum_{m=0}^2 \sum_{k=0}^2 E_0\left( \partial_t^m\Omega^k\phi\right)+\sum_{m=0}^1 \sum_{k=0}^1 E_1\left(\partial_t^m\Omega^k\phi \right)\right).
\end{split}
\end{equation*}
Therefore,
\begin{equation*}
\begin{split}
||\partial_t\phi||_{L^\infty\left( \{\tilde{r}^*\leq r^*\leq \frac{t_*}{2}\}\right)}\leq &||\partial_t\tilde{\phi}||_{L^\infty\left( \mathbb R^3\setminus B_{\tilde{r}}\left( 0\right)\right)}\\
\leq &C ||\partial_t\tilde{\phi}||^{\frac{1}{2}}_{\dot{H}^1\left(\mathbb R^3\setminus B_{\tilde{r}}\left( 0\right)\right)}||\partial_t\tilde{\phi}||^{\frac{1}{2}}_{\dot{H}^2\left(\mathbb R^3\setminus B_{\tilde{r}}\left(0\right)\right)}\\
\leq &Ct_*^{-4+\delta}\left( \sum_{m=0}^2\sum_{k=0}^2 E_0\left( \partial_t^m\Omega^k\phi\right)+\sum_{m=0}^1\sum_{k=0}^1  E_1\left( \partial_t^m\Omega^k\phi \right)\right).
\end{split}
\end{equation*}
In particular, for $t$ sufficiently large, this $L^\infty$ estimate holds on sets of compact $r^*$. Noting that the $L^\infty$ norm controls the $L^2$ norm on compact sets, we have
$$\int_{\tilde{r}}^{\left(\left(1.2\right)r_0\right)^*} \left(\partial_t\phi\right)^2 dVol_{t_*} \leq Ct_*^{-4+\delta}\left(\sum_{m=0}^2 \sum_{k=0}^2 E_0\left(\partial_t^m\Omega^k\phi\right)+\sum_{m=0}^1  \sum_{k=0}^1 E_1\left( \partial_t^m\Omega^k\phi \right)\right).$$
We also have, by Corollary 23,
\begin{equation*}
\begin{split}
&\sum_{m=0}^1 \sum_{k=0}^{3-m} \int_{\tilde{r}}^{\left(\left( 1.2\right) r_0\right)^*} J^T_\mu \left( \partial_t^m\Omega^k\left( \partial_t\phi \right)\right) n^\mu_t dVol_{t_*}\\
\leq &C t_*^{-4+\delta} \left( \sum_{m=0}^2 \sum_{k=0}^{4-m} E_0\left( \partial_t^m\Omega^k\phi \right)+ \sum_{m=0}^1 \sum_{k=0}^{4-m} E_1 \left( \partial_t^m\Omega^k\phi  \right) \right).
\end{split}
\end{equation*}
The corollary then follows from the Sobolev Embedding Theorem and Proposition 19.
\end{proof}

\section{Acknowledgments}
The author thanks his advisor Igor Rodnianski for suggesting the problem and for sharing numerous insights. He thanks Mihalis Dafermos, Gustav Holzegel and Igor Rodnianski for very helpful comments on preliminary versions of the manuscript.
\bibliographystyle{hplain}
\bibliography{Decay}
\end{document}